\newtheorem{assumption}{Assumption}[section]
\newtheorem{theorem}{Theorem}[section]
\newtheorem{lemma}{Lemma}[section]
\newtheorem{proposition}[theorem]{Proposition}
\theoremstyle{definition}
\newtheorem{remark}{Remark}[section]
\newcommand{\ind}{\perp\!\!\!\!\perp} 
\newcommand{\E}{\mathbb{E}}
\newcommand{\R}{\mathbb{R}}
\newcommand{\Var}{\mathrm{Var}}
\newcommand{\Cov}{\mathrm{Cov}}
\newcommand{\indicator}{\mathbbm{1}}
\DeclareMathAlphabet{\pazocal}{OMS}{zplm}{m}{n}
\newcommand{\abs}[1]{\left\lvert #1 \right\rvert}
\newcolumntype{L}[1]{>{\raggedright\let\newline\\arraybackslash\hspace{0pt}}m{#1}}
\newcolumntype{C}[1]{>{\centering\let\newline\\arraybackslash\hspace{0pt}}m{#1}}
\newcolumntype{R}[1]{>{\raggedleft\let\newline\\arraybackslash\hspace{0pt}}m{#1}}
\newmdenv[
  backgroundcolor=yellow!30,
  linecolor=yellow!80!black,
  skipabove=\topsep,
  skipbelow=\topsep
]{fixmeblock}
\newcommand{\HE}{\mathrm{HE}}
\newcommand{\HO}{\mathrm{HO}}
\newcommand{\CW}{\mathrm{CW}}
\newcommand{\FE}{\mathrm{FE}}
\definecolor{myBlue}{HTML}{2B5DA5} 
\definecolor{myRed}{HTML}{C23B22}    
\definecolor{myGreen}{HTML}{1F8F5F}  
\definecolor{myBrown}{HTML}{8C564B}
\definecolor{myPurple}{HTML}{9467BD}
\definecolor{myOrange}{HTML}{FF7F0E}
\definecolor{myGray}{HTML}{7F7F7F}
\newcolumntype{L}[1]{>{\footnotesize\raggedright\arraybackslash}p{#1}} 
\newcolumntype{Y}{>{\raggedright\arraybackslash}X}   
\newcolumntype{s}{>{\scriptsize\raggedright\arraybackslash}X}
\newcolumntype{S}{>{\tiny\raggedright\arraybackslash}X} 
\newcolumntype{Z}{>{\hsize=2\hsize \tiny\raggedright\arraybackslash}X}
\newcommand{\keywords}[1]{%
  \noindent\texttt{Key words:}\ \hangindent=1.8em #1
}
\begin{document}

\setlength{\droptitle}{-5em} 

\author{%
{ Tian Xie}\\
\\
{\normalsize Department of Economics, University College London }
}

\title{%
Automatic Inference for Value-Added Regressions\thanks{ \hyphenpenalty=10000\exhyphenpenalty=10000
  I thank Tim Christensen, Raffaella Giacomini, and Liyang Sun for continuous guidance and generous support throughout this project. I would also like to thank Luis Alvarez, Debopam Bhattacharya, Kirill Borusyak, Pedro Carneiro, Jiafeng Chen, Xiaohong Chen, Andrew Chesher, Max Cytrynbaum, Ben Deaner, Aureo de Paula, Hugo Freeman, Christophe Gaillac, Joao Granja, Sukjin Han, Stephen Hansen, Toru Kitagawa, Yuichi Kitamura, Roger Koenker, Dennis Kristensen, Soonwoo Kwon, Simon Lee, Attila Lindner, Oliver Linton, Eric Mbakop, Lars Nesheim, Alexei Onatski, Matthias Parey, Kirill Ponomarev, Franck Portier, Silvia Sarpietro, Michela Tincani, Gabriel Ulyssea, Edward Vytlacil, Guanyi Wang, Weining Wang, Ming Yang, Andrei Zeleneev, and participants of the IAAE Conference (2025), Bristol Econometric Study Group Conference (2025), 2nd UCL-CeMMAP Ph.D. Econometrics Research Day, Midwest Econometrics Group Conference (2025), 4th International Econometrics PhD Conference Erasmus University Rotterdam, $(\text{EC})^2$ Conference (2025), ES European Winter Meeting (2025), and seminars at UCL, Yale University, and University of Cambridge. All errors are my own. Email address: \href{mailto:tian.xie.20@ucl.ac.uk}{tian.xie.20@ucl.ac.uk}.
}
}

\date{\normalsize \today}

\maketitle

\begin{center}
  \vspace{-1em} 
  \large \href{https://tian-xie.com/papers/jmp.pdf}{\textnormal{Click here for the latest version}}
\end{center}

\begin{abstract}
\begin{singlespace}
    A large empirical literature regresses outcomes on empirical Bayes shrinkage estimates of value-added, yet little is known about whether this approach leads to unbiased estimates and valid inference for the downstream regression coefficients. We study a general class of empirical Bayes estimators and the properties of the resulting regression coefficients. We show that estimators can be asymptotically biased and inference can be invalid if the shrinkage estimator does not account for heteroskedasticity in the noise when estimating value added. By contrast, shrinkage estimators properly constructed to model this heteroskedasticity perform an automatic bias correction: the associated regression estimator is asymptotically unbiased, asymptotically normal, and efficient in the sense that it is asymptotically equivalent to regressing on the true (latent) value-added. Further, OLS standard errors from regressing on shrinkage estimates are consistent in this case. As such, efficient inference is easy for practitioners to implement: simply regress outcomes on shrinkage estimates of value-added that account for noise heteroskedasticity.
\end{singlespace}
\end{abstract}
\smallskip

\keywords{empirical Bayes, shrinkage estimators, teacher value-added, \\errors in variables.}

\smallskip

\noindent\texttt{JEL classification codes:} C12.

\newpage

\section{Introduction}Empirical Bayes shrinkage estimators are widely applied in settings with a large number of latent individual effects observed via noisy measurements. {Economic applications include the study of teachers' 
value-added to student test scores \citep[e.g.][]{chetty2014measuring,chetty2014measuring2}, location effects \citep[e.g.][]{chetty2018impacts2}, hospital quality \citep[e.g.][]{hull2018estimating}, 
nursing home quality \citep[e.g.][]{einav2025producing}, firm-level discrimination 
\citep[e.g.][]{kline2022systemic}, among others.} Shrinkage estimates are typically not the final object, but instead serve as inputs to downstream analyses. A typical downstream use is to regress other variables of economic interest on the estimated individual effects.

It is common for researchers to treat the shrinkage estimates as if they were the true individual effects when performing conventional inference in the downstream regression. This widespread empirical approach implicitly assumes the property of automatic inference, i.e., no adjustment is required to achieve nominal coverage rates of confidence intervals. However, empirical Bayes shrinkage estimates generally differ from the latent true effects, introducing measurement error. Moreover, there is also the potential for a generated regressor problem \citep{pagan1984econometric}, because the regressors are themselves generated in a first-stage estimation step. Thus, there may be threats to the validity of both the estimator (via measurement error bias), and inference (via incorrect standard errors).

In empirical applications, various linear shrinkage estimators have been used as regressors. A leading example is the literature studying how teachers' value-added for test scores impacts students' future outcomes. Data on students' test scores are aggregated into shrinkage estimates, which are constructed as a weighted average of each teacher's individual mean score and all teachers' pooled mean score. Most studies employ individualized shrinkage, where noisier estimates (e.g., from smaller classes) are shrunk more aggressively toward the pooled mean. {Empirical work using this approach includes \citet{jacob2007parents}, \citet{jacob2008can}, \citet{kane2008estimating}, \citet{chandra2016health}, \citet{jackson2018test}, \citet{abdulkadirouglu2020parents}, \citet{bau2020teacher}, \citet{biasi2022flexible}, \citet{warnick2024instructor}, \citet{andrabi2025heterogeneity}, and \citet{angelova2025algorithmic}.} Individualized shrinkage differs from equal shrinkage approaches, such as \citet{chetty2014measuring,chetty2014measuring2}, which apply a uniform shrinkage rule to all units. However, the theoretical impact of these different shrinkage schemes on downstream estimates and inference has not been studied to date.

This paper is the first to formally study the impact of different shrinkage schemes on the properties of downstream regressions, and to characterize when automatic inference does or does not obtain. We focus on linear shrinkage estimators, which are widely used in practice due to their simplicity and ease of implementation. We develop a general econometric framework for analyzing a broad class of individualized shrinkage estimators as regressors.  These estimators are weighted averages whose weights reflect signal-to-noise ratios estimated from the data. They can be classified as individual-weight (individualized shrinkage) or common-weight (equal shrinkage). Our central finding shows that automatic inference hinges on the treatment of heteroskedastic measurement error. In particular, failing to account for heteroskedasticity in the shrinkage weights leads to \emph{invalid} downstream inference, with asymptotically biased estimators and coverage rates from conventional confidence intervals below nominal coverage. By contrast, automatic inference obtains if the weights properly account for this heteroskedasticity. In this case, the conventional confidence intervals achieve nominal coverage without further correction, and the resulting downstream coefficient estimator is asymptotically equivalent to the infeasible OLS regression on the true latent effects. Thus, there is no efficiency loss from regressing on the estimated individual effects instead of the true individual effects.

These findings have important practical implications for current practice. Many empirical implementations model heteroskedastic measurement error solely by the number of measurements per individual, e.g., class sizes. Yet idiosyncratic noise variances may also differ across individuals. We show that accounting only for heterogeneity in the number of measurements can yield invalid inference if noise variances are correlated with the number of measurements. This is because the individualized adjustments systematically under- or over-correct depending on the degree of heterogeneity. This, in turn, introduces a non-classical measurement error, where the bias can lead to amplification rather than attenuation. Inference remains valid when the two sources of heterogeneity are independent. Properly accounting for heteroskedasticity in the shrinkage weights ensures robustness to such dependence and allows automatic inference for the downstream regression coefficient. 

Accordingly, our analysis provides straightforward practical guidance for implementation. In the teacher value-added example, the weights may vary across teachers and are determined by the estimated signal-to-noise ratio. Components of the ratio are estimated by within- and across-teacher variances, which properly account for heteroskedasticity in the measurement error. In predicting long-run outcomes as a function of latent value-added, we simply regress the outcomes on these shrinkage estimates. From the reported results, conventional OLS standard errors and confidence intervals are valid for inference.

Our results do not impose distributional assumptions and have a clear intuition. Linear shrinkage estimators are the empirical Bayes posterior mean under normality of both the measurement error and the latent individual effect \citep{efron1973stein, morris1983parametric}. However, while the functional form of shrinkage is derived under normality, we study regressions in a semiparametric context and none of our results rely on normality. The overall intuition is as follows. In the presence of noise, it is well known that regressions on unshrunk individual means suffer from a classical measurement error problem. Shrinkage estimators reduce the variability of the estimated individual effects and thereby offset the variance inflation that comes from noise. What is not clear is whether they offset it by just the right amount, so that downstream regression estimators are asymptotically unbiased and downstream inference is valid. We show that for some scenarios this is the case, while for others it is not.

To formalize our analysis, we first restate the established baseline result that the unshrunk individual mean suffers from classical errors-in-variables, resulting in attenuation bias and invalid inference when used as a downstream regressor. We then show our main results on individual-weight shrinkage. First, we outline the precise conditions under which inference fails, even when heterogeneity in the number of measurements is accounted for. Second, we establish how a properly constructed, fully individualized shrinkage estimator achieves asymptotic unbiasedness, inferential validity, and asymptotic efficiency in the downstream regression. Finally, we analyze the common-weight estimator as a point of comparison. Because of its close connection to conventional bias-correction methods, this estimator also addresses the inference problem. Nevertheless, we focus our attention on the individual-weight approach because it is more used in contemporary work and no theoretical results currently exist to justify its use in downstream regressions. While our focus has been on individual-weight shrinkage that assumes (as is standard) independence between individual effects and variance of measurement error, our results can be generalized to relax this assumption. Recent work has derived empirical Bayes estimators allowing such dependence \citep{chen2023mean}, but has not used them in downstream regressions. In on-going work, we extend our analysis to this new class of estimators and show that they also yield automatic inference for downstream regressions.

We conduct Monte Carlo simulations to demonstrate the finite-sample performance of the different shrinkage estimators when used as regressors. The results confirm our asymptotic theory. The unshrunk individual means yield substantial attenuation bias and under-coverage of confidence intervals. Individual-weight shrinkage that accounts only for heterogeneity in the number of measurements can also lead to bias and under-coverage when noise variances are correlated with the number of measurements. In contrast, heteroskedastic individual-weight shrinkage that accounts for both sources of heterogeneity yields valid coverage, and achieves very close performance to regressing on the true latent individual effects. Common-weight shrinkage also performs well, but is outperformed by heteroskedastic individual-weight shrinkage in finite samples when the sample size is moderate. 

We illustrate our results using two empirical applications. The first revisits the firm discrimination data of \citet{kline2022systemic}, while the second is based on the school value-added data of \citet{andrabi2025heterogeneity}. In the first application, we consider a regression of future discrimination levels on estimated discrimination at the firm level. We find that individual-weight heteroskedastic shrinkage yields confidence intervals more strongly supportive of a positive predictive effect. In the second application, we regress private school fees on school value-added estimates. We find that individual-weight heteroskedastic shrinkage produces slightly different confidence intervals from those reported in the original study, and it reaffirms the positive effect of school value-added on private school fees.

\paragraph{Related Literature} 
This paper relates to an older literature that studies the use of common-weight shrinkage estimators as downstream regressors. Regressing on common-weight shrinkage estimators can be traced back to \citet{whittemore1989errors}, who demonstrates via simulations that common-weight shrinkage yields consistent downstream coefficients. That paper also points out informally its equivalence to bias-correction methods. Building on this insight, \citet{guo2012mean} provide theoretical justifications regarding the quadratic risk reduction of such estimators for downstream coefficients. \citet{chetty2014measuring,chetty2014measuring2} construct common-weight shrinkage estimators based on the best linear predictor, emphasizing that the estimator is bias-free when used as the regressor. Their estimator can be viewed as equivalent to IV methods, and we unify this IV perspective along with the bias-correction perspective of common-weight shrinkage in \Cref{sec:CW}. \citet{deeb2021framework} further exploits the equivalence to IV methods and develops inference results for \citet{chetty2014measuring,chetty2014measuring2}'s estimator, highlighting the need to adjust conventional OLS standard errors to account for errors in estimation of individual effects and nuisance parameters. While this line of work clarifies inference in the common-weight shrinkage setting, it essentially relies on the equivalence to IV and does not extend to individual-weight shrinkage. These results are not applicable to the analysis of individual-weight shrinkage, which we study.

We formalize and extend prior work on individual-weight shrinkage. Consistency of regression estimators based on individual-weight shrinkage has been briefly discussed in \citet{jacob2007parents}, \citet{kane2008estimating}, \cite{abdulkadirouglu2020parents}, \citet{walters2024empirical}, and \citet{andrabi2025heterogeneity}. In particular, \citet{walters2024empirical} highlights independence between individual effects and the variance of the noise as a necessary condition for using widespread individualized shrinkage strategies. Beyond such consistency results, there has been no work studying the consequences for downstream inference of using individual-weight shrinkage estimators as regressors. We show that valid inference depends critically on how heteroskedasticity is treated in the shrinkage weights. We establish asymptotic unbiasedness, asymptotic normality, and consistency of standard errors for the downstream regression estimator based on properly constructed individual-weight shrinkage.

This paper is also related to non-shrinkage options for various downstream models. \citet{Chang2024PostEB} leverage information from the estimated prior to develop correction methods for nonlinear downstream models. For regressions, this literature strand is largely concerned with bias-correction. As noted from the equivalence, discussions from this category also apply to common-weight shrinkage. \citet{kline2020leave} and \citet{bonhomme2024estimating} emphasize correct estimation of moments for value-added that solves the errors-in-variables problem. \citet{chen2025regression} show that standard bias-correction methods for the downstream regression remain consistent even if the individual effects and measurement error are correlated. We contribute to this literature by establishing conditions under which conventional shrinkage estimators yield valid inference for downstream regressions when individual effects and noise are independent. In ongoing work, we extend the analysis to this correlated case.

\paragraph{Outline}

This paper proceeds as follows. 
\Cref{sec:setup_J} introduces the framework and its implications. 
\Cref{sec:setup} discusses a broad class of shrinkage estimators, establishes their asymptotic normality and the consistency of standard errors, and discusses the validity of standard inference approaches for the downstream regression. It also provides some practical implementation guidelines.
\Cref{sec:sim} reports Monte Carlo simulations illustrating the finite-sample properties 
of the estimators. These results reinforce our main findings that properly constructed individualized shrinkage estimators yield valid inference, while improperly constructed ones do not.
\Cref{sec:empirical} revisits the study of \citet{kline2022systemic} on labor market discrimination. \Cref{sec:school} revisits the study of \citet{andrabi2025heterogeneity} on school value-added. Finally, \Cref{sec:conclusion} concludes. An appendix contains the additional lemmas, proofs, and extensions.

\section{Setup} \label{sec:setup_J}

Each unit \(i\) is associated with a latent individual effect $\theta_i$, which we will refer to in what follows as value-added. 
Let  $\bm\theta \coloneqq (\theta_i)_{i=1}^n$ denote the vector of these latent effects for the \(n\) sampled units. A large body of research studies the relationship between value-added $\theta_i$ and some outcome of economic interest $Y_i$. To fix ideas, consider the example of \citet{kane2008estimating}. In that paper, $\theta_i$ is teacher $i$'s latent value-added for students' test scores, and $Y_i$ is the average long-term test score outcome of students taught by teacher $i$. The relationship is studied using the regression model
\begin{align}
  Y_i = \alpha + \beta \theta_i + u_i, \label{eq:downstream}
\end{align}
where the coefficient $\beta$ captures the downstream effect. It is assumed that the error term $u_i$ has mean zero and is uncorrelated with $\theta_i$, $\E\left[ u_i \theta_i\right] = 0$. Research questions about how teacher value-added relates to long term outcomes such as college attendance or earnings can be addressed by performing inference on $\beta$.

Inference on $\beta$ faces the challenge that $\bm \theta$ is unobserved, so the parameter of interest $\beta$ cannot be estimated by regressing $Y_i$ on $\theta_i$. Instead, for each unit $i$ we only have data $\left(Y_i, X_i\right) \coloneqq \left( Y_i, X_{i,1},...,X_{i,J_i}\right)$, $i=1,...,n$, where $Y_i$ is the outcome, $X_{i,1}, \ldots, X_{i, J_i}$ are repeated measurements of $\theta_i$, and $J_i$ is the number of observations available for estimating $\theta_i$. Specifically, $X_i \in \R^{J_i}$ are noisy repeated measurements for $\theta_i$ from the model
\begin{align}
  X_{i,j} = \theta_i + \epsilon_{i,j}, \label{eq:measure}
\end{align}
where $\epsilon_{i,j}$ is the noise term. In the example, $X_{i,j}$ is the test score of student $j$ taught by teacher $i$. The individual mean score for teacher $i$ is denoted by $\bar X_i$, representing the average test score within teacher $i$'s students. The pooled mean score is denoted by $\bar X$, representing the grand average across all teachers and students. In this paper, we primarily focus on the case where the noise is independent of the value-added, i.e., $\epsilon_{i,j} \ind \theta_i$.

\section{Shrinkage Estimators as Regressors} \label{sec:setup} 

In this section, we analyze the inferential properties of different shrinkage estimators when used as downstream regressors. We follow the standard two-step workflow in empirical practice: Step~1: Estimate value-added $\theta_i$ with a shrinkage estimator $\hat \theta_i$; Step~2: Regress $Y_i$ on $\hat\theta_i$ and report conventional OLS estimates, standard errors and confidence intervals (i.e., treating the $\hat \theta_i$ as regular data rather than first-stage estimates). Our interest is in the validity of inference in Step~2 when different shrinkage estimators are used in Step~1. 

The primary implication of our asymptotic framework is that the measurement error in $\bar{X}_i$ does not vanish relative to the sampling error for $\beta$, mimicking the finite-sample problem faced in practice, where both play a role. This persistence of measurement error is the central econometric challenge we address. As we will show in \Cref{sec:FE}, this asymptotic setup confirms that the naive unshrunk estimator (simply regressing $Y_i$ on $\bar{X}_i$) suffers from classical errors-in-variables bias and yields invalid inference. This provides a useful baseline and motivation for the other estimators.

We consider four primary classes of estimators for $\hat{\bm \theta}$, which we analyze in the order of our main results. For each class, we derive the asymptotic distribution of the OLS estimator of $\beta$ in the downstream regression. In \Cref{sec:FE}, we begin with the {unshrunk fixed-effect (FE)} estimator $\bar X_i$ as a baseline. We then move on to {individual-weight shrinkage}. We distinguish between two forms: {homoskedastic individual-weight shrinkage (HO)}, which models heterogeneity in measurement precision {solely by the number of measurements} ($J_i$), and {heteroskedastic individual-weight shrinkage (HE)}, which additionally models idiosyncratic noise variances. We study homoskedastic individual-weight shrinkage in \Cref{sec:HO} and heteroskedastic individual-weight shrinkage in \Cref{sec:HE}. Finally, to connect with other approaches in the literature, we analyze {common-weight shrinkage (CW)}, which applies a single, uniform weight across units.

\subsection{Fixed Effects (FE)}\label{sec:FE}
As a baseline, we first discuss using the fixed-effect estimator (FE) with no shrinkage. In this estimator, value-added for unit $i$ is simply estimated by the individual average: $\hat\theta_{i,\FE} = \bar X_i$, for $i=1,...,n$. 

As we show formally below, using $\bar X_i$ as a regressor suffers from the classical errors-in-variables (EIV) problem, leading to attenuation bias. This bias shifts the location of standard OLS confidence intervals for $\beta$ away from the truth, rendering them invalid for inference. In the following, we first state and discuss assumptions that are needed for the asymptotic properties.

\begin{assumption}
  \label{assum:nonnormal}
  \begin{enumerate}
    \item $J_i$ is independent of $\theta_i$, $u_i$ and $\epsilon_{i,j}$, and $J_i \geq 3, a.s.$ 
    \item $\E \left[u_i\right] = 0$, $\E(u_i  \theta_i) = 0$. $\E\left[Y_i^4\right] < \infty$. 
    \item $\E\left[\epsilon_{i,j} \mid  \sigma^2_i\right] =0$, $\E\left[ \epsilon_{i,j}^2 \mid \sigma_i^2  \right] = \sigma_i^2$, $\E\left[\abs{\epsilon_{i,j}}^L \mid \sigma_i^2\right] \leq K\sigma_i^L$, for some $L\geq3$. Also, $\epsilon_{i,j} \ind \theta_i$.
    \item $ u_i \ind \epsilon_{i,j} \big| \theta_i$. 
    \item $\E\left[\theta_i^4\right] <\infty$.
    \item $\E\left[\sigma^{16}_i \right] < \infty$.
    \item $\sqrt{n} \E\left[ J_i^{-1} \right] \to \kappa$, $\kappa \in [0,+\infty)$. \label{item:J_nonnormal}
    \item $\sqrt{n} \E\left[ J_i^{-2} \right] \to 0$. \label{item:J2_nonnormal}
  \end{enumerate}
\end{assumption}

\Cref{assum:nonnormal}.1 assumes independent numbers of measurements $J_i$. In \Cref{sec:appendix-lemmas-correlated_J}, the condition will be relaxed to allow dependence between $J_i$ and $\sigma_i^2$, accompanied by slightly stronger assumptions than \Cref{assum:nonnormal}.7 and \Cref{assum:nonnormal}.8. We keep the independence for ease of exposition. \Cref{assum:nonnormal}.2 imposes unconditional exogeneity of the true $\theta_i$ and $u_i$, and moment conditions on $Y_i$. 

\Cref{assum:nonnormal}.3 places moment conditions on the noise $\epsilon_{i,j}$. The moment conditions are general and cover a wide range of distributions for $\epsilon_{i,j}/\sigma_i$, including normal, bounded, and even some asymmetric distributions.  Thus, while normality of $\bar X_i$ is assumed to derive the functional form of parametric empirical Bayes estimators, the inference results we derive do not require any normality assumption. Independence $\epsilon_{i,j}\ind \theta_i$ is standard in the empirical Bayes literature to justify shrinkage estimators without covariates. \Cref{assum:nonnormal}.4 excludes further effects of measurement errors on $Y_i$ conditional on $\theta_i$. \Cref{assum:nonnormal}.5 and \Cref{assum:nonnormal}.6 are standard moment conditions for technical arguments. 

Finally, \Cref{assum:nonnormal}.7 and \Cref{assum:nonnormal}.8 specify the asymptotic framework. Note that if we have the same number of measurements, $J_i = J$, then \Cref{assum:nonnormal}.8 is implied by \Cref{assum:nonnormal}.7. Thus \Cref{assum:nonnormal}.7 is the essential condition. \Cref{prop:mean} below holds without \Cref{assum:nonnormal}.8. The asymptotic framework has three motivations. First, it aligns with the standard justification for normality of $\bar X_i$ via the central limit theorem with $J_i$ reasonably large \citep{walters2024empirical}. Second, it reflects the data structure in common applications, such as the teacher value-added setting, where $J_i$ often has a similar magnitude to $\sqrt{n}$. In these settings, the number of teachers is in the thousands and the number of students per teacher is in the tens. For instance, in the study of North Carolina data in \citet{deeb2021framework}, the total number of teachers $n = 5266$, and the total number of students is $388{,}191$, so we would expect $J_i$ to be on average about $74$, which is comparable to $\sqrt{n} \approx 73$. In another example from \citet{bau2020teacher}, $\sqrt{n}\approx39$ and $J_i$ is on average about $15$. Third, it provides useful approximations to the finite-sample problem faced by researchers. The product $\sqrt{n}\E \left[J_i^{-1}\right]$ quantifies the relative scale of measurement error in the estimates of $\theta_i$ to sampling error in $\beta$. When $\kappa = 0$, $J_i$ grows asymptotically much larger than $\sqrt{n}$, implying that measurement error in $\theta_i$ is negligible for the purpose of inference in the regression. When $\kappa > 0$, measurement error in $\theta_i$ is of a similar order to the sampling error. In the previous two application examples, we have $\hat \kappa_1 \approx 73/74\approx1$, and $\hat \kappa_2 \approx 39/15\approx 2.6$,\footnote{These approximations are lower bounds, because Jensen's inequality produces a larger value if $J_i$ varies across $i$} as sample counterparts of $\kappa$. A finite positive $\kappa$ is thus appropriate for the context in which tens of students per teacher and thousands of teachers coexist.

Our asymptotic framework is related to the small-variance approximation in \citet{chesher1991effect} and \citet{evdokimov2019errors,evdokimov2023simple}. This literature does not put a structure on the form of the measurement error, but assumes its variance is proportional to $n^{-1/2}$. Here we know that the measurement error from estimation of $\theta_i$ has a variance proportional to $J_i^{-1}$, which leads to \Cref{assum:nonnormal}.7. It is also standard in settings with both cross-sectional and individual-specific datasets, such as large $N$, $T$ panels \citep{pesaran2006estimation}, factor augmented regressions \citep{gonccalves2014bootstrapping}, and unstructured data \citep{battaglia2024inference}.

Denote by $\hat\beta_{\text{FE}}$ the OLS estimator from the regression of $Y_i$ on $\hat\theta_{i,\FE}$ using observations $i= 1,...,n $. The following result implies that conventional OLS inference on $\beta$ using $\hat \beta_{\FE}$ is invalid.
  \begin{proposition}
    \label{prop:mean}
    Suppose \Cref{assum:nonnormal} holds. Then we have
    \begin{align*}
      \sqrt{n} \left(\hat \beta_\FE - \beta\right) \to_d N \left(-\kappa\beta\frac{\E \left[\sigma_i^2\right]}{\Var\left(\theta_i\right)}, \frac{\E\left[ u_i^2 \left(\theta_i - \E \left[\theta_i\right]\right)^2 \right]}{\left(\Var\left(\theta_i\right)\right)^2}\right).
    \end{align*}
  \end{proposition}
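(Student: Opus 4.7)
The plan is to perform the standard errors-in-variables decomposition of the OLS estimator and identify which terms contribute the bias and which the asymptotic variance. Writing $\bar X_i = \theta_i + \bar\epsilon_i$ with $\bar\epsilon_i = J_i^{-1}\sum_{j=1}^{J_i}\epsilon_{i,j}$, and $Y_i - \beta\bar X_i = \alpha + u_i - \beta\bar\epsilon_i$, the numerator--denominator form of OLS yields
\begin{align*}
\hat\beta_{\FE} - \beta = \frac{\widehat{\Cov}(\theta_i, u_i) - \beta\,\widehat{\Var}(\bar\epsilon_i) + \widehat{\Cov}(\bar\epsilon_i, u_i) - \beta\,\widehat{\Cov}(\theta_i,\bar\epsilon_i)}{\widehat{\Var}(\bar X_i)},
\end{align*}
where $\widehat{\Cov}$ and $\widehat{\Var}$ denote sample covariance and variance. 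The result then follows by analyzing each of the four terms in the numerator, deriving the limit of the denominator, and applying Slutsky's theorem.

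For the denominator, I would decompose $\widehat{\Var}(\bar X_i) = \widehat{\Var}(\theta_i) + 2\widehat{\Cov}(\theta_i,\bar\epsilon_i) + \widehat{\Var}(\bar\epsilon_i)$; the law of large numbers under \Cref{assum:nonnormal}.5 gives $\widehat{\Var}(\theta_i) \to_p \Var(\theta_i)$, while the independence structure and \Cref{assum:nonnormal}.7 force $\E[\widehat{\Var}(\bar\epsilon_i)] = \E[\sigma_i^2]\E[J_i^{-1}] + O(n^{-3/2}) = O(n^{-1/2})$ and $\widehat{\Cov}(\theta_i,\bar\epsilon_i) = o_p(1)$, so the denominator converges in probability to $\Var(\theta_i)$. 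For the leading stochastic term in the numerator, a CLT argument on the linearization $\sqrt{n}\,\widehat{\Cov}(\theta_i, u_i) = n^{-1/2}\sum_i(\theta_i-\E[\theta_i])\,u_i + o_p(1)$ -- justified because $\E[u_i\theta_i]=0$, $\E[u_i]=0$, and the finite fourth moments of \Cref{assum:nonnormal}.2 and \Cref{assum:nonnormal}.5 -- produces an asymptotic normal with variance $\E[(\theta_i-\E[\theta_i])^2 u_i^2]$. The two cross terms $\sqrt{n}\,\widehat{\Cov}(\bar\epsilon_i, u_i)$ and $\sqrt{n}\,\widehat{\Cov}(\theta_i,\bar\epsilon_i)$ vanish in probability: the conditional independence $u_i\ind\epsilon_{i,j}\mid\theta_i$ (\Cref{assum:nonnormal}.4) together with $\epsilon_{i,j}\ind\theta_i$ (\Cref{assum:nonnormal}.3) makes each summand mean-zero with variance of order $\E[\sigma_i^2]\E[J_i^{-1}] = O(n^{-1/2})$, so Chebyshev's inequality drives the $\sqrt{n}$-scaled sums to zero.

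The main obstacle, and the source of the bias, is the deterministic piece $-\beta\sqrt{n}\,\widehat{\Var}(\bar\epsilon_i)$. Its expectation equals $\sqrt{n}\,\E[\sigma_i^2]\E[J_i^{-1}]$ up to a $\sqrt{n}\,\E[\bar{\bar\epsilon}^2] = O(n^{-1})$ correction from the demeaning, and therefore tends to $\kappa\E[\sigma_i^2]$ by \Cref{assum:nonnormal}.7. The delicate step is showing concentration, not merely convergence of the mean: I must bound $\Var\bigl(\sqrt{n}\,\widehat{\Var}(\bar\epsilon_i)\bigr)$, which reduces to controlling $\sqrt{n}\,\Var(\bar\epsilon_i^2)$. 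Invoking the moment bound $\E[|\epsilon_{i,j}|^4\mid\sigma_i^2]\leq K\sigma_i^4$ (the $L=4$ case of \Cref{assum:nonnormal}.3) together with independence of the $\epsilon_{i,j}$ across $j$ gives $\E[\bar\epsilon_i^4\mid J_i,\sigma_i^2] = O(\sigma_i^4/J_i^2)$, so $\sqrt{n}\,\Var(\bar\epsilon_i^2) = O\bigl(\sqrt{n}\,\E[\sigma_i^4]\E[J_i^{-2}]\bigr) \to 0$ -- precisely where \Cref{assum:nonnormal}.8 becomes indispensable, and where \Cref{assum:nonnormal}.6 supplies the $\E[\sigma_i^4]$. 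Assembling the four limits in the numerator and dividing by $\Var(\theta_i)$ yields the stated asymptotic bias $-\kappa\beta\,\E[\sigma_i^2]/\Var(\theta_i)$ and asymptotic variance $\E[u_i^2(\theta_i-\E[\theta_i])^2]/(\Var(\theta_i))^2$.
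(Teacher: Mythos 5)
Your proof is correct and follows essentially the same route as the paper's: the errors-in-variables decomposition of the OLS numerator and denominator, Chebyshev concentration of the $\widehat{\Var}(\bar\epsilon_i)$ term around $\kappa\E\left[\sigma_i^2\right]$, a CLT for the $\sum_i(\theta_i-\E\left[\theta_i\right])u_i$ term, and Slutsky (the paper instead folds the centered $\bar\epsilon_i^2 - J_i^{-1}\sigma_i^2$ piece into a single mean-zero summand $\xi_i$ and applies Lindeberg--Feller for triangular arrays, but the computation is the same). One bookkeeping slip: $\Var\bigl(\sqrt{n}\,\widehat{\Var}(\bar\epsilon_i)\bigr) = n\cdot n^{-1}\Var(\bar\epsilon_i^2)+o(1) = \Var(\bar\epsilon_i^2)+o(1)$, with no extra factor of $\sqrt{n}$, so the concentration step needs only $\E\left[J_i^{-2}\right]\E\left[\sigma_i^4\right]\to 0$, which already follows from \Cref{assum:nonnormal}.7 together with $J_i\ge 3$; your claim that \Cref{assum:nonnormal}.8 is ``indispensable'' here is therefore incorrect (and contradicts the paper's remark that \Cref{prop:mean} holds without it), although since you establish the stronger bound under the stated hypotheses the argument still goes through.
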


  In the special case where $\kappa=0$ (meaning measurement error is asymptotically negligible), the bias term disappears and the asymptotic distribution of $\sqrt{n} \left(\hat\beta_{\text{FE}} - \beta\right)$ is centered at zero. In our primary framework with $\kappa > 0$, $\hat\beta_{\mathrm{FE}}$ is consistent and asymptotically normal, with the efficient asymptotic variance. Thus, there is no generated regressor problem: OLS standard errors are consistent, following similarly to \Cref{prop:inference} in \Cref{sec:HE}. However, its asymptotic distribution is centered away from zero due to attenuation bias. This has important practical consequences. Consider the conventional confidence interval for $\beta$, formed as $\hat\beta_{\mathrm{FE}} \pm 1.96 \times \mathrm{SE}(\hat\beta_{\mathrm{FE}})$ where $\mathrm{SE}(\hat\beta_{\mathrm{FE}})$ represents the usual OLS standard error. This confidence interval will be centered away from $\beta$, and consequently will not have valid coverage for $\beta$. To give a sense of the magnitude of under-coverage, the simulations in \Cref{sec:sim} show that the coverage of a $95\%$ confidence interval is only about $70\%$.

\subsection{Homoskedastic Individual-Weight Shrinkage (HO)} \label{sec:HO}

In this subsection, we discuss the individual-weight shrinkage when the variance of $\bar X_i$ is assumed to only depend on the number of measurements $J_i$. That is, $\Var\left(\bar X_i \mid \theta_i\right) = \sigma^2/J_i$. Estimators constructed under this assumption are referred to in this paper as homoskedastic individual-weight shrinkage (HO), since the noise $\epsilon_{i,j}$ is homoskedastic across $i$ (note that we still allow for heteroskedasticity in the downstream regression, however). Here the primary source of heteroskedasticity is believed to arise from variation 
in the number of measurements across units $i$. The functional form of the shrinkage weights takes this heterogeneity into account.

Following the estimators proposed by \citet{jacob2007parents} and \citet{kane2008estimating}, the homoskedastic individual-weight shrinkage estimator $\bm{\theta}$ is constructed as
\begin{align*}
  \hat \theta_{i, \text{HO}} \coloneqq \frac{\hat\sigma_\theta^2}{\frac{1}{J_i}\hat\sigma^2 + \hat\sigma_\theta^2}\bar X_i +  \frac{\frac{1}{J_i}\hat\sigma^2}{\frac{1}{J_i}\hat\sigma^2 + \hat\sigma^2_\theta}\bar X.
\end{align*}
The estimator is the empirical Bayes posterior mean under normality of 
$\theta_i$ and $\epsilon_{i,j}$. 
The weight for $\bar X_i$ is $w_{i,\HO}$, obtained by replacing the oracle shrinkage weight  
\(
\frac{\Var(\theta_i)}{\sigma^2/J_i + \Var(\theta_i)}
\)
with its empirical counterparts. These variances are estimated as
\begin{align*}
  &\hat\sigma^2 \coloneqq \widehat{\Var} \left( X_{i,j} - \bar X_i   \right), \\
  &\hat\sigma_\theta^2 \coloneqq \widehat{\Cov}\left(\bar X_{i,t} ,\bar X_{i,t-1}\right),
\end{align*}
where $\bar X_{i,t}$, $\bar X_{i,t-1}$ are subset averages from splitting the measurements for each $i$ into two subsets. In a context with time periods, they can also be averages for two time periods. Note that the same $\hat\sigma^2$ is applied to units, with variation in $J_i$ reflecting differences in measurement precision.

We now derive the asymptotic distribution of the OLS estimator $\hat \beta_{\HO}$ of regressing $Y_i$ on $\hat\theta_{i,\HO}$ in two cases. In the first case, we assume the homoskedasticity in noise holds (i.e., $\sigma_i^2 = \sigma^2$ for all $i$). In the second case, we allow for heteroskedasticity in noise (i.e., $\sigma_i^2$ varies across $i$).

If homoskedasticity holds, this method is expected to correctly model the signal-to-noise ratio. Indeed, as we show below in \Cref{prop:HO_in_text}, regressing $Y_i$ on $\hat \theta_{i, \text{HO}}$ yields asymptotic unbiasedness and valid inference on $\beta$ under homoskedasticity.

If the true DGP is heteroskedastic, we demonstrate problems in inference when homoskedastic weights are applied. Here, denote the variance of $\bar X_i \mid \theta_i$ by $\sigma_i^2/J_i$. Regressing $Y_i$ on $\hat \theta_{i, \text{HO}}$ can lead to invalid inference on $\beta$ under noise heteroskedasticity. A leading case is when the number of measurements $J_i$ is correlated with the variance $\sigma_i^2$. In that case, it is natural to have $J_i$ endogenously increased for observations with large $\sigma_i^2$. But in this case the downstream regression estimator has an asymptotic bias which invalidates the standard approach to inference. The following result substantiates this claim, with details and proofs in \Cref{sec:appendix-lemmas-correlated_J}.
\begin{proposition}\label{prop:HO_in_text}
  Suppose \Cref{assum:corr_J} and \ref{assum:corr_J_large_J} hold, which allow for correlation between $J_i$ and $\sigma_i^2$. Then there exist DGPs with $\gamma >0$ in which
  \begin{align*}
    \sqrt{n} \left( \hat\beta_{\HO} - \beta \right) \to_d N\left(\gamma \beta,\frac{\E\left[u_i^2 \left(\theta_i - \E\left[\theta_i\right]\right)^2\right]}{\left(\Var\left(\theta_i\right)\right)^2}\right).
  \end{align*}
  If instead $J_i$ and $\sigma_i^2$ are independent, then
  \begin{align*}
    \sqrt{n} \left( \hat\beta_{\HO} - \beta \right) \to_d N\left(0,\frac{\E\left[u_i^2 \left(\theta_i - \E\left[\theta_i\right]\right)^2\right]}{\left(\Var\left(\theta_i\right)\right)^2}\right).
  \end{align*}
\end{proposition}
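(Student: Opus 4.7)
Start by writing the OLS estimator in deviation form. Let $D_i := \hat\theta_{i,\HO} - \bar{\hat\theta}_{\HO}$. Substituting $Y_i = \alpha + \beta\theta_i + u_i$ and using $\sum_i D_i = 0$ gives
\begin{align*}
\sqrt{n}\bigl(\hat\beta_{\HO} - \beta\bigr) = \beta\cdot\frac{n^{-1/2}\sum_i D_i(\theta_i - \hat\theta_{i,\HO})}{n^{-1}\sum_i D_i^2} + \frac{n^{-1/2}\sum_i D_i u_i}{n^{-1}\sum_i D_i^2}.
\end{align*}
The second term is an asymptotically normal noise component, and the first term captures any asymptotic bias. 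Under the asymptotic framework $\sqrt n\,\E[J_i^{-1}] \to \kappa$, the infeasible weight $w_i^\ast := \sigma_\theta^2/(\bar\sigma^2/J_i + \sigma_\theta^2)$ with $\bar\sigma^2 := \E[\sigma_i^2]$ converges to $1$, so $D_i \approx (\theta_i - \E\theta_i) + \bar\epsilon_i$. A standard CLT using $u_i \ind \epsilon_{i,j}\mid\theta_i$ and $\E[\bar\epsilon_i^2 u_i^2] \to 0$ then delivers the stated asymptotic variance in the numerator of the noise term, and $n^{-1}\sum_i D_i^2 \to_p \Var(\theta_i)$ in the denominator.

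The bias term is the crux. I would first replace the sample variances $\hat\sigma^2$ and $\hat\sigma_\theta^2$ by their probability limits $\bar\sigma^2$ and $\sigma_\theta^2$. Both partial derivatives of $w_i$ with respect to these variances are $O(J_i^{-1})$, while the estimation errors are $O_p(n^{-1/2})$; the induced perturbation in $n^{-1}\sum_i D_i(\theta_i - \hat\theta_{i,\HO})$ is therefore $O_p(\E[J_i^{-1}]/\sqrt n) = O_p(1/n)$ and vanishes after multiplication by $\sqrt n$. Writing $\theta_i - \hat\theta_{i,\HO} = (1-w_i^\ast)(\theta_i - \bar X) - w_i^\ast \bar\epsilon_i$ and $D_i \approx w_i^\ast(\theta_i - \bar X + \bar\epsilon_i)$, and exploiting $\bar\epsilon_i\ind\theta_i\mid J_i,\sigma_i^2$ together with $w_i^\ast\ind\theta_i$ (since $w_i^\ast$ depends only on $J_i$), only two terms survive in the probability limit of the numerator:
\begin{align*}
n^{-1}\sum_i D_i(\theta_i - \hat\theta_{i,\HO}) = \E\bigl[w_i^\ast(1-w_i^\ast)\bigr]\Var(\theta_i) - \E\bigl[(w_i^\ast)^2\sigma_i^2/J_i\bigr] + o_p(n^{-1/2}).
\end{align*}

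The final algebraic step is the first-order expansion $w_i^\ast = 1 - \bar\sigma^2/(J_i\sigma_\theta^2) + O(J_i^{-2})$, where \Cref{assum:nonnormal}.8 (that $\sqrt n\,\E[J_i^{-2}] \to 0$) kills the remainder after multiplication by $\sqrt n$. The right-hand side of the display above then reduces to $\bar\sigma^2\E[J_i^{-1}] - \E[\sigma_i^2/J_i] = -\Cov(\sigma_i^2, J_i^{-1})$. Dividing by the denominator $\Var(\theta_i)$ and multiplying by $\beta$ yields the asymptotic bias $\beta\gamma$ with $\gamma := -\lim_n \sqrt n\,\Cov(\sigma_i^2, J_i^{-1})/\Var(\theta_i)$. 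Under $J_i\ind\sigma_i^2$ this covariance is identically zero, proving the second display. For the first display it suffices to exhibit a joint distribution of $(J_i,\sigma_i^2)$ in which $\sigma_i^2$ is stochastically increasing in $J_i$ --- natural in applications that collect more measurements for noisier units --- which yields $\Cov(\sigma_i^2, J_i^{-1}) < 0$ and hence $\gamma > 0$.

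The main obstacle is the first-step estimation of $\hat\sigma^2$ and $\hat\sigma_\theta^2$, because these are \emph{common} random objects that enter every weight and so risk creating cross-sectional dependence at the $\sqrt n$-scale. The observation that rescues the argument is that both variance estimators appear in $w_i^\ast$ only through the small quantity $\bar\sigma^2/J_i$, so their $O_p(n^{-1/2})$ estimation error is multiplicatively attenuated by $J_i^{-1}$, reducing its aggregate effect on the ratio to $O_p(1/n)$ --- one order below the bias we are isolating. Carrying out this delta-method-type linearization cleanly, under the relaxed assumptions of \Cref{sec:appendix-lemmas-correlated_J} that permit $\Cov(J_i, \sigma_i^2) \ne 0$, is where the bulk of the technical work lies.
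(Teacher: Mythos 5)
Your proof is correct and follows essentially the same route as the paper: both isolate the asymptotic bias as the term $-\beta\,\sqrt{n}\,\E\bigl[c_{i,\HO}^2 J_i^{-1}(\sigma_i^2-\E[\sigma_i^2])\bigr]/\Var(\theta_i)$, which after expanding the weight reduces to $-\beta\,\lim_n\sqrt{n}\,\Cov(\sigma_i^2,J_i^{-1})/\Var(\theta_i)$, vanishing under independence and made strictly positive by a DGP in which $\sigma_i^2$ increases with $J_i$ (the paper uses an explicit two-point distribution with $J_i\asymp\sqrt{n}$ to verify compatibility with \Cref{assum:corr_J_large_J}, which you should spell out). The only slips are cosmetic: the remainder bound $\sqrt{n}\,\E[J_i^{-2}]\to 0$ should be cited from \Cref{assum:corr_J_large_J} via \Cref{remark:corr_J} rather than from \Cref{assum:nonnormal}.8, since the proposition is stated under the correlated-$J$ assumption set.
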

\Cref{prop:HO_in_text} clarifies the conditions required for regressing $Y_i$ on $\hat\theta_{i,\HO}$ to yield valid inference. While inference is valid if the homoskedasticity assumption holds or if $J_i$ and $\sigma_i^2$ are independent, the estimator's performance is sensitive to violations of homoskedasticity. As $\gamma >0$ in the bias term, the asymptotic bias is \emph{amplification} instead of classical EIV attenuation bias. Generally, the sign and magnitude of $\gamma$ depend on the dependence of $J_i$ and $\sigma_i^2$, and the bias can thus be positive or negative. It motivates the heteroskedastic estimator we consider next, which is designed to be robust to such dependence.

\subsection{Heteroskedastic Individual-Weight Shrinkage (HE)} \label{sec:HE}
In this subsection, we discuss the {heteroskedastic individual-weight (HE) estimator} under noise heteroskedasticity. This estimator is designed to accommodate heterogeneity from both the number of measurements $J_i$ and the idiosyncratic noise variance $\sigma_i^2$. It improves upon $\hat\theta_{i,\HO}$.

The heteroskedastic individual-weight shrinkage estimator is defined as 
\begin{align}
  &\hat \theta_{i,\HE} \coloneqq w_{i,\HE} \bar X_i + (1-w_{i,\HE})\bar X, \; \text{where}  \\
  &w_{i,\HE} \coloneqq \frac{\hat V}{\frac{1}{J_i}\hat\sigma_i^2 + \hat V}. \nonumber
\end{align}
Here, $\hat V$ is the estimator of $\Var\left(\theta_i\right)$, and $\hat \sigma_i^2$ is the estimator of $\Var\left(X_{i,j} \mid \theta_i\right)$. The variance estimators are constructed following \citet{kline2020leave} and \citet{kline2022systemic}:
\begin{align*}
  &\hat \sigma^2_i \coloneqq \frac{1}{J_i-1} \sum_{j=1}^{J_i} \left( X_{i,j} - \bar X_i \right)^2,\\
  &\hat V \coloneqq \frac{1}{n}\sum_{k=1}^{n}\left( \bar X_k - \bar{X} \right)^2 - \frac{n-1}{n^2}\sum_{k=1}^{n} \frac{1}{J_k}\hat \sigma^2_k.
\end{align*}
Unlike $\hat\theta_{i,\HO}$, the weights $w_{i,\mathrm{HE}}$ allow $\Var(X_{i,j}\mid \theta_i)$ to differ across units, thereby ensuring robustness to heteroskedasticity.

Let $\hat\beta_{\HE}$ denote the downstream OLS estimator from regressing $Y_i$ on $\hat\theta_{i,\HE}$. We take an intermediate step in deriving the properties of $\hat\beta_\HE$. The estimator $\hat\theta_{i,\HE}$ is an empirical Bayes estimator, with the estimated prior variance $\hat{V}$ and prior mean $\bar X$ for a normal prior on $\theta_i$. We start by studying the property of OLS estimator $\hat\beta_{c,\HE}$ from regressing $Y_i$ on the semi-oracle estimator $\hat\theta_{i,c,\HE}$ (with prior mean still estimated from the data),
  \begin{align*}
    &\hat{\theta}_{i,c,\HE} \coloneqq c_i \bar{X}_i + \left(1- c_i \right) \bar{X}, \;\text{where}\\
    &c_i \coloneqq \frac{\Var\left(\theta_i\right)}{\frac{1}{J_i}\hat \sigma^2_i + \Var\left(\theta_i\right)}.
  \end{align*}
\Cref{lemma:V_nonnormal} shows that the estimated variance $\hat V$ is a consistent estimator for the true variance $\Var\left(\theta_i\right)$. Thus, we build the properties of $\hat\beta_\HE$ based on those of $\hat\beta_{c,\HE}$. The following result shows that the OLS estimator $\hat \beta_{c,\HE}$ from regression $Y_i$ on $\hat\theta_{i,c,\HE}$ is asymptotically normal, correctly centered, with the efficient asymptotic variance from the infeasible regression of $Y_i$ on $\theta_i$.

\begin{lemma}
  \label{lemma:beta_c}
  Suppose \Cref{assum:nonnormal} holds. Then we have
  \begin{align*}
    \sqrt{n} \left(\hat{\beta}_{c,\HE} - \beta\right) \to_d N \left(0, \frac{\E\left[ u_i^2 \left(\theta_i - \E \left[\theta_i\right]\right)^2 \right]}{\left(\Var\left(\theta_i\right)\right)^2}\right).
  \end{align*}
\end{lemma}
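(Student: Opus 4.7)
My plan is to decompose $\hat\beta_{c,\HE}$ as a ratio and analyze numerator and denominator separately. Substituting $Y_i = \alpha + \beta\theta_i + u_i$ into the OLS formula gives
\begin{align*}
\hat\beta_{c,\HE} - \beta = \frac{B + \beta(A - D)}{D},
\end{align*}
where $A = \frac{1}{n}\sum_i (\theta_i - \bar\theta)(\hat\theta_{i,c,\HE} - \bar{\hat\theta}_c)$, $B = \frac{1}{n}\sum_i (u_i - \bar u)(\hat\theta_{i,c,\HE} - \bar{\hat\theta}_c)$, $D = \frac{1}{n}\sum_i (\hat\theta_{i,c,\HE} - \bar{\hat\theta}_c)^2$, and $\bar{\hat\theta}_c$ is the sample mean of $\hat\theta_{i,c,\HE}$. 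Three claims, combined via Slutsky's theorem, deliver the result: (i) $D \to_p \Var(\theta_i)$; (ii) $\sqrt{n}(A - D) \to_p 0$; and (iii) $\sqrt{n} B \to_d N\bigl(0,\,\E[u_i^2(\theta_i - \E\theta_i)^2]\bigr)$. For (i), writing $\hat\theta_{i,c,\HE} - \bar{\hat\theta}_c = c_i(\bar X_i - \bar X) - \frac{1}{n}\sum_j c_j(\bar X_j - \bar X)$ and using the bound $0 \le 1 - c_i \le \hat\sigma_i^2/(J_i V)$ together with the moment conditions on $\sigma_i^2$ and on $J_i^{-1}$ from Assumption \ref{assum:nonnormal}, the leading term is $\frac{1}{n}\sum_i (\bar X_i - \bar X)^2 \to_p \Var(\theta_i) + \E[\sigma_i^2]\,\E[J_i^{-1}] \to \Var(\theta_i)$.

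The main obstacle is step (ii). Letting $e_i = \hat\theta_{i,c,\HE} - \theta_i = -(1-c_i)(\theta_i - \bar X) + c_i \bar\epsilon_i$, algebra gives $A - D = -\frac{1}{n}\sum_i (\hat\theta_{i,c,\HE} - \bar{\hat\theta}_c)(e_i - \bar e)$, whose expected value is approximately $-\Cov(\theta_i, e_i) - \Var(e_i)$. Each piece is individually of order $n^{-1/2}$: using independence of $\theta_i$ from $(c_i, \bar\epsilon_i)$, one obtains $\Cov(\theta_i, e_i) \approx -\Var(\theta_i)\,\E[1-c_i]$, and a direct calculation gives $\Var(e_i) \approx \E[c_i^2 \sigma_i^2/J_i]$. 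The critical cancellation arises from the defining identity
\begin{align*}
V(1 - c_i) = c_i\,\hat\sigma_i^2/J_i,
\end{align*}
which is the semi-oracle analogue of the Bayes posterior identity $\Cov(\theta_i, \E[\theta_i \mid \bar X_i]) = \Var(\E[\theta_i \mid \bar X_i])$. Taking expectations gives $\Var(\theta_i)\,\E[1-c_i] = \E[c_i\hat\sigma_i^2/J_i]$, so the shrinkage bias and the noise variance cancel up to a residual of the form $\E[c_i(c_i\sigma_i^2 - \hat\sigma_i^2)/J_i]$, which is $o(n^{-1/2})$ by the plug-in error $\hat\sigma_i^2 - \sigma_i^2 = O_p(J_i^{-1/2})$ together with $\sqrt{n}\,\E[J_i^{-2}] \to 0$. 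The stochastic fluctuation about this mean has variance bounded by $\E[e_i^2] = O(n^{-1/2})$ using near-independence across units (the cross-unit correlation through $\bar X$ is only of order $n^{-1}$) and therefore vanishes. Tracking the exact cancellations among these second-moment terms is where the bulk of the work lies.

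For step (iii), decompose $\sqrt{n} B = n^{-1/2}\sum_i (u_i - \bar u)(\theta_i - \bar\theta) + n^{-1/2}\sum_i (u_i - \bar u)(e_i - \bar e)$. The first term satisfies the Lindeberg--L\'evy CLT with the claimed asymptotic variance, using $\E[u_i\theta_i] = 0$ and $\E[Y_i^4] < \infty$. The second term vanishes in probability: $n^{-1/2}\sum_i u_i c_i \bar\epsilon_i$ has mean zero by $u_i \ind \epsilon_{i,j} \mid \theta_i$ and variance of order $\E[u_i^2 \sigma_i^2/J_i] = O(n^{-1/2})$, while $n^{-1/2}\sum_i u_i(1-c_i)(\theta_i - \bar X)$ vanishes by $\E[u_i\theta_i] = 0$ together with $\E[(1-c_i)^2] = o(n^{-1/2})$. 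Slutsky's theorem then delivers the stated asymptotic distribution.
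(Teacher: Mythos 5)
Your proposal is correct and follows essentially the same route as the paper: the same numerator/denominator decomposition, the same key identity $V(1-c_i)=c_i\hat\sigma_i^2/J_i$ driving the exact cancellation of the shrinkage bias against the noise variance (the paper writes it as $V\,\E[c_i-c_i^2]=\E[c_i^2\hat\sigma_i^2/J_i]$ when showing $\E[\xi_i]=0$), and the same replacement of $\bar\epsilon_i^2$ by $\hat\sigma_i^2/J_i$ that the paper isolates as a separate lemma. The only difference is organizational: you split off the bias term $\sqrt{n}(A-D)$ and kill it in probability, whereas the paper absorbs it into a single mean-zero triangular array $\xi_i$ and applies Lindeberg--Feller once; these are equivalent because that component has asymptotically vanishing variance.
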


\Cref{lemma:beta_c} shows that regressing $Y_i$ on $\hat\theta_{i,c,\HE}$ leads to an asymptotically unbiased estimator of $\beta$ in infeasible cases where $\Var\left(\theta_i\right)$ is known. The estimator performs asymptotically equivalently to that from regression of $Y_i$ on $\theta_i$. Next, we move on to the feasible estimator $\hat \theta_{i,\HE}$. 

The following result builds on \Cref{lemma:beta_c} and establishes the asymptotic distribution if we use the feasible shrinkage estimator $\hat\theta_{i,\HE}$. Just like $\hat\beta_{c,\HE}$, the OLS estimator $\hat\beta_\HE$ from regressing $Y_i$ on $\hat\theta_{i,\HE}$ is also asymptotically normal, correctly centered, with the efficient asymptotic variance.
\begin{theorem}
  \label{prop:beta}
  Suppose \Cref{assum:nonnormal} holds. Then we have that $\hat \beta_\HE$ and $\hat \beta_{c,\HE}$ are first-order asymptotically equivalent:
  \begin{align*}
    \sqrt{n}\left(\hat\beta_\HE - \beta\right) = \sqrt{n}\left(\hat\beta_{c,\HE} - \beta\right) + o_p(1),
  \end{align*}
  and so
\begin{align*}
  \sqrt{n} \left(\hat{\beta}_\HE - \beta\right) \to_d N \left(0, \frac{\E\left[ u_i^2 \left(\theta_i - \E \left[\theta_i\right]\right)^2 \right]}{\left(\Var\left(\theta_i\right)\right)^2}\right).
\end{align*}
\end{theorem}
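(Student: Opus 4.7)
The plan is to show $\sqrt{n}(\hat\beta_\HE - \hat\beta_{c,\HE}) = o_p(1)$, after which the stated limit follows from \Cref{lemma:beta_c} and Slutsky's theorem. Writing both OLS slopes as ratios of sample covariance to sample variance, this reduces to showing that $\sqrt{n}$ times the numerator difference $\widehat{\Cov}(Y,\hat\theta_\HE) - \widehat{\Cov}(Y,\hat\theta_{c,\HE})$ and $\sqrt{n}$ times the denominator difference $\widehat{\Var}(\hat\theta_\HE) - \widehat{\Var}(\hat\theta_{c,\HE})$ are both $o_p(1)$, since the denominator $\widehat{\Var}(\hat\theta_{c,\HE}) \to_p \Var(\theta_i) > 0$ (using consistency of $\hat V$ from \Cref{lemma:V_nonnormal}) and the numerator sample covariance is $O_p(1)$.

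The next step is to linearize the weight in the argument $V \coloneqq \Var(\theta_i)$. Setting $a_i \coloneqq \hat\sigma_i^2/J_i$, the map $v \mapsto v/(a_i+v)$ has derivative $a_i/(a_i+v)^2$, so a Taylor expansion around $V$ gives
\begin{equation*}
w_{i,\HE} - c_i = (\hat V - V)\,\frac{a_i}{(a_i + V)^2} + R_i,
\end{equation*}
with $|R_i| \lesssim (\hat V - V)^2 \cdot a_i/V^3$ on the high-probability event $\{|\hat V - V| \leq V/2\}$. The crucial magnitudes are $\E[a_i] = \E[1/J_i]\,\E[\sigma_i^2] = O(n^{-1/2})$, which follows from the independence in \Cref{assum:nonnormal}.1 and the rate in \Cref{assum:nonnormal}.7, together with $\hat V - V = O_p(n^{-1/2})$ from \Cref{lemma:V_nonnormal} (combined with a standard sample-variance CLT).

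Propagating the linearization into the sample covariance yields
\begin{equation*}
\widehat{\Cov}(Y,\hat\theta_\HE) - \widehat{\Cov}(Y,\hat\theta_{c,\HE}) = \frac{1}{n}\sum_i (Y_i - \bar Y)(w_{i,\HE} - c_i)(\bar X_i - \bar X) = (\hat V - V)\,A_n + r_n,
\end{equation*}
where $A_n \coloneqq n^{-1}\sum_i (Y_i - \bar Y)\,a_i(\bar X_i - \bar X)/(a_i + V)^2$ and $r_n$ collects the $R_i$ contributions. Since $|A_n| \leq n^{-1}\sum_i |Y_i - \bar Y|\, a_i |\bar X_i - \bar X|/V^2$, a direct moment calculation using $\E[a_i] = O(n^{-1/2})$, Cauchy--Schwarz, and the fourth-moment bounds in \Cref{assum:nonnormal}.2, .5, .6 delivers $A_n = o_p(1)$, hence $\sqrt{n}(\hat V - V)\,A_n = O_p(1) \cdot o_p(1) = o_p(1)$. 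The remainder satisfies $\sqrt{n}|r_n| \lesssim \sqrt{n}(\hat V - V)^2 \cdot O_p(\E[a_i]) = O_p(n^{-1}) = o_p(1)$. An analogous calculation bounds the variance difference, which together with \Cref{lemma:beta_c} yields the theorem.

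The main obstacle will be twofold. First, the Taylor linearization is only valid where $a_i + V$ and $a_i + \hat V$ stay away from zero; I would address this by restricting to the event $\{|\hat V - V| \leq V/2\}$, whose probability tends to one since $V > 0$ is a fixed constant and $\hat V$ is consistent. Second, $\hat V$ is built from the entire sample and so is correlated with every summand in $A_n$; I would sidestep this by keeping $\hat V - V$ factored outside the sum as a scalar multiplier, after which it suffices to bound $A_n$ with an unconditional moment inequality, avoiding any conditional-independence manipulation involving the joint distribution of $\hat V$ and the individual summands.
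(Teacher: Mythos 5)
Your proposal is correct and follows essentially the same route as the paper: both expand the shrinkage weight in $\hat V$ around $V$, factor $\hat V - V = O_p(n^{-1/2})$ out of the sums as a scalar to avoid the dependence problem, drive the resulting derivative-weighted averages to zero using $\E[\hat\sigma_i^2/J_i]=O(n^{-1/2})$ together with Cauchy--Schwarz, and handle the singularity by restricting to the high-probability event where $\hat V$ stays near $V$. The only differences are cosmetic---you use an explicit Taylor remainder at the fixed point $V$ where the paper applies the mean value theorem at a random intermediate $\tilde V$---and you are in fact slightly more careful than the paper in requiring the $\sqrt{n}$-scaled denominator difference (not just the unscaled one) to be $o_p(1)$, which is what the ratio decomposition genuinely needs when $\beta\neq 0$.
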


\Cref{prop:beta} establishes that the OLS estimator based on the feasible regressor $\hat \theta_{i,\HE}$ is first-order asymptotically equivalent to the estimator based on the semi-oracle $\hat\theta_{i,c,\HE}$. Moreover, there is no efficiency loss relative to the infeasible regression of $Y_i$ on the latent $\theta_i$. The asymptotic variance is the efficiency bound for estimating $\beta$ in models where $\theta_i$ is observed \citep[See the unconditional moment case of][]{chamberlain1987asymptotic}. The asymptotic distribution doesn't depend on the value of $\kappa$. Thus, the OLS regression estimator $\hat \beta_{\HE}$ is robust to the amount of measurement error (under the conditions of \Cref{prop:beta}).   

\begin{remark}
  Recall from \Cref{prop:mean} that the asymptotic distribution of the estimator $\hat \beta_{\FE}$ from regressing on unshrunk $\hat\theta_{i,\FE}$ was biased by the $\kappa$ term, leading to invalid inference. Here, even though we remain in the same $\kappa > 0$ regime where measurement error is persistent, that bias term is absent. The HE shrinkage procedure perfectly corrects the EIV problem, thereby resolving the inferential issue in the baseline FE case.
\end{remark}

\begin{remark}
  In \Cref{sec:HO}, we showed that under heteroskedasticity, HO yields invalid inference for $\beta$. By contrast, as we show in \Cref{prop:beta_corr_J}, HE delivers an asymptotically unbiased estimator of $\beta$ under the same conditions. In this sense, HE provides a more robust approach. Details and proofs are given in \Cref{sec:appendix-lemmas-correlated_J}.
\end{remark}

We further establish consistency of standard errors from the regression of $Y_i$ on $\hat \theta_{i,\HE}$. Together with \Cref{prop:beta}, this ensures that regression of $Y_i$ on  $\hat \theta_{i,\HE}$ delivers automatic inference in practice: one can proceed as if $\hat \theta_{i,\HE}$ are the true latent $\theta_i$ for the purpose of estimation and inference on $\beta$.

Building on \Cref{assum:nonnormal}, additional regularity conditions are required for consistency of the variance estimator.
\begin{assumption}
  \label{assum:inference}
  \begin{enumerate}
    \item $\E\left(u_i \right)=0$, $\E(u_i  \theta_i) = 0$. $\E \left[\abs{ \theta_i^{k_1} u_i^{k_2}} \right] < \infty$ for any $1\leq k_1 \leq 8$, $1 \leq k_2 \leq 4$.
    \item $\E\left[\theta_i^8\right] <\infty$.
  \end{enumerate}
\end{assumption}
Here, \Cref{assum:inference} strengthens the moment conditions to ensure that the law of large numbers applies to higher order terms in the variance estimator. 

\begin{theorem}
\label{prop:inference}
Suppose \Cref{assum:nonnormal} and \Cref{assum:inference} hold. Then the Eicker--Huber--White estimator of standard errors in the regression of $Y_i$ on $\hat \theta_{i,\HE}$ is consistent:
\begin{align*}
  \hat \Omega \coloneqq \frac{\frac{1}{n}\sum_{i=1}^{n} \left(\hat\theta_{i,\HE} - \bar {\hat \theta}_\HE\right)^2 \hat u_i^2 }{\left( \frac{1}{n} \sum_{i=1}^{n} \left(\hat \theta_{i,\HE} -\bar{\hat\theta}_\HE \right)^2  \right)^2} \to_p \frac{\E\left[u_i^2\left(\theta_i - \E \left[\theta_i\right]\right)^2\right]}{\left(\Var\left(\theta_i\right)\right)^2},
\end{align*}
where $\hat u_i$ is the OLS residual:
\begin{align*}
  \hat u_i \coloneqq Y_i - \hat \alpha - \hat \beta  \hat\theta_{i,\HE}.
\end{align*}
\end{theorem}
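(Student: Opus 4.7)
The plan is to write $\hat\Omega$ as a ratio and show convergence in probability of the numerator to $\E[u_i^2(\theta_i-\E[\theta_i])^2]$ and of the denominator to $(\Var(\theta_i))^2$, after which the continuous mapping theorem delivers the result. The denominator is already largely controlled as a by-product of the proof of \Cref{prop:beta}: in showing $\hat\beta_\HE-\beta=O_p(n^{-1/2})$ via OLS, one must establish $\frac{1}{n}\sum_i (\hat\theta_{i,\HE}-\bar{\hat\theta}_\HE)^2 \to_p \Var(\theta_i)$, which I would reuse verbatim. I would also record that $\hat\alpha-\alpha=o_p(1)$ and $\hat\beta-\beta=O_p(n^{-1/2})$, since these are the ingredients needed for the residuals.

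For the numerator, the first step is to expand the OLS residual as
\begin{align*}
\hat u_i = u_i + (\alpha-\hat\alpha) + (\beta-\hat\beta)\hat\theta_{i,\HE} + \beta(\theta_i - \hat\theta_{i,\HE}),
\end{align*}
so that $\hat u_i^2 = u_i^2 + 2 u_i R_i + R_i^2$ with $R_i$ the sum of the last three terms. I would then decompose the estimator as $\hat\theta_{i,\HE}-\bar{\hat\theta}_\HE = (\theta_i-\bar\theta) + D_i$, where $D_i$ collects the centered measurement error $w_{i,\HE}\bar\epsilon_i-\overline{w_\HE\bar\epsilon}$, the shrinkage-toward-mean contribution $(1-w_{i,\HE})(\bar X-\theta_i)$ minus its average, and a centering correction from $\bar\theta-\bar\theta_n$. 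Substituting both decompositions in the numerator yields one leading term $\frac{1}{n}\sum_i(\theta_i-\bar\theta)^2 u_i^2$, which I would show converges in probability to $\E[u_i^2(\theta_i-\E[\theta_i])^2]$ by the law of large numbers together with \Cref{assum:inference} (which ensures the relevant moments $\E[\theta_i^4 u_i^2]$ and $\E[\theta_i^8]$ are finite and underwrites the LLN replacement of $\bar\theta$ by $\E[\theta_i]$).

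The main obstacle is showing all remainder terms are $o_p(1)$. There are two families. The first are terms involving the estimation errors $\alpha-\hat\alpha$ and $\beta-\hat\beta$ multiplied by sample moments like $\frac{1}{n}\sum_i (\theta_i-\bar\theta)^2 u_i \hat\theta_{i,\HE}$ or $\frac{1}{n}\sum_i (\theta_i-\bar\theta)^2 \hat\theta_{i,\HE}^2$; these I would bound by noting that such sample moments are $O_p(1)$ under the sixth- to eighth-order moment conditions of \Cref{assum:inference} combined with \Cref{assum:nonnormal}.6, and then invoking $\hat\alpha-\alpha=o_p(1)$, $\hat\beta-\beta=O_p(n^{-1/2})$. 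The second family, which is the genuinely delicate one, are terms involving powers of $D_i$ such as $\frac{1}{n}\sum_i D_i^2 u_i^2$ and $\frac{1}{n}\sum_i (\theta_i-\bar\theta) D_i u_i^2$. Here the key observation is that the noise contribution to $D_i$ has conditional variance $w_{i,\HE}^2\sigma_i^2/J_i \le \sigma_i^2/J_i$, and the shrinkage-bias contribution has size $(1-w_{i,\HE})(\bar X-\theta_i)$ with $\E[1-w_{i,\HE}]=O(\E[\sigma_i^2/J_i])=O(n^{-1/2})$ by \Cref{assum:nonnormal}.7. Consequently $\E[\frac{1}{n}\sum_i D_i^2 u_i^2]=O(n^{-1/2})$ after conditioning on $(\theta_i,u_i,\sigma_i^2,J_i)$ and using independence of $\epsilon_{i,j}$ from $u_i$ given $\theta_i$ (\Cref{assum:nonnormal}.4), together with the moment bound on $|\epsilon_{i,j}|^L$.

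The cross terms are handled similarly by Cauchy--Schwarz, reducing them to the two squared averages above. The same machinery that underlies \Cref{lemma:V_nonnormal} and \Cref{prop:beta} controls the replacement of $\hat V$, $\hat\sigma_i^2$, and $\bar X$ by their population analogues inside $w_{i,\HE}$, so I would cite those lemmas rather than re-derive them. Combining, both numerator and denominator converge to the stated limits, and the continuous mapping theorem completes the argument.
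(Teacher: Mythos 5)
Your proposal is correct, and it reaches the result by a genuinely different decomposition than the paper. The paper's proof writes $\hat u_i = (Y_i - \bar Y) - \hat\beta\,(\hat\theta_{i,\HE} - \bar{\hat\theta}_\HE)$, so the numerator becomes a combination of the sums $\frac{1}{n}\sum_i(\hat\theta_{i,\HE}-\bar{\hat\theta}_\HE)^k (Y_i-\bar Y)^{4-k}$ for $k=2,3,4$; each of these is transferred from the feasible weight to the semi-oracle weight $c_i$ by a mean-value argument in $t=\hat V$ (showing the derivative $f_n'(\tilde V)=O_p(1)$, exactly as in the proof of \Cref{prop:beta}), and the semi-oracle limits are then computed in \Cref{lemma:inference_c} via the four moment lemmas \Cref{lemma:theta_4}--\ref{lemma:theta_u}, which binomially expand powers of $c_i\bar X_i$ and kill the $\bar\epsilon_i$ contributions term by term. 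You instead linearize at the level of the residual ($\hat u_i = u_i + R_i$) and of the regressor ($\hat\theta_{i,\HE}-\bar{\hat\theta}_\HE = (\theta_i-\bar\theta)+D_i$), isolate the single leading term $\frac{1}{n}\sum_i(\theta_i-\bar\theta)^2u_i^2$, and dispatch every remainder by conditional-variance bounds ($\Var(w_{i,\HE}\bar\epsilon_i\mid\cdot)\le\sigma_i^2/J_i$, $1-w_{i,\HE}\lesssim \hat\sigma_i^2/(J_i\hat V)$) plus Cauchy--Schwarz. Both routes consume the same raw ingredients (\Cref{assum:inference}'s cross-moments, \Cref{lemma:properties_nonnormal}, \Cref{lemma:V_nonnormal}); yours is more direct and makes it transparent that the measurement-error contributions are $O_p(n^{-1/2})$ and hence harmless at the consistency (as opposed to $\sqrt{n}$) scale, while the paper's buys reuse of the feasible-to-semi-oracle machinery already built for \Cref{prop:beta}. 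One small imprecision: you speak of replacing $\hat\sigma_i^2$ by its population analogue inside $w_{i,\HE}$, but the paper's semi-oracle $c_i$ deliberately retains $\hat\sigma_i^2$ and swaps only $\hat V$ for $V$; fortunately your actual bounds never require that replacement (they only use $w_{i,\HE}\in(0,1)$ and moments of $\hat\sigma_i^2/J_i$), so this does not create a gap.
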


Taken together, \Cref{prop:beta} and \Cref{prop:inference} show that asymptotically valid confidence intervals for $\beta$ can be constructed as follows. If we regress $Y_i$ on $\hat\theta_{i,\HE}$, the reported standard error is $\sqrt{\hat{\Omega}/n}$. The asymptotically valid confidence intervals at level $1-\alpha$ are given by
  \begin{align*}
    \hat\beta_{\HE} \pm z_{1-\alpha/2} \sqrt{\hat{\Omega}/n},
  \end{align*}
  where \( z_{1-\alpha/2} \) denotes the \( 1-\alpha/2 \) quantile of the standard normal distribution. \Cref{prop:beta} and \Cref{prop:inference} ensure that the confidence interval has asymptotically nominal coverage. Specifically:
  \begin{align*}
    \Pr \left( \beta \in [ \hat\beta_\HE - z_{1-\alpha/2} \sqrt{\hat{\Omega}/n}, \hat\beta_\HE + z_{1-\alpha/2}\sqrt{\hat{\Omega}/n} ]  \right) \to 1-\alpha.
  \end{align*}
In practice, efficient inference is easy to implement. One regresses $Y_i$ on $\hat{\theta}_{i,\mathrm{HE}}$ to obtain the OLS estimator $\hat{\beta}_{\mathrm{HE}}$ and its standard error. The resulting confidence interval for $\beta$ is asymptotically valid.

The HE estimator also serves another purpose. While the focus of this paper is on downstream inference, it is worth noting that the HE estimator has favorable properties for estimating the full vector of individual effects $\bm\theta$. As an estimator motivated by empirical Bayes principles, its construction improves estimation accuracy \citep[see, e.g.,][]{efron1973stein}. By correctly modeling all sources of heterogeneity, $\hat{\bm\theta}_\HE \coloneqq \left(\hat\theta_{i,\HE}\right)_{i=1}^n$ is known to achieve a lower mean squared error (MSE) for $\bm\theta$ than the unshrunk FE, the HO, and the common-weight estimators, a standard result in that literature.

\subsection{Common-Weight Shrinkage (CW)} \label{sec:CW}
As a final point of comparison, we analyze the {common-weight (CW) shrinkage estimator}, which serves as a benchmark. This estimator, motivated by the James-Stein estimator, shrinks each unit mean $\bar X_i$ towards the grand mean $\bar X$ by a {common} factor $w$:
\begin{align*}
  \hat \theta_{i, \text{CW}} \coloneqq w \bar X_i + (1-w)\bar X,
\end{align*}
where $\bar X$ denotes the grand mean of the sample. The weight $w$ can be data-dependent as in the James-Stein estimator. For instance, common-weight shrinkage is applied in \citet{chetty2014measuring,chetty2014measuring2}, where the weight $w$ is chosen from the best linear predictor of $\theta_i$ given $\bar X_i$. Inference of $\beta$ is then performed by regressing $Y_i$ on $\hat \theta_{i, \text{CW}}$.

We now give two examples of common weight $w$ for shrinkage.

\paragraph{(i) Bias Correction Shrinkage}

Regressing $Y_i$ on $\hat \theta_{i, \text{CW}}$ leads to an OLS estimator of $\beta$ given by
\begin{align*}
  \hat\beta_{\text{CW}} \coloneqq \frac{ \widehat{\Cov} \left(Y_i, \hat\theta_{i,\text{CW}}\right) }{\widehat{\Var}\left(\hat\theta_{i,\text{CW}}\right)} = w^{-1} \underbrace{\frac{ \widehat{\Cov} \left(Y_i, \bar X_i\right) }{\widehat{\Var}\left(\bar X_i\right)}}_{\hat\beta_{\text{FE}}}.
\end{align*}
In effect, regressing $Y_i$ on $\hat \theta_{i,\text{CW}}$ adjusts $\hat\beta_{\text{FE}}$ of regressing $Y_i$ on $\bar X_i$ by a factor of $w^{-1}$. By a proper choice of shrinkage weight $w \approx \widehat \Var\left(\theta_i\right)/\widehat \Var \left(\bar X_i\right)$, the adjustment overlaps with the bias correction method for the EIV problem. Heuristically, in that case
\begin{align*}
  \hat\beta_{\text{CW}} = \frac{\widehat \Var\left(\bar X_i\right)}{\widehat\Var\left(\theta_i\right)} \cdot \frac{\widehat{\Cov}\left(Y_i, \bar X_i\right)}{\widehat{\Var}\left(\bar X_i\right)} = \frac{\widehat{\Cov}\left(Y_i, \bar X_i\right)}{\widehat{\Var}\left(\theta_i\right)} \approx \frac{\widehat{\Cov}\left(Y_i, \theta_i\right)}{\widehat{\Var}\left(\theta_i\right)}.
\end{align*}
The bias correction shrinkage method with the above $w$ thus has the equivalent property as classical bias correction in EIV problems for downstream regressions.

\paragraph{(ii) IV Shrinkage}

Another choice of $w$ would also result in a shrinkage option that, when used as the regressor, refines the IV estimator for $\beta$. To see this, if we split the measurements for each unit $i$ into two subsets, and treat the subset averages $\bar X_{i,1}$, $\bar X_{i,2}$ as the instrument and endogenous variable respectively, and then we construct an IV estimator:
\begin{align*}
  \hat\beta_{\text{IV}} =\left( \frac{\widehat{\Cov}\left(\bar X_{i,1}, \bar X_{i,2}\right)}{\widehat{\Var}\left(\bar X_{i,1}\right)}  \right)^{-1}  \frac{\widehat{\Cov}\left(Y_i, \bar X_{i,1}\right)}{\widehat{\Var}\left(\bar X_{i,1}\right)}  = \frac{\widehat{\Cov}\left(Y_i, \bar X_{i,1}\right)}{\widehat{\Cov}\left(\bar X_{i,1}, \bar X_{i,2}\right)} \approx \frac{\widehat{\Cov}\left(Y_i, \theta_i\right)}{\widehat{\Var}\left(\theta_i\right)}.
\end{align*}
The IV estimator adjusts the estimator from regressing $Y_i$ on $\bar X_{i,1}$ by a factor. With a proper modification of the factor, it can also adjust the estimator from regressing $Y_i$ on $\bar X_{i}$, which exploits more data.

Because of its connection to them, common-weight shrinkage can perform equivalently to the bias correction method and the IV estimator. It also requires as weak assumptions as theirs. Even though it ignores the heterogeneous signal-to-noise ratio and pools every individual at a uniform ratio, common-weight shrinkage provides valid inference on $\beta$. The following result formalizes this point for the bias correction shrinkage, with the variance estimator for $\theta_i$ the same as in $\hat\beta_\HE$. It is proved in \Cref{sec:appendix-lemmas-nonnormal}.

\begin{proposition}
  \label{prop:CW}
  Suppose \Cref{assum:nonnormal} holds. Then we have
  \begin{align*}
    \sqrt{n} \left(\hat \beta_\CW - \beta\right) \to_d N \left(0, \frac{\E\left[ u_i^2 \left(\theta_i - \E \left[\theta_i\right]\right)^2 \right]}{\left(\Var\left(\theta_i\right)\right)^2}\right).
  \end{align*}
\end{proposition}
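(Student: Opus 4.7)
The plan is to exploit the closed-form simplification $\hat\beta_{\CW} = \widehat{\Cov}(Y_i, \bar X_i)/\hat V$ that emerges under the bias-correction choice $w = \hat V/\widehat{\Var}(\bar X_i)$, which turns the problem into a ratio of sample moments. Writing
\begin{align*}
  \sqrt{n}(\hat\beta_{\CW} - \beta) = \hat V^{-1}\sqrt{n}\bigl[\widehat{\Cov}(Y_i, \bar X_i) - \beta\hat V\bigr],
\end{align*}
the consistency result \Cref{lemma:V_nonnormal} gives $\hat V \to_p \Var(\theta_i)$, so by Slutsky the remaining task is to obtain the limit of the centered numerator.

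Next, substitute the structural equations $Y_i - \bar Y = \beta(\theta_i-\bar\theta) + (u_i-\bar u)$ and $\bar X_i - \bar{\bar X} = (\theta_i-\bar\theta) + (\bar\epsilon_i-\bar{\bar\epsilon})$ into the numerator and expand. The $\beta \cdot \tfrac{1}{n}\sum(\theta_i-\bar\theta)^2$ contribution inside $\widehat{\Cov}(Y_i,\bar X_i)$ cancels the corresponding piece of $\beta\hat V$, while the subtraction $-\beta(n-1)/n^2\sum\hat\sigma_i^2/J_i$ inside $\hat V$ is exactly what absorbs the $\beta\cdot\tfrac{1}{n}\sum(\bar\epsilon_i-\bar{\bar\epsilon})^2$ bias. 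After these cancellations one is left with
\begin{align*}
  \widehat{\Cov}(Y_i,\bar X_i) - \beta\hat V = \frac{1}{n}\sum_{i=1}^n (u_i - \bar u)(\theta_i - \bar\theta) + R_n,
\end{align*}
where $R_n$ gathers the residual cross terms involving $\bar\epsilon_i$ together with the difference between the empirical $\bar\epsilon_i$-variance and its plug-in estimate $\tfrac{n-1}{n^2}\sum\hat\sigma_i^2/J_i$.

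A Lindeberg-Feller CLT applied to the iid leading term (using $\E[u_i\theta_i]=0$ and the fourth-moment bounds from \Cref{assum:nonnormal}) yields $\sqrt{n}\cdot\tfrac{1}{n}\sum(u_i-\bar u)(\theta_i-\bar\theta) \to_d N(0,\E[u_i^2(\theta_i-\E[\theta_i])^2])$, and combining with $\hat V \to_p \Var(\theta_i)$ via Slutsky delivers the claimed limit. The main obstacle is the remainder bound $\sqrt{n}\,R_n = o_p(1)$: the $u_i\bar\epsilon_i$ and $\theta_i\bar\epsilon_i$ cross terms, by the conditional independence in \Cref{assum:nonnormal}.3--.4, have variance of order $\E[\sigma_i^2/J_i] = O(\E[J_i^{-1}])$, which is $O(n^{-1/2})$ by \Cref{assum:nonnormal}.7, while the variance-correction residual has variance of order $\E[\sigma_i^4/J_i^2] = O(\E[J_i^{-2}]) = o(n^{-1/2})$ by \Cref{assum:nonnormal}.8. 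Both rate conditions genuinely do work here: without \Cref{assum:nonnormal}.7 the cross terms would survive at the $\sqrt n$ scale and contribute an asymptotic bias analogous to the $\kappa$ term in \Cref{prop:mean}, and without \Cref{assum:nonnormal}.8 the bias-correction cancellation would leave an $O_p(1)$ residual that destroys the stated Gaussian limit.
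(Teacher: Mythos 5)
Your proposal is correct and follows essentially the same route as the paper's proof: write $\sqrt{n}(\hat\beta_\CW-\beta)=\hat V^{-1}\sqrt{n}[\widehat{\Cov}(Y_i,\bar X_i)-\beta\hat V]$, use \Cref{lemma:V_nonnormal} for the denominator, cancel the $\tfrac{1}{n}\sum\bar\epsilon_i^2$ noise inflation against the $\tfrac{n-1}{n^2}\sum\hat\sigma_i^2/J_i$ correction inside $\hat V$ (the content of \Cref{lemma:sigma_nonnormal}), and apply a CLT to the surviving term $\tfrac{1}{\sqrt n}\sum(\theta_i-\E[\theta_i])u_i$. The only quibble is attributional: since the bias-correction residual has exactly zero mean, its negligibility at the $\sqrt n$ scale needs only $\E[J_i^{-2}]\to 0$ (already implied by \Cref{assum:nonnormal}.7 and $J_i\geq 3$), so \Cref{assum:nonnormal}.8 is not the binding condition for that step in the common-weight case.
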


Therefore, the CW estimator also addresses the inferential problem from the FE baseline. Its primary distinction from the HE estimator is {how} it addresses the problem: the CW approach applies a uniform correction factor (equivalent to bias-correction or IV), while the HE estimator models unit-level heterogeneity. Our focus on individual-weight shrinkage is motivated by its prevalence in empirical applications. The common-weight approach characterizes influential work such as \citet{chetty2014measuring2}. By contrast, a distinct and extensive literature has focused on individualized shrinkage \citep[e.g.,][]{jacob2007parents,kane2008estimating,andrabi2025heterogeneity}. Our analysis thus provides the necessary theoretical foundation for this widely used empirical strategy.

\section{Simulations} \label{sec:sim}

\subsection{Simulation Design}
In the first set of simulations, we focus on the comparison between regressing on the shrinkage estimator $\hat\theta_{i,\HE}$, the sample mean $\bar X_i$, i.e. $\hat\theta_{i,\FE}$, and the true latent $\theta_i$. In each of the $S = 3000$ simulations, we generate the data, run the regression of $Y_i$ on shrunk or unshrunk regressors, and then report the $95\%$ confidence intervals of $\beta$. Finally, for each value of $\beta$ in the grid, we compute the coverage rate across all simulations, i.e., the proportion of simulations in which the value of $\beta$ falls within the $95\%$ confidence interval.

The number of measurements is fixed at $J_i = J = 20$ and the sample size is $n = 1000$. Here the ratio $\hat \kappa \approx 1.58$. We set the true $\theta_i$ to be drawn from the standard normal distribution, and the variance $\sigma_i^2$ drawn from $\chi^2(1)$. In the regression, we set $\alpha = 0$, $\beta = 1$, and draw $u_i$ from a standard normal---thus homoskedastic---distribution.

 In the first setting, we generate the measurement error $\epsilon_{i,j}$ from a normal distribution, with all conditions in \Cref{assum:nonnormal} satisfied. In the second setting, we set the measurement error $\epsilon_{i,j}$ still having the variance $\sigma_i^2$, but following a centered linear transform $\sigma_i\frac{\chi^2(2)-2}{2}$. Note that the ratio $\epsilon_{i,j}/\sigma_i$ satisfies the moment conditions in \Cref{assum:nonnormal}. As $\epsilon_{i,j}$ follows a shifted Gamma distribution, which is non-normal and asymmetric, the purpose is to show that the proposed methods perform well under non-normal measurement error (provided the regularity conditions are satisfied).

In the second set of simulations, we compare bias and coverage rates across different methods, especially on the comparison between the individual-weight method HE and CW. We report the square root scaled MSE of $\hat\beta$, coverage rate, bias, and MSE of $\hat{\bm \theta}$. The design we consider lets the heteroskedastic $\sigma_i^2$ be from a uniform distribution on $\left\{1,10\right\}$, with other configurations the same as the previous case. 

In the third set of simulations,  we again compute the coverage rate for each method. We check the performance with random $J_i$ drawn from $\text{Poisson}(20)$, and with the sample size $n = 1000$. Here $\hat \kappa$ is approximately $1.66$, larger than $1.58$ in the first set of simulations due to convexity. With other parameters unchanged, now we can introduce correlation between $J_i$ and $\sigma_i^2$. In the first setting, we generate $J_i$ and $\sigma_i^2$ independently, ensuring that all conditions in \Cref{assum:nonnormal} are satisfied. In the second setting, we introduce a positive correlation between $J_i$ and $\sigma_i^2$, reflecting possible endogenous choice of more measurements for lower precision, keeping the conditions in \Cref{assum:corr_J} satisfied. 

\subsection{Results}
The first set of coverage results is presented in \Cref{fig:normal_chi}. Coverage for regressing $Y_i$ on $\hat \theta_{i, \HE}$ (the blue line) performs well in both normal and non-normal settings. The coverage rates are close to $95\%$ at the true $\beta$, and the curve is always very close to the infeasible case (the dashed red line) of regressing on the true $\theta_i$. In both cases, regressing on the sample mean $\bar X_i$ suffers from attenuation bias, reflected by a shift to the left in the coverage curve (the orange line), though the spread is approximately correct, which confirms the theory developed in \Cref{prop:mean}.

\begin{figure}[t]
  \centering
  \subfigure[Normal Noise]{
      \includegraphics[width=0.45\textwidth]{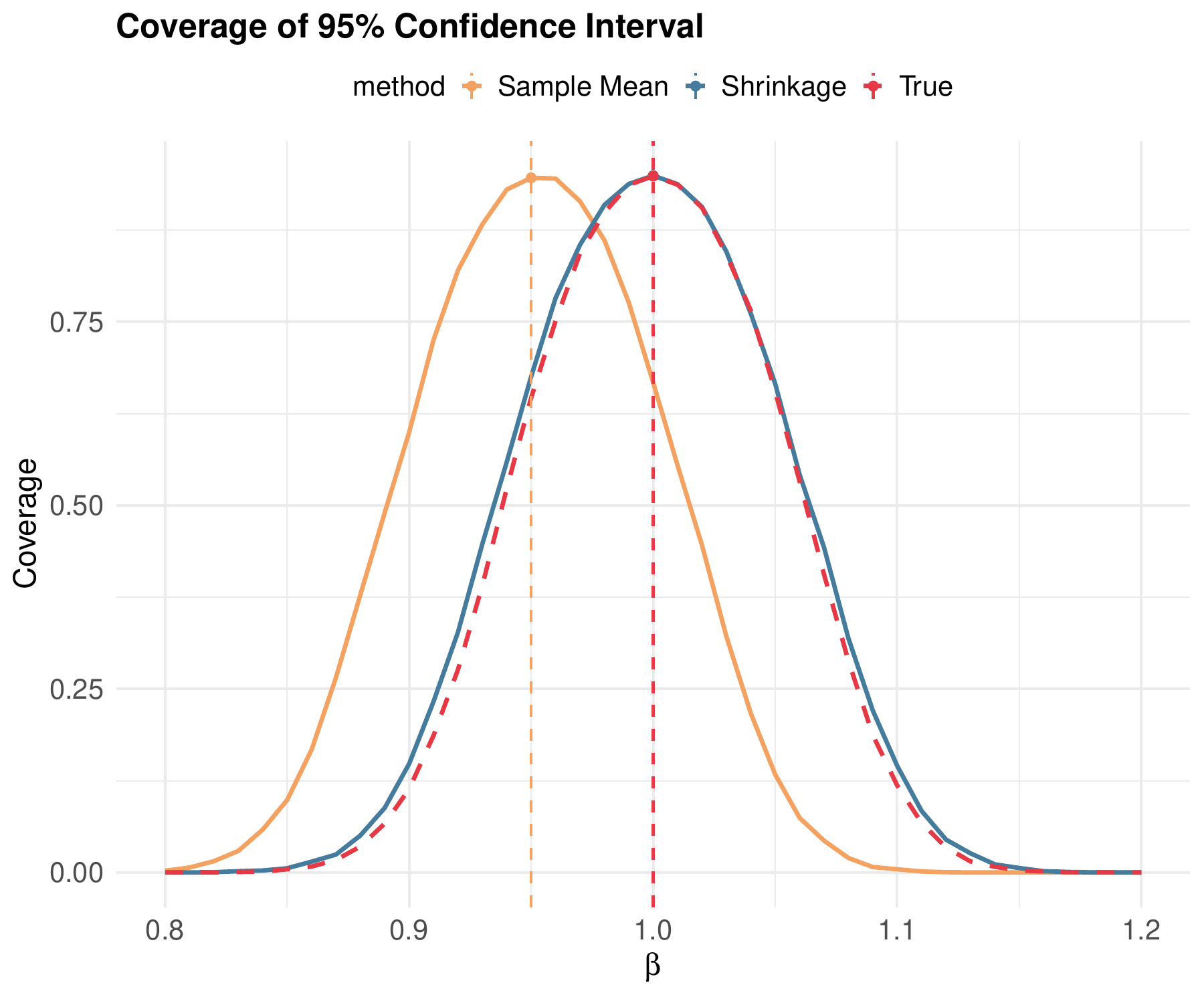}
  }
  \hfill
  \subfigure[Gamma Noise]{
      \includegraphics[width=0.45\textwidth]{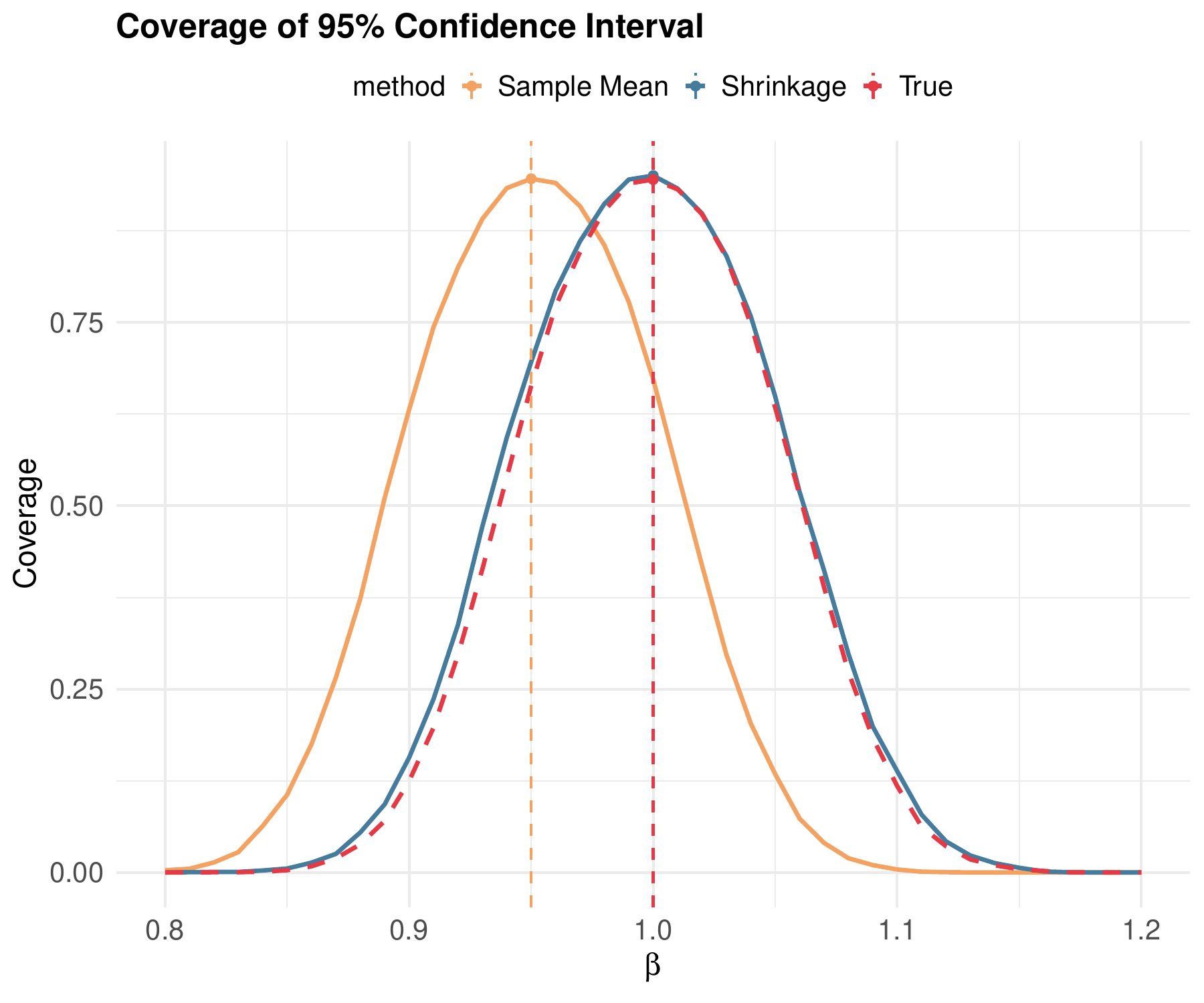}
  }
  \caption*{\parbox{0.9\textwidth}{\textbf{Note}: Each curve represents the proportion of simulations in which the value of $\beta$ on the x-axis falls within the 95\% confidence interval. The dashed red line represents the infeasible case of regressing on the true $\theta_i$. The blue line represents regressing on the shrinkage estimator $\hat\theta_{i,\HE}$ (HE). The orange line represents regressing on the sample mean $\bar X_i$ (FE).}}

  \caption{Coverage Rates Under Different Noise Distributions}
  \label{fig:normal_chi}
  
\end{figure}

The second set of results is reported in \Cref{tab:HE_CW}. The results indicate that regressing $Y_i$ on $\hat\theta_{i,\HE}$ has a smaller MSE of $\beta$, higher coverage rate, and smaller bias. Also, $\hat\theta_{i,\HE}$ has smaller MSE of $\bm\theta$ than $\hat\theta_{i,\CW}$ when the ratio $\sqrt{n}\E\left[J_i^{-1}\right]$ is reasonable. When $n$ and $J$ grow large, the performance of $\hat\theta_{i,\CW}$ and $\hat\theta_{i,\HE}$ becomes close, but $\hat\theta_{i,\HE}$ always dominates and is even more preferable when the sample size is relatively small. As discussed in \Cref{sec:CW}, we know that $\hat\beta_{\CW}$ is first-order asymptotically equivalent to $\hat\beta_\HE$. Combined with the simulation results, we can see that $\hat\beta_\HE$ can achieve better performance in finite samples.

\begin{table}[t] 
  \centering
  \caption{Method Comparison Across Scenarios}
  \label{tab:HE_CW}
  \resizebox{\textwidth}{!}{
  \begin{tabular}{llcccc} 
  \toprule
  \textbf{Scenario} & \textbf{Method} & $\sqrt{n \times \textbf{MSE($\beta$)}}$ & \textbf{Coverage Rate (\%)} & \textbf{Bias} &\textbf{MSE($\bm\theta$)} \\
  \midrule
  n=50, J=10 & True $\theta_i$ & 1.030 & 94.73 & 0.116 & 0\\
  $\sqrt{n}/J\approx$ 0.71          & HE & 1.497 & 91.90 & 0.166 & 0.322 \\
             & CW & 2.335 & 87.83 & 0.225 & 0.374\\
             & FE & 2.690 & 27.67 & 0.354 & 0.552\\
  \midrule
  n=225, J=10 & True $\theta_i$ & 1.016 & 94.17 & 0.054 & 0\\
  $\sqrt{n}/J\approx$ 1.5     & HE & 1.550 & 90.27 & 0.083 & 0.313\\
             & CW & 1.856 & 86.13 & 0.099 & 0.359\\
             & FE & 5.411 & 0.03 & 0.354 & 0.550\\
  \midrule
  n=1000, J=20 & True $\theta_i$ & 1.009 & 95.13 & 0.025 & 0\\
  $\sqrt{n}/J\approx$ 1.58           & HE & 1.298 & 93.13 & 0.033 & 0.195\\
             & CW & 1.392 & 91.33 & 0.035 & 0.216\\
             & FE & 6.930 & 0.00 & 0.217 & 0.275 \\
  \bottomrule
  \end{tabular}
  }
  \vspace{0.5cm}
  \caption*{\parbox{0.9\textwidth}{\textbf{Note}: This table compares the performance of different estimation methods across various scenarios, focusing on mean squared error (MSE), coverage rate, and bias. "True $\theta_i$" represents the benchmark case where the true values of $\theta_i$ are known. HE denotes the heteroskedastic estimator, while CW refers to the common-weight estimator. FE represents the fixed-effects estimator. The square root of scaled MSE of $\beta$ is computed as $\sqrt{n \times \text{MSE}(\beta)}$. Coverage rates are reported as percentages, and bias refers to the absolute deviation of the estimated $\beta$ from its true value. The MSE of $\bm\theta$ measures the estimation error for individual effects. A larger $\sqrt{n}/J$ indicates a higher ratio of measurement error to sampling error.}}
\end{table}

The coverage results for random $J_i$ are presented in \Cref{fig:J_sigma}. We mainly focus on the curves for $\hat\beta_\HE$ (the blue line) and $\hat\beta_\HO$ (the purple line). We can see that the $\hat\beta_\HE$ delivers valid coverage in both settings, close to the infeasible case (the dashed red line). Instead, $\hat\beta_\HO$ only delivers valid coverage in the first setting. When $J_i$ and $\sigma_i^2$ are correlated, the asymptotic distribution of $\hat\beta_{\text{HO}}$ is centered away from the true beta, reflected by a shift to the right in the coverage plot for $\hat\beta_{\text{HO}}$. Similar to before, regressing on the sample mean $\bar X_i$ results in a leftward biased coverage curve. As discussed in \Cref{sec:HO}, the asymptotic bias in $\hat\beta_\HO$ is distinct from classical EIV attenuation bias, as it shifts the estimate in the opposite direction and causes amplification.

\begin{figure}[t]
  \centering
  \subfigure[$J_i \ind \sigma_i^2$]{
      \includegraphics[width=0.45\textwidth]{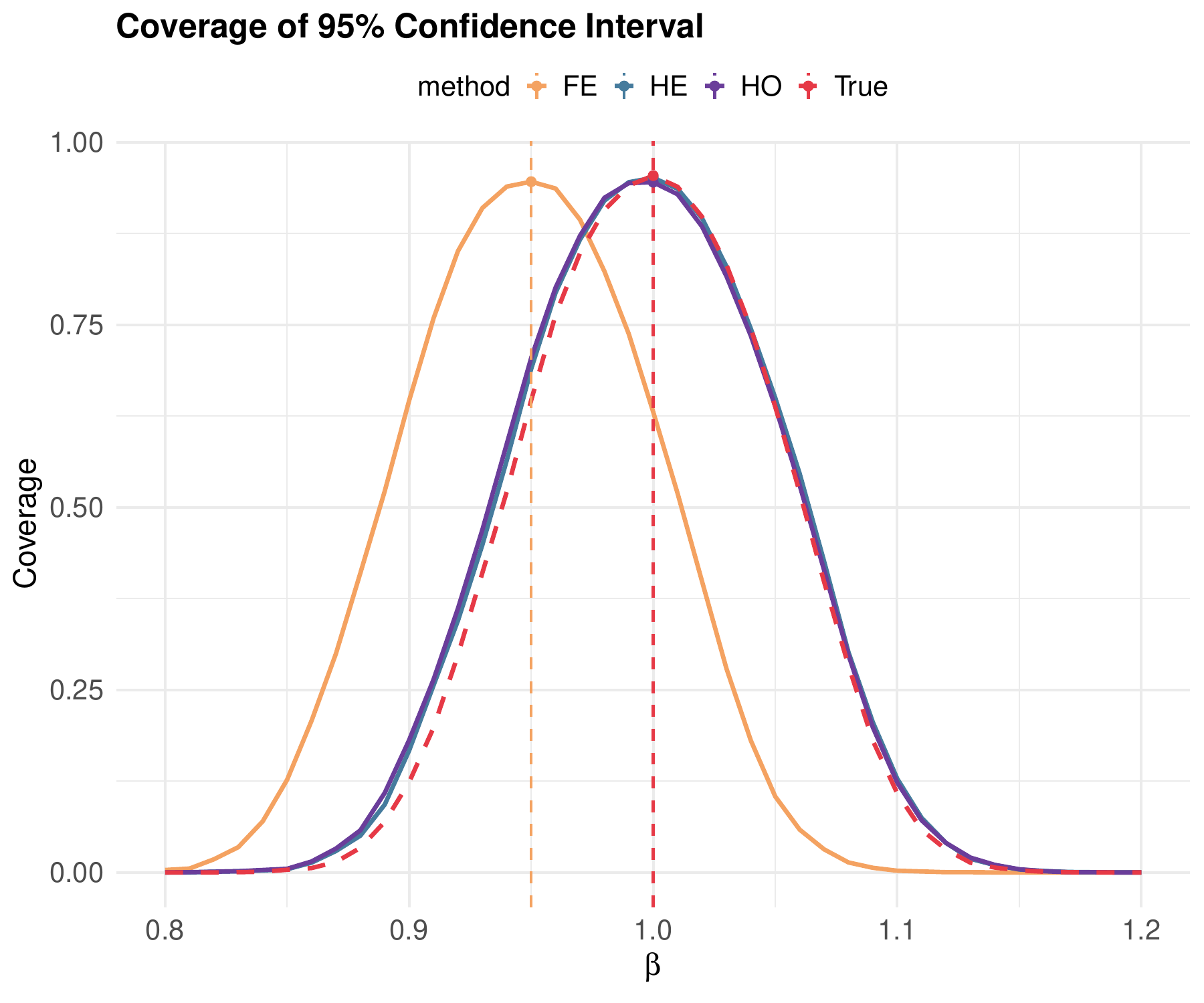}
  }
  \hfill
  \subfigure[$J_i$ correlated with $\sigma_i^2$]{
      \includegraphics[width=0.45\textwidth]{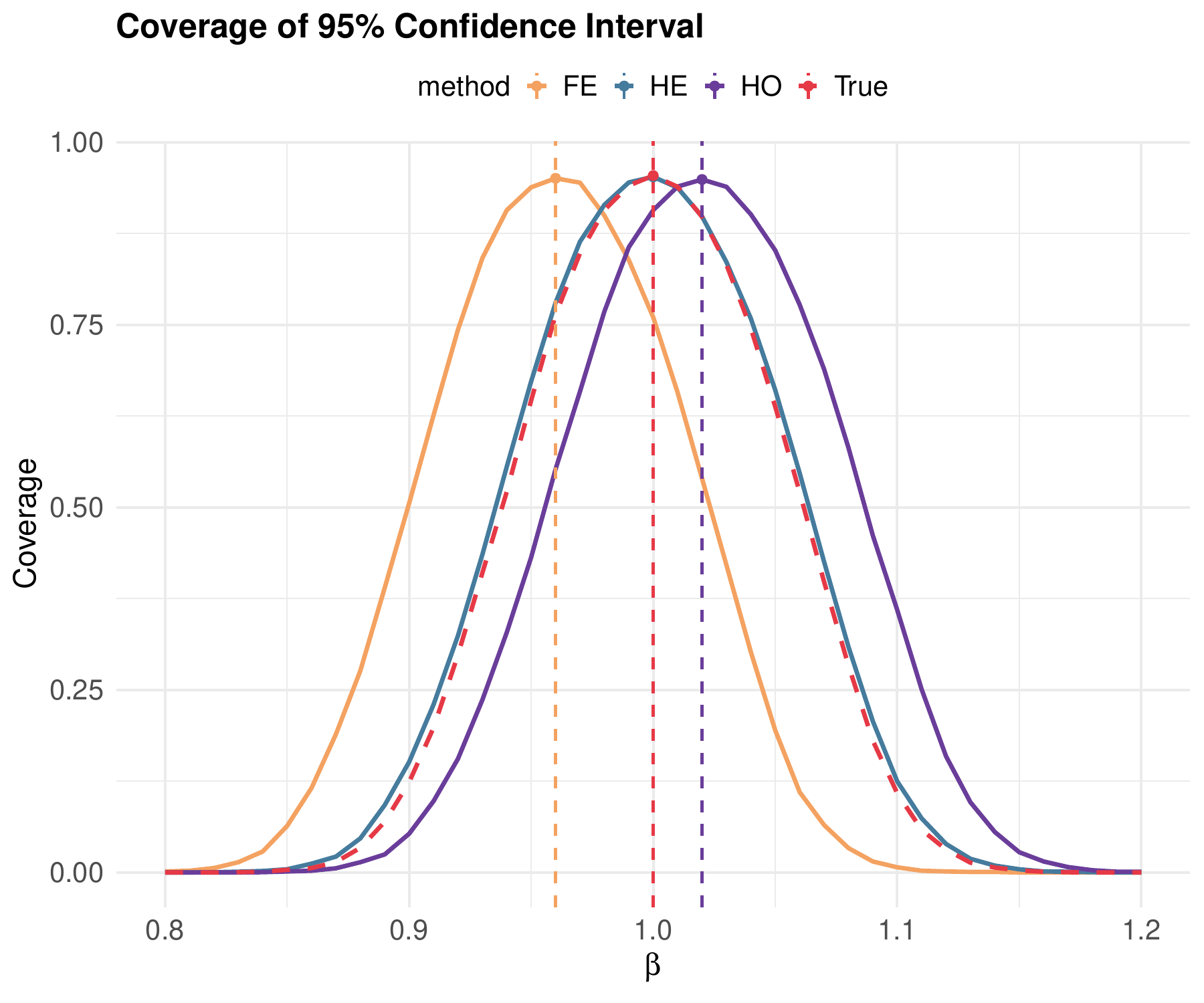}
  }
  \caption*{\parbox{0.9\textwidth}{\textbf{Note}: Each curve represents the proportion of simulations in which the value of $\beta$ on the x-axis falls within the 95\% confidence interval. The dashed red line represents the infeasible case of regressing on the true $\theta_i$. The blue line represents the method of regressing on the heteroskedastic individual-weight shrinkage estimator $\hat\theta_{i,\HE}$ (HE). The purple line represents the regressing on the homoskedastic individual-weight shrinkage estimator $\hat\theta_{i,\text{HO}}$ (HO). The orange line represents regressing on the sample mean $\bar X_i$ (FE).}}
  \caption{Coverage Rates Under Different Dependence}
  \label{fig:J_sigma}
\end{figure}

Overall, the simulation results confirm the theoretical findings in \Cref{sec:setup} and support \Cref{assum:nonnormal} regarding the nonnormality of $\epsilon_{i,j}$. Additionally, they validate the theoretical framework in \Cref{sec:setup} concerning the dependence of $J_i$ and $\sigma_i^2$. The estimator $\hat\beta_\HE$ works well in both normal and non-normal settings, and is robust to the correlation between $J_i$ and $\sigma_i^2$. It outperforms $\hat\beta_\FE$, $\hat\beta_\CW$ and $\hat\beta_\HO$, and is close to the infeasible best case.

\section{Empirical Application: Firm Discrimination} \label{sec:empirical}
In this section, we illustrate the use of regression on heteroskedastic individual-weight shrinkage estimators in the context of \citet{kline2022systemic}, which studies the
extent to which large U.S. employers systemically discriminate job applicants based on race. Their study utilizes correspondence audits, where fictitious resumes with randomized racial identifiers are sent to employers to measure differences in callback rates. The racial contact gap, defined as the difference in callback probabilities between racial groups, serves as the primary latent variable, analogous to value-added in our setting.

\subsection{Data}
We use the panel dataset in \citet{kline2022systemic} from an experiment that sends
fictitious applications to jobs posted by 108 of the largest U.S. employers. For each firm in each wave, about 25 entry-level vacancies were sampled and, for each vacancy, 8 job applications with randomly assigned
characteristics were sent to the employer. Sampling was organized in 5 waves. Focusing on firms sampled in all waves yields a balanced panel of $n = 70$
firms over 5 waves. Applications were sent in pairs, one randomly assigned a distinctively White name and the other a distinctively Black name. The primary outcome is whether the employer attempted to contact
the applicant within 30 days of applying. The racial contact gap is defined as the firm-level
difference between the contact rate (the ratio of the number of contacts to the number of received
applications) for White and that for Black applications. We follow the model similar to \citet[p.~41]{kline2022systemic}, who assumes that the racial contact gap is given by
\begin{align*}
  \bar{X}_i = \theta_i + \bar{\epsilon}_i, \quad \bar{\epsilon}_i \sim N\left(0, \frac{\sigma_i^2}{J_i} \right).
\end{align*}  
They justify the normality of $\bar \epsilon_{i}$ from the central limit theorem approximation with large numbers of measurements.

\subsection{Estimation}
In our analysis, we estimate the predictive effect of callback probability in wave \( t \), for \( t = 1,2,3 \), on callback probability in wave \( t+2 \). Given the model setup, we expect the regression coefficient to be close to 1. For each wave \( t \), we first apply shrinkage estimation, and subsequently regress on these estimates. The shrinkage estimators constructed by \citet{kline2020leave} are nonlinear shrinkage based on nonparametric empirical Bayes posterior means, while we focus on the performance of linear shrinkage estimators as regressors. We compare the performance of heteroskedastic individual-weight shrinkage estimator $\hat\beta_\HE$ with the fixed effect estimator $\hat\beta_\FE$ across the three waves. Since the discrimination gap is computed from pairs of job applications, we have \( J = 100 \), resulting in a small ratio of measurement error to sampling error, \( \sqrt{n}/J = 0.08 \). 

The results presented in \Cref{tab:regression_results} indicate that regression on the shrinkage estimates (\( \hat{\theta}_{i,\HE} \)) yields coefficients closer to 1. Despite a small \( \sqrt{n}/J \), $\hat\beta_\FE$ still exhibits attenuation bias. In terms of inference, $\hat\beta_\HE$ robustly rejects the null hypothesis of no predictive effect, whereas $\hat\beta_\FE$ fails to reject this null hypothesis for waves 1 and 2. These findings demonstrate that $\hat\beta_\HE$ improves the validity of inference for the regression coefficient.

\begin{table}[t]
  \centering
  \caption{Regression Results for Different Waves Predicting $t+2$}
  \resizebox{\textwidth}{!}{
  \begin{tabular}{lcccc|cccc}
    \toprule
    & \multicolumn{4}{c|}{$\hat\beta_\FE$} & \multicolumn{4}{c}{$\hat\beta_\HE$} \\
    \cmidrule(lr){2-5} \cmidrule(lr){6-9}
    Wave & \(\beta\) & SE & CI & \(p\)-value & \(\beta\) & SE & CI & \(p\)-value \\
    \midrule
    Wave 1 & 0.150 & 0.100 & [-0.047, 0.347] & 0.140 & 0.987 & 0.385 & [0.232, 1.742] & 0.013 \\
    Wave 2 & 0.092 & 0.110 & [-0.124, 0.308] & 0.406 & 0.883 & 0.363 & [0.171, 1.594] & 0.017 \\
    Wave 3 & 0.415 & 0.125 & [0.170, 0.659] & 0.001 & 2.186 & 0.963 & [0.299, 4.073] & 0.026 \\
 
    \bottomrule
  \end{tabular}
  }
  \vspace{0.5cm}
  \caption*{\parbox{0.9\textwidth}{\textbf{Note}: This table presents regression results for firm discrimination in wave $t$ predicting firm discrimination in wave $t+2$. Columns labeled \(\bar{X}_i\) represent regressions using the raw sample mean, while columns labeled \(\hat{\theta}_{i,\HE}\) correspond to regressions using the shrinkage estimates (HE). The coefficients (\(\beta\)) indicate the estimated effect of firm discrimination in wave $t$ on wave $t+2$. The standard error (SE), 95\% confidence interval (CI), and \(p\)-value are also reported, based on the Eicker--Huber--White variance estimator.}}

  \label{tab:regression_results}
\end{table}

\section{Application: School Value-Added in Pakistan} \label{sec:school}

In this section, we illustrate the use of regression on heteroskedastic individual-weight shrinkage estimators in the context of \citet{andrabi2025heterogeneity} about the school value-added in Pakistan. Their study investigates the relation between private school fees and school value-added (SVA), and finds evidence that parents are able to identify and reward school quality.

\subsection{Data and Empirical Setting}

The analysis utilizes the LEAPS project dataset, a rich longitudinal panel of student test scores from rural Punjab, Pakistan. This setting is characterized by the rapid emergence of a private school market, making the estimation of school quality (value-added) and its perceived return (school fees) a central question of interest. We use the school-year level sample from their analysis, which consists of $n=1158$ observations. We also have $\E_n \left[J_i^{-1}\right] = 0.120$. Therefore $\hat\kappa = \sqrt{n} \E_n \left[J_i^{-1}\right] \approx 4.08$.

\subsection{Estimation and Results}

We replicate and extend the primary downstream analysis in \citet{andrabi2025heterogeneity}, which is a regression of private school fees on estimated SVA. The latent variable $\theta_i$ represents the true SVA of a school, and the downstream regression investigates whether schools with higher SVA command higher fees.

We compare two individualized shrinkage estimators. The first is HO, as used in \citet{andrabi2025heterogeneity}. The second is HE, which is fully individualized by allowing for heterogeneity in both $J_i$ and the noise variance $\sigma_i^2$.

The results are presented in \Cref{tab:school_fees} (full sample) and \Cref{tab:school_fees_selected} (a selected sample restricting $20 \leq J_i \leq 80$, as a robustness check under less heterogeneity). The tables compare the HE estimator (right column) and the HO estimator (left column) in both bivariate specifications and specifications including household-level controls (parental education and asset index).

Across all specifications, our results confirm the central economic finding of \citet{andrabi2025heterogeneity}: SVA is a large, positive, and statistically significant predictor of private school fees. In the full-sample specification with controls (\Cref{tab:school_fees}, Row 2), the HE point estimate is 793.01 (significant at the 1\% level), which is comparable to the HO estimate. This comparison provides a valuable robustness check. Since one cannot know a priori if the assumptions required for valid inference with $\hat\beta_{\HO}$ hold, the stability of the coefficients under the more general $\hat\beta_{\HE}$ estimator strengthens the credibility of the original economic conclusion.

\begin{table}[t]
  \centering
  \caption{Regression Results of  Private School Fees on School Value-Added}
  \label{tab:school_fees}
  \resizebox{\textwidth}{!}{
  \begin{tabular}{lcccc|cccc}
  \toprule
  & \multicolumn{4}{c|}{\(\hat\beta_{\HO}\)} & \multicolumn{4}{c}{\(\hat\beta_{\HE}\)} \\
  \cmidrule(lr){2-5} \cmidrule(lr){6-9}
  Controls & \(\beta\) & SE & CI & \(p\)-value & \(\beta\) & SE & CI & \(p\)-value \\
  \midrule
  No 
& 991.37 
& 320.96 
& [362.69, 1{,}620.06] 
& 0.003 
& 1054.13 
& 298.15 
& [469.01, 1{,}639.25] 
& 0.001 \\
Yes 
& 719.86 
& 280.88 
& [169.34, 1{,}270.38] 
& 0.012 
& 793.01 
& 264.05 
& [275.48, 1{,}310.55] 
& 0.004 \\
  \bottomrule
  \end{tabular}
  }
  \vspace{0.5cm}
  \caption*{\parbox{0.9\textwidth}{\textbf{Note}: This table reports coefficients from regressions of private school fees on estimated school value-added using two estimators. Columns labeled \(\hat{\beta}_{\HO}\) use the homoskedastic individual-weight estimator following \citet{andrabi2025heterogeneity}, while columns labeled \(\hat{\beta}_{\HE}\) use the heteroskedastic individual-weight estimator. The rows correspond to specifications without controls and with controls for mean mother education, mean father education, and mean household asset index. For each specification, the table reports the coefficient (\(\beta\)), standard error (SE), 95\% confidence interval (CI), and associated \(p\)-value. Standard errors are clustered at the village level.}}
  
\end{table}

\begin{table}[t]
  \centering
  \caption{Regression Results of Private School Fees on School Value-Added\\ \centering (Selected Sample)}
  \label{tab:school_fees_selected}
  \resizebox{\textwidth}{!}{
  \begin{tabular}{lcccc|cccc}
  \toprule
  & \multicolumn{4}{c|}{\(\hat\beta_{\HO}\)} & \multicolumn{4}{c}{\(\hat\beta_{\HE}\)} \\
  \cmidrule(lr){2-5} \cmidrule(lr){6-9}
  Controls & \(\beta\) & SE & CI & \(p\)-value &
             \(\beta\) & SE & CI & \(p\)-value \\
  \midrule
  No 
  & 1569.70 
  & 451.27 
  & [685.22, 2454.18]
  & 0.000
  & 1622.13
  & 454.32
  & [733.66, 2510.60]
  & 0.000
  \\
  Yes 
  & 1320.18 
  & 384.74 
  & [566.90, 2073.46]
  & 0.000
  & 1358.51
  & 391.16
  & [592.06, 2124.96]
  & 0.000
  \\
  \bottomrule
  \end{tabular}
  }
  
  \vspace{0.3em}
  \caption*{\parbox{0.9\textwidth}{\textbf{Note}: This table reports coefficients from regressions of private school fees on estimated school value-added using two estimators. The sample is restricted to schools with between 20 and 80 students. Columns labeled \(\hat{\beta}_{\HO}\) use the homoskedastic individual-weight estimator following \citet{andrabi2025heterogeneity}, while columns labeled \(\hat{\beta}_{\HE}\) use the heteroskedastic individual-weight estimator. The rows correspond to specifications without controls and with controls for mean mother education, mean father education, and mean household asset index. For each specification, the table reports the coefficient (\(\beta\)), standard error (SE), 95\% confidence interval (CI), and associated \(p\)-value. Standard errors are clustered at the village level.}}
\end{table}

\section{Conclusion} \label{sec:conclusion}
This paper investigates the inferential properties of using individualized shrinkage estimators as covariates in a regression, a widely used but not fully understood empirical practice. We formalize the conditions under which this approach yields valid inference in the downstream regression model. Our central finding is that a correctly specified and fully individualized shrinkage estimator yields an asymptotically efficient estimator of the downstream regression coefficient. Crucially, we show that conventional OLS inference is asymptotically valid, justifying standard empirical practice. This result stands in contrast to the biased inference from regressing on the unshrunk fixed effect estimates (individual means). Similarly, the simpler individual-weight (HO) estimators  which are widely used in practice can fail to deliver valid inference when noise variance is correlated with the number of measurements. Our simulations confirm the predictions of the theory, including the finding that regression on individualized shrinkage estimators that do not properly account for noise heteroskedasticity suffers from amplification bias. We apply our method to data on firm discrimination and school value-added and show that it improves the estimation and inference.

Our analysis provides a formal bridge between the empirical Bayes literature, where shrinkage estimators were developed to improve estimation accuracy of individual effects, and the common empirical practice of using these estimates for downstream inference. The key takeaway for practitioners is that while the plug-in approach can be valid and efficient, its robustness depends critically on the specification of the shrinkage estimator. For applied researchers, our results provide a clear theoretical foundation and a practical guide for obtaining valid inference when using shrinkage estimates as regressors in linear models. Extending this method to nonlinear settings is a natural yet nontrivial direction, which we are studying in ongoing work.

\appendix

\section{Proofs of Main Results}

\subsection{Proofs for FE}

\begin{proof}[Proof of \Cref{prop:mean}]
  Firstly, for simplicity we abbreviate notations and denote $\hat\beta_{\FE}$ as $\hat\beta$, and $\Var\left(\theta_i\right)$ as $V$.
  \begin{align*}
    \sqrt{n}\left(\hat\beta - \beta\right) &=   \left(\frac{1}{n} \sum_{i=1}^{n} \left(\bar X_i - \bar X\right)^2\right)^{-1}\\
    & \quad \left( \beta \frac{1}{\sqrt{n}} \sum_{i=1}^{n} \left(\bar X_i - \bar X\right) \left(\theta_i - \bar \theta\right) - \beta \frac{1}{\sqrt{n}}\sum_{i=1}^{n} \left(\bar X_i - \bar X\right)^2 + \frac{1}{\sqrt{n}} \sum_{i=1}^{n} \left( \bar X_i - \bar X \right)\left(u_i - \bar u\right) \right)\\
    &= \frac{\beta \sqrt{n} T_{1,n} - \beta \sqrt{n}T_{2.n} + \sqrt{n}T_{3,n}}{T_{2,n}}.
  \end{align*}
  Firstly, from \Cref{lemma:properties_nonnormal} and \Cref{lemma:V_nonnormal}, we have for the denominator,
  \begin{align*}
    T_{2,n} &= \hat V + \frac{n-1}{n^2}\sum_{i=1}^{n} \frac{1}{J_i }\hat\sigma_i^2\\
    &= \hat V + O_p(n^{-1/2})\\
    &= V + O_p(n^{-1/2}).
  \end{align*}
  For the numerator terms, by properties from \Cref{lemma:properties_nonnormal},
  \begin{align*}
    \sqrt{n} T_{1,n} &= \frac{1}{\sqrt{n}}\sum_{i=1}^{n} \left(\theta_i - \E \left[\theta_i\right]\right)^2 + \frac{1}{\sqrt{n}} \sum_{i=1}^{n} \bar\epsilon_i \left(\theta_i - \E \left[\theta_i\right]\right) \\
    &\quad - \sqrt{n} \left(\bar\theta - \E \left[\theta_i\right]\right)^2 - \sqrt{n} \left(\bar\theta - \E \left[\theta_i\right]\right)\bar\epsilon\\
    &= \frac{1}{\sqrt{n}}\sum_{i=1}^{n} \left(\theta_i - \E \left[\theta_i\right]\right)^2 + o_p(1),
  \end{align*}
  \begin{align*}
    \sqrt{n} T_{3,n} &= \frac{1}{\sqrt{n}} \sum_{i=1}^{n} \left(\theta_i - \E\left[\theta_i\right]\right) u_i + \frac{1}{\sqrt{n}} \sum_{i=1}^{n} \bar\epsilon_i u_i\\
    &\quad - \sqrt{n} \left(\bar\theta - \E \left[\theta_i\right]\right) \bar u - \sqrt{n} \bar \epsilon \bar u \\
    &= \frac{1}{\sqrt{n}} \sum_{i=1}^{n} \left(\theta_i - \E\left[\theta_i\right]\right) u_i + o_p(1).
  \end{align*}
  Combined with the proof of \Cref{lemma:V_nonnormal}, we have
  \begin{align*}
    \sqrt{n}T_{2,n} &= \frac{1}{\sqrt{n}} \sum_{i=1}^{n}\left[ \left(\theta_i - \E\left[\theta_i\right]\right)^2 + \bar \epsilon_i^2\right] - \sqrt{n}\left( \bar\theta - \E\left[\theta_i\right] \right)^2 - \sqrt{n}\bar\epsilon^2 + \frac{2}{\sqrt{n}} \sum_{i=1}^{n} \left( \theta_i - \E \left[\theta_i\right]\right)\bar\epsilon_i  \\
    &\quad - 2 \sqrt{n}\left(\bar\theta-\E \left[\theta_i\right]\right)\bar\epsilon \\
    &= \frac{1}{\sqrt{n}} \sum_{i=1}^{n}\left[ \left(\theta_i - \E\left[\theta_i\right]\right)^2 + \bar \epsilon_i^2\right] + o_p(1).
  \end{align*}
  Therefore, the numerator is
  \begin{align*}
    &\quad \beta\sqrt{n} T_{1,n} - \beta\sqrt{n} T_{2,n} + \sqrt{n} T_{3,n} \\
    &= -\beta\frac{1}{\sqrt{n}}\sum_{i=1}^{n} \bar \epsilon_i^2 + \frac{1}{\sqrt{n}} \sum_{i=1}^{n} \left(\theta_i - \E\left[\theta_i\right]\right) u_i + o_p(1)\\
    &= \frac{1}{\sqrt{n}} \sum_{i=1}^{n}\left[ \left(\theta_i -\E \left[\theta_i\right]\right) u_i + \beta \frac{1}{J_i}\sigma_i^2-\beta\bar\epsilon_i^2 \right] - \beta \frac{1}{\sqrt{n}} \sum_{i=1}^{n} \frac{1}{J_i} \sigma_i^2 + o_p(1)\\
    &\coloneqq \frac{1}{\sqrt{n}} \sum_{i=1}^{n} \xi_i - \beta \frac{1}{\sqrt{n}} \sum_{i=1}^{n} \frac{1}{J_i} \sigma_i^2 + o_p(1).
  \end{align*}
  Here for the second term, by Chebyshev's inequality, for any $s>0$ we have
  \begin{align*}
    \Pr \left( \sqrt{n} \abs{ \frac{1}{n} \sum_{i=1}^{n} \frac{1}{J_i}\sigma_i^2 - \E \left[\frac{1}{J_i}\sigma_i^2\right] } >s \right) \leq \frac{\E \left[\frac{1}{J_i^2}\right]\E \left[\sigma_i^4\right]}{s^2} \to 0,
  \end{align*}
  and also,
  \begin{align*}
    \sqrt{n}\E \left[ \frac{1}{J_i}\sigma_i^2 \right] = \sqrt{n} \E \left[\frac{1}{J_i}\right]\E \left[\sigma_i^2\right] \to \kappa \E \left[\sigma_i^2\right].
  \end{align*}
  Therefore, the second term is $-\kappa\beta\E \left[ \sigma_i^2\right] + o_p(1)$.

  For $\xi_i$, since
  \begin{align*}
    \E \left[\xi_i\right] = 0,
  \end{align*}
  \begin{align}
    \E \left[ \xi_i^2 \right] &= \E \left[ u_i^2 \left(\theta_i-\E \left[\theta_i\right]\right)^2 \right] + \beta^2\E\left[\left(\frac{1}{J_i}\sigma_i^2\right)^2\right] + \beta^2\E \left[ \bar\epsilon_i^4 \right] - 2 \beta^2\E \left[\frac{1}{J_i}\sigma_i^2 \bar\epsilon_i^2 \right]\nonumber \\
    &= \E \left[ u_i^2 \left(\theta_i-\E \left[\theta_i\right]\right)^2 \right] + o(1) - 2 \beta^2\E\left[ \frac{1}{J_i^2} \sigma_i^4 \right] \nonumber\\
    &= \E \left[ u_i^2 \left(\theta_i-\E \left[\theta_i\right]\right)^2 \right] + o(1), \label{eq:xi_sq_mean}
  \end{align}
  where the last line follows from \Cref{lemma:properties_nonnormal}.

  Therefore, for $\left(n \E \left[ \xi_i^2\right]\right)^{-1/2} \sum_{i=1}^{n} \xi_i$ in the triangular array, by the Lindeberg-Feller theorem (See \citet{ferguson2017course} p.27), because the Lindeberg condition holds below
\begin{align}
   \frac{1}{n\E\left[\xi_i^2\right]}\sum_{i=1}^{n} \E \left\{ \xi_i^2 \indicator\left( \abs{\xi_i}  > s \sqrt{n \E \left[ \xi_i^2 \right]} \right) \right\} &= \frac{1}{\E \left[\xi_i^2\right]} \E \left\{ \xi_i^2 \indicator\left( \abs{\xi_i}  > s \sqrt{n \E \left[ \xi_i^2 \right]} \right) \right\} \nonumber\\
   &\to 0, \; \forall s > 0, \nonumber
\end{align}
which is derived from \eqref{eq:xi_sq_mean} and the dominated convergence theorem, then we have
\begin{align}
  \frac{1}{\sqrt{n \E \left[ \xi_i^2 \right]} } \sum_{i=1}^{n} \xi_i \to_d N(0,1). \nonumber
\end{align}
By \eqref{eq:xi_sq_mean} and Slutsky's theorem, 
\begin{align}
  \frac{1}{\sqrt{n}} \sum_{i=1}^{n} \xi_i - \frac{1}{\sqrt{n}} \sum_{i=1}^{n} \frac{1}{J_i} \sigma_i^2 \to_d N\left(-\kappa\beta\E\left[\sigma_i^2\right],\E \left[ u_i^2 \left(\theta_i-\E \left[\theta_i\right]\right)^2 \right]\right). \nonumber
\end{align}
Then combined with the denominator, by Slutsky's theorem,
\begin{align*}
\sqrt{n} \left( \hat\beta - \beta \right)= \frac{\beta \sqrt{n} T_{1,n} - \beta \sqrt{n} T_{2,n} + \sqrt{n} T_{3,n}}{T_{2,n}}\to_d N\left(-\kappa\beta\frac{\E \left[\sigma_i^2\right]}{V}, \frac{\E \left[ u_i^2 \left(\theta_i-\E \left[\theta_i\right]\right)^2 \right]}{V^2}\right).
\end{align*}

\end{proof}

\subsection{Proofs for Asymptotic Normality and Inference}

\begin{proof}[Proof of \Cref{lemma:beta_c}]
  Firstly, for simplicity we abbreviate notations and denote $\hat\beta_{c,\HE}$ as $\hat\beta_c$, $\hat\theta_{i,c,\HE}$ as $\hat\theta_{i,c}$, and $\Var\left(\theta_i\right)$ as $V$.

  We have   
  \begin{align}
    &\quad \sqrt{n} \left( \hat \beta_{c} - \beta\right) \nonumber \\
    &= \left(\frac{1}{n} \sum_{i=1}^{n} \left( \hat \theta_{i,c} - \bar{\hat \theta}_c\right)^2 \right)^{-1} \nonumber\\
    &\quad \left( \beta \frac{1}{\sqrt{n}} \sum_{i=1}^{n} \left(\hat \theta_{i,c} - \bar{\hat\theta}_c\right)\left( \theta_i - \bar\theta \right) - \beta \frac{1}{\sqrt{n}} \sum_{i=1}^{n} \left(\hat\theta_{i,c} - \bar{\hat\theta}_c\right)^2 + \frac{1}{\sqrt{n}} \sum_{i=1}^{n} \left(\hat\theta_{i,c} - \bar{\hat\theta}_c\right)\left(u_i - \bar{u}\right) \right)  \nonumber\\
    &\coloneqq \frac{\beta \sqrt{n} T_{1,n} -\beta \sqrt{n}T_{2,n} + \sqrt{n}T_{3,n}}{T_{2,n}}.\nonumber
  \end{align}
  Respectively, by properties from \Cref{lemma:properties_nonnormal},
  \begin{align*}
    \sqrt{n}T_{1,n} &=  \frac{1}{\sqrt{n}} \sum_{i=1}^{n} c_i \left[ \left( \theta_i - \bar\theta  \right)^2 + \left( \bar\epsilon_i -\bar \epsilon \right)\left( \theta_i - \bar\theta \right) \right]\\
    &= \frac{1}{\sqrt{n}}\sum_{i=1}^{n} c_i  \left(\theta_i - \E \left[\theta_i\right]\right)^2 +  \frac{1}{\sqrt{n}}\sum_{i=1}^{n}c_i\bar\epsilon_i \left(\theta_i - \E \left[\theta_i\right]\right)\\
    &\quad  - 2\sqrt{n} \left(\bar\theta - \E \left[\theta_i\right]\right)\frac{1}{n} \sum_{i=1}^{n} c_i \left(\theta_i - \E \left[\theta_i\right]\right) +  \sqrt{n}\left( \bar\theta - \E \left[\theta_i\right] \right)^2 \frac{1}{n} \sum_{i=1}^{n} c_i\\
    &\quad - \sqrt{n}\left( \bar\theta - \E \left[\theta_i\right] \right)\frac{1}{n}\sum_{i=1}^{n}c_i \bar\epsilon_i -  \sqrt{n}\bar\epsilon\frac{1}{n}\sum_{i=1}^{n} c_i \left(\theta_i - \E \left[\theta_i\right]\right) +  \sqrt{n} \bar\epsilon \left(\bar\theta - \E \left[\theta_i\right]\right)\frac{1}{n} \sum_{i=1}^{n} c_i \\
    &=  \frac{1}{\sqrt{n}}\sum_{i=1}^{n} c_i  \left(\theta_i - \E \left[\theta_i\right]\right)^2 + o_p(1),
  \end{align*}
  \begin{align*}
    T_{2,n} &= \frac{1}{n} \sum_{i=1}^{n}  \left( c_i \left(\theta_i -\bar\theta\right) + c_i \left(\bar\epsilon_i -  \bar\epsilon\right) \right)^2 +  \left[ \frac{1}{n} \sum_{i=1}^{n} \left(c_i \left(\theta_i - \bar\theta + \bar\epsilon_i - \bar\epsilon\right)\right) \right]^2\\
    &= \frac{1}{n} \sum_{i=1}^{n} c_i^2 \left[ \left(\theta_i - \bar \theta\right)^2 + \left(\bar\epsilon_i - \bar\epsilon\right)^2 + 2 \left(\theta_i - \bar \theta\right)\left(\bar\epsilon_i-\bar\epsilon \right)\right]\\
    &\quad + \left[ \frac{1}{n} \sum_{i=1}^{n}c_i \theta_i - \bar\theta \bar{c} + \frac{1}{n}\sum_{i=1}^{n} c_i \bar\epsilon_i - \bar\epsilon\bar c \right]^2\\
    &= \frac{1}{n}\sum_{i=1}^{n} c_i^2 \left[ \left(\theta_i - \E \left[\theta_i\right]\right)^2 + \bar\epsilon_i^2 + 2 \left( \theta_i - \E \left[\theta_i\right] \right)\bar\epsilon_i \right]\\
    &\quad - 2\left(\bar\theta - \E \left[\theta_i\right]\right)\frac{1}{n} \sum_{i=1}^{n} c_i^2 \left(\theta_i - \E \left[\theta_i\right]\right) +  \left( \bar\theta - \E \left[\theta_i\right] \right)^2 \frac{1}{n} \sum_{i=1}^{n} c_i^2\\
    &\quad - 2\bar\epsilon\frac{1}{n}\sum_{i=1}^{n}c_i^2 \bar\epsilon_i + \bar\epsilon^2\frac{1}{n}\sum_{i=1}^{n}c_i^2  \\
    &\quad - 2\left( \bar\theta - \E \left[\theta_i\right] \right)\frac{1}{n}\sum_{i=1}^{n}c_i^2 \bar\epsilon_i - 2 \bar\epsilon\frac{1}{n}\sum_{i=1}^{n} c_i^2 \left(\theta_i - \E \left[\theta_i\right]\right) +  2\bar\epsilon \left(\bar\theta - \E \left[\theta_i\right]\right)\frac{1}{n} \sum_{i=1}^{n} c_i^2 \\
    &\quad + \left[ \E \left[\theta_i\right] + O_p(n^{-1/2}) - \left(\E \left[\theta_i\right] + O_p(n^{-1/2})\right) \left( 1 + O_p(n^{-1/2}) \right) + o_p(n^{-1/2})\right]^2\\
    &= \frac{1}{n}\sum_{i=1}^{n} c_i^2 \left[\left(\theta_i - \E \left[\theta_i\right]\right)^2 + \bar\epsilon_i^2\right]  + o_p(n^{-1/2})\\
    &\quad - 2 O_p(n^{-1/2}) O_p(n^{-1/2}) + O_p(n^{-1})\left[1 + O_p(n^{-1/2})\right]\\
    &\quad - 2 o_p(n^{-1/2})o_p(n^{-1/2}) + o_p(n^{-1}) \left[1 + O_p(n^{-1/2})\right]\\
    &\quad - 2 O_p(n^{-1/2}) o_p(n^{-1/2}) - 2 o_p(n^{-1/2}) O_p(n^{-1/2}) + 2 o_p(n^{-1/2}) O_p(n^{-1/2}) \left[1 + O_p(n^{-1/2})\right]\\
    &\quad + O_p(n^{-1})\\
    &= \frac{1}{n}\sum_{i=1}^{n} c_i^2 \left[\left(\theta_i - \E \left[\theta_i\right]\right)^2 +\bar\epsilon_i^2\right]+ o_p(n^{-1/2}).
  \end{align*}
  \begin{align*}
    \sqrt{n}T_{3,n} &= \frac{1}{\sqrt{n}}\sum_{i=1}^{n}c_i \left[ \left(\theta_i - \bar\theta \right) \left(u_i -\bar u\right) + \left(\bar\epsilon_i -\bar\epsilon\right)\left(u_i - \bar u \right) \right]\\
    &\quad = \frac{1}{\sqrt{n}}\sum_{i=1}^{n}c_i \left[ \left(\theta_i - \E\left[\theta_i\right] \right) u_i\right] + \frac{1}{\sqrt{n}} \sum_{i=1}^{n} c_i \bar\epsilon_i u_i\\
    &\quad - \sqrt{n} \left(\bar\theta - \E \left[\theta_i\right]\right)\frac{1}{n}\sum_{i=1}^{n} c_i u_i - \sqrt{n} \bar u \frac{1}{n} \sum_{i=1}^{n}c_i \left(\theta_i - \E \left[\theta_i\right]\right) + \sqrt{n} \bar u \left(\bar\theta - \E \left[\theta_i\right]\right)\frac{1}{n} \sum_{i=1}^{n} c_i\\
    &\quad - \sqrt{n} \bar\epsilon \frac{1}{n} \sum_{i=1}^{n} c_i u_i - \sqrt{n} \bar u \frac{1}{n} \sum_{i=1}^{n} c_i \bar\epsilon_i + \sqrt{n} \bar u \bar\epsilon \frac{1}{n} \sum_{i=1}^{n} c_i\\
    &= \frac{1}{\sqrt{n}}\sum_{i=1}^{n}c_i \left[ \left(\theta_i - \E\left[\theta_i\right] \right) u_i\right] + o_p(1).
  \end{align*}
  Therefore, for the denominator we have
  \begin{align*}
    T_{2,n} &= \frac{1}{n}\sum_{i=1}^{n} c_i^2 \left[\left(\theta_i - \E \left[\theta_i\right]\right)^2 +\bar\epsilon_i^2\right] + o_p(n^{-1/2})\\
    &= \frac{1}{n}\sum_{i=1}^{n} c_i^2 \left(\theta_i - \E \left[\theta_i\right]\right)^2  + O_p(n^{-1/2})\\
    &= V + o_p(1).
  \end{align*}

  Also, combined with \Cref{lemma:sigma_nonnormal}, the numerator is
  \begin{align*}
    &\quad \beta\sqrt{n} T_{1,n} - \beta \sqrt{n}T_{2,n} + \sqrt{n}T_{3,n}\\
    &=  \frac{1}{\sqrt{n}}  \sum_{i=1}^{n}\left[\beta c_i\left(\theta_i - \E \left[\theta_i\right]\right)^2 - \beta c_i^2\left[ \left(\theta_i - \E \left[\theta_i\right]\right)^2 + \bar\epsilon_i^2\right] +  c_i \left(\theta_i - \E \left[\theta_i\right]\right)u_i\right] + o_p(1)\\
    &=  \frac{1}{\sqrt{n}}  \sum_{i=1}^{n}\left[\beta c_i\left(\theta_i - \E \left[\theta_i\right]\right)^2 - \beta c_i^2\left[ \left(\theta_i - \E \left[\theta_i\right]\right)^2 + \frac{1}{J_i}\hat\sigma_i^2\right] +  c_i \left(\theta_i - \E \left[\theta_i\right]\right)u_i\right] + o_p(1)\\
    &\coloneqq \frac{1}{\sqrt{n}} \sum_{i=1}^{n} \xi_i + o_p(1).
  \end{align*}
  For $\xi_i$, since
  \begin{align}
    \E \left[\xi_i\right] &= \beta \E\left[ c_i \left( \theta_i - \E \left[\theta_i\right]\right)^2 \right] - \beta \E \left[ c_i^2 \left( \left(\theta_i - \E \left[\theta_i\right]\right)^2 + \frac{1}{J_i}\hat\sigma_i^2\right) \right] \nonumber\\
    &= \beta V \E \left( c_i\right) - \beta \E \left[ c_i^2 \left(V + \frac{1}{J_i}\hat\sigma_i^2  \right)  \right] \nonumber\\
    &= \beta V \E \left( c_i\right) - \beta \E \left[ V c_i \right] \nonumber\\
    &= 0, \nonumber
  \end{align} 
  \begin{align}
    \E \left[ \xi_i^2 \right]  &= \beta^2  \E \left[\left( c_i - c_i^2 \right)^2 \right] \E \left[ \left(\theta_i - \E \left(\theta_i\right)\right)^4 \right] + \beta^2  \E \left[ \frac{1}{J_i^2}c_i^4 \hat\sigma_i^4\right] \nonumber \\
    &\quad  + \E \left[ c_i^2\right] \E\left[ u_i^2\left( \theta - \E \left[\theta_i\right] \right)^2 \right] \nonumber\\
    &\quad - 2\beta^2 V   \E \left[\frac{1}{J_i} \hat\sigma_i^2 c_i^2 \left( c_i - c_i^2\right)  \right] \nonumber\\
    &= o(1) +  \beta^2  \E \left[ c_i^4 \left(\frac{1}{J_i} \hat\sigma_i^2 \right)^2\right] \nonumber\\
    &\quad +  \E \left[ c_i^2\right] \E\left[ u_i^2\left( \theta - \E \left[\theta_i\right] \right)^2 \right] - 2\beta^2 V   \E \left[\frac{1}{J_i} \hat\sigma_i^2 c_i^2 \left( c_i - c_i^2\right)  \right] \nonumber \\
    &= \E\left[ u_i^2\left( \theta - \E \left[\theta_i\right] \right)^2 \right]  + o(1), \label{eq:xi_sq_2}
  \end{align}
  where the last line follows from \Cref{lemma:properties_nonnormal} and also $0 < c_i < 1$.

Therefore, for $\left(n \E \left[ \xi_i^2\right]\right)^{-1/2} \sum_{i=1}^{n} \xi_i$ in the triangular array, by the Lindeberg-Feller theorem (See \citet{ferguson2017course} p.27), because the Lindeberg condition holds below
\begin{align}
   \frac{1}{n\E\left[\xi_i^2\right]}\sum_{i=1}^{n} \E \left\{ \xi_i^2 \indicator\left( \left|\xi_i \right| > s \sqrt{n \E \left[ \xi_i^2 \right]} \right) \right\} &= \frac{1}{\E \left[\xi_i^2\right]} \E \left\{ \xi_i^2 \indicator\left( \left|\xi_i \right| > s \sqrt{n \E \left[ \xi_i^2 \right]} \right) \right\} \nonumber\\
   &\to 0, \; \forall s > 0, \nonumber
\end{align}
which is derived from \eqref{eq:xi_sq_2} and the dominated convergence theorem, then we have
\begin{align}
  \frac{1}{\sqrt{n \E \left[ \xi_i^2 \right]} } \sum_{i=1}^{n} \xi_i \to_d N(0,1). \nonumber
\end{align}
By \eqref{eq:xi_sq_2} and Slutsky's theorem, 
\begin{align}
  \frac{1}{\sqrt{n \sigma^2_u V}} \sum_{i=1}^{n} \xi_i \to_d N(0,1). \nonumber
\end{align}
Therefore,
\begin{align}
  \beta \sqrt{n}T_{1,n} - \beta\sqrt{n} T_{2,n} + \sqrt{n} T_{3,n} \to_d N\left(0, \E\left[ u_i^2\left( \theta - \E \left[\theta_i\right] \right)^2 \right]\right). \nonumber
\end{align}
Then combined with the denominator, by Slutsky's theorem,
\begin{align*}
\sqrt{n} \left( \hat \beta_c - \beta \right)= \frac{\beta \sqrt{n} T_{1,n} - \beta \sqrt{n} T_{2,n} + \sqrt{n} T_{3,n}}{T_{2,n}}\to_d N\left(0, \frac{\E\left[ u_i^2\left( \theta - \E \left[\theta_i\right] \right)^2 \right]}{V^2}\right).
\end{align*}
\end{proof}

\begin{proof}[Proof of \Cref{prop:beta}]
  Firstly, for simplicity we abbreviate notations and denote $\hat\beta_{c,\HE}$ as $\hat\beta_c$, $\hat\beta_\HE$ as $\hat\beta$, $\hat\theta_{i,c,\HE}$ as $\hat\theta_{i,c}$, $\hat\theta_{i,\HE}$ as $\hat\theta_i$ and $\Var\left(\theta_i\right)$ as $V$.
  We then prove the asymptotics by showing
\begin{align*}
  \sqrt{n} \left( \hat\beta - \beta\right) \to _p \sqrt{n}\left(\hat\beta_c - \beta\right).
\end{align*}
By taking the difference, it's equivalent to
\begin{align*}
  \sqrt{n}\hat\beta - \sqrt{n}\hat\beta_c =o_p(1).
\end{align*}
It suffices to show for the numerator part
\begin{align*}
  &\quad\frac{1}{\sqrt{n}} \sum_{i=1}^{n} \left( \hat\theta_i - \bar{\hat\theta}\right) \left( Y_i - \bar Y\right) \to_p \frac{1}{\sqrt{n}} \sum_{i=1}^{n} \left( \hat\theta_{i,c} - \bar{\hat\theta}_c\right) \left( Y_i - \bar Y\right),
\end{align*}
and then show for the denominator part
\begin{align*}
  &\quad \frac{1}{n} \sum_{i=1}^{n} \left( \hat\theta_i - \bar{\hat\theta}\right)^2 \to_p \frac{1}{n} \sum_{i=1}^{n} \left( \hat\theta_{i,c} - \bar{\hat\theta}_c\right)^2.
\end{align*}
First we show the numerator part.
Since $\hat\theta_i$ and $\hat\theta_{i,c}$ can be viewed as function values given $t = \hat V$ and $t=V$ for the function of $t$:
\begin{align*}
  \bar{X} + \frac{t}{\frac{1}{J_i} \hat\sigma_i^2 + t}\left(\bar X_i - \bar X\right),
\end{align*}
the first order derivative of which is denoted as
\begin{align*}
  A_i(t) &\coloneqq \frac{\frac{1}{J_i}\hat\sigma_i^2}{\left(\frac{1}{J_i} \hat\sigma_i^2 + t\right)^2} \left(\bar X_i - \bar X\right),
\end{align*}
for simplicity of notations, we denote the whole numerator part as a function $f_n(t)$, and what we want to show is
\begin{align*}
  f_n(\hat{V}) -f_n(V) = o_p(1).
\end{align*}
By the mean value expansion theorem, for $\tilde{V}$ such that $\left|\tilde{V} -V \right| \leq \left|\hat V - V \right|$,
\begin{align*}
   \left|  f_n\left(\hat V\right) - f_n\left(V\right) \right| &= \left|\hat V - V\right| \left| f^\prime_n \left(\tilde V\right) \right|\\
   &\leq \sqrt{n} \left| \hat V - V \right| \frac{1}{n} \sum_{i=1}^{n} \frac{\frac{1}{J_i} \hat\sigma_i^2}{\left(\frac{1}{J_i}\hat\sigma_i^2 + \tilde{V}\right)^2}  \left|  \left(\bar X_i - \bar{X}\right) \left(Y_i -\bar Y\right) \right|
\end{align*}

For any $s >0$, there exists $\delta > 0$ such that $V - \delta > 0$, and we have
\begin{align*}
  &\quad \Pr \left( \left|  f_n\left(\hat V\right) - f_n\left(V\right) \right| > s  \right)\\
  &\leq \Pr \left( \left|  f_n\left(\hat V\right) - f_n\left(V\right) \right| > s, \left|\hat V - V \right| \leq \delta \right) + \Pr \left( \left| \hat V - V \right| > \delta  \right)\\
  &\leq \Pr \left( \sqrt{n} \left| \hat V - V \right| \frac{1}{n} \sum_{i=1}^{n} \frac{\frac{1}{J_i} \hat\sigma_i^2}{\left(\frac{1}{J_i}\hat\sigma_i^2 + \tilde{V}\right)^2}  \left|  \left(\bar X_i - \bar{X}\right) \left(Y_i -\bar Y\right) \right| > s, \left|\hat V - V \right| \leq \delta  \right)\\
  &\quad + \Pr \left( \left| \hat V - V \right| > \delta  \right)\\
  &\leq \Pr \left( \sqrt{n} \left| \hat V - V \right| \frac{1}{n} \sum_{i=1}^{n} \frac{\frac{1}{J_i} \hat\sigma_i^2}{\left(\frac{1}{J_i}\hat\sigma_i^2 + V - \delta\right)^2}  \left|  \left(\bar X_i - \bar{X}\right) \left(Y_i -\bar Y\right) \right| > s, \left|\hat V - V \right| \leq \delta   \right)\\
  &\quad + \Pr \left( \left| \hat V - V \right| > \delta  \right)\\
  &\leq \Pr \left( \sqrt{n} \left| \hat V - V \right| \frac{1}{n} \sum_{i=1}^{n} \frac{\frac{1}{J_i} \hat\sigma_i^2}{\left(\frac{1}{J_i}\hat\sigma_i^2 + V - \delta\right)^2}  \left|  \left(\bar X_i - \bar{X}\right) \left(Y_i -\bar Y\right) \right| > s \right)\\
  &\quad + \Pr \left( \left| \hat V - V \right| > \delta  \right)\\
\end{align*}
Then we only need to show
\begin{align*}
  \frac{1}{n} \sum_{i=1}^{n} \frac{\frac{1}{J_i} \hat\sigma_i^2}{\left(\frac{1}{J_i}\hat\sigma_i^2 + V - \delta\right)^2}  \left|  \left(\bar X_i - \bar{X}\right) \left(Y_i -\bar Y\right) \right| = o_p(1).
\end{align*}
This can be proved if for any integers $k_1, k_2 \in \left\{ 0,1\right\}$,
\begin{align*}
  \frac{1}{n} \sum_{i=1}^{n} \frac{\frac{1}{J_i} \hat\sigma_i^2}{\left(\frac{1}{J_i}\hat\sigma_i^2 + V - \delta\right)^2}  \left| \bar X_i\right|^{k_1} \left|Y_i \right|^{k_2} = o_p(1).
\end{align*}
For any $s >0$, by Chebyshev's inequality,
\begin{align*}
  &\quad \Pr \left(\left|  \frac{1}{n} \sum_{i=1}^{n} \frac{\frac{1}{J_i} \hat\sigma_i^2}{\left(\frac{1}{J_i}\hat\sigma_i^2 + V - \delta\right)^2}  \left|  \bar X_i\right|^{k_1} \left|Y_i \right|^{k_2} - \E \left[ \frac{\frac{1}{J_i}\hat\sigma_i^2}{\left( \frac{1}{J_i}\hat\sigma_i^2 + V - \delta \right)^2} \left| \bar X_i\right|^{k_1} \left|Y_i \right|^{k_2}  \right]  \right| > s\right)\\
  &\leq \frac{\E \left[ \left( \frac{\frac{1}{J_i}\hat\sigma_i^2}{\left( \frac{1}{J_i}\hat\sigma_i^2 + V - \delta \right)^2} \right)^2 \left| \bar X_i\right|^{2k_1} \left|Y_i \right|^{2k_2}  \right]}{ns^2}\\
  &\leq \frac{\frac{1}{\left(V-\delta\right)^2} \E \left[ \left|\bar{X_i} \right|^{2k_1} \left(\alpha + \beta \theta_i + u_i\right)^{2k_2} \right]}{ns^2} \to 0.
\end{align*}
where the last limit follows from the independence and finite moments assumptions, as well as the fact from \Cref{lemma:properties_nonnormal} that $\E \left[ \left|\bar{X_i} \right|^{2k_1} \right] \to \E \left[\left| \theta_i\right|^{2k_1}\right]$.
Also, by the Cauchy-Schwarz inequality, from the property of $\left(\frac{1}{J_i}\hat\sigma_i^2\right)^2$ in \Cref{lemma:properties_nonnormal}, we also have
\begin{align*}
  \E \left[\frac{\frac{1}{J_i}\hat\sigma_i^2}{\left( \frac{1}{J_i}\hat\sigma_i^2 + V - \delta \right)^2} \left| \bar X_i\right|^{k_1} \left|Y_i \right|^{k_2} \right] = \E \left[ \frac{\frac{1}{J_i}\hat\sigma_i^2}{\left( \frac{1}{J_i}\hat\sigma_i^2 + V - \delta \right)^2} \right] \E \left[ \left| \bar X_i\right|^{k_1} \left|Y_i \right|^{k_2} \right]\\ \leq \frac{1}{\left(V - \delta\right)^2} \sqrt{\E \left[ \left( \frac{1}{J_i} \hat\sigma_i^2 \right)^2\right] \E \left[ \left| \bar X_i\right|^{2k_1} \left|Y_i \right|^{2k_2} \right]}\to 0.
\end{align*}
Therefore, 
\begin{align*}
  &\quad\frac{1}{n} \sum_{i=1}^{n} \frac{\frac{1}{J_i} \hat\sigma_i^2}{\left(\frac{1}{J_i}\hat\sigma_i^2 + V - \delta\right)^2}  \left|  \left(\bar X_i - \bar{X}\right) \left(Y_i -\bar Y\right) \right|\\
   &\leq \frac{1}{n} \sum_{i=1}^{n} \frac{\frac{1}{J_i} \hat\sigma_i^2}{\left(\frac{1}{J_i}\hat\sigma_i^2 + V - \delta\right)^2}  \left|\bar X_i Y_i \right| + \left| \bar{X}\right| \frac{1}{n} \sum_{i=1}^{n} \frac{\frac{1}{J_i} \hat\sigma_i^2}{\left(\frac{1}{J_i}\hat\sigma_i^2 + V - \delta\right)^2}  \left|Y_i \right|\\
   &\quad + \left| \bar{Y}\right| \frac{1}{n} \sum_{i=1}^{n} \frac{\frac{1}{J_i} \hat\sigma_i^2}{\left(\frac{1}{J_i}\hat\sigma_i^2 + V - \delta\right)^2}  \left|\bar X_i \right| + \left| \bar{X}\right| \left| \bar{Y}\right| \frac{1}{n} \sum_{i=1}^{n} \frac{\frac{1}{J_i} \hat\sigma_i^2}{\left(\frac{1}{J_i}\hat\sigma_i^2 + V - \delta\right)^2} \\
   &= o_p(1) + O_p(1) \cdot o_p(1) + O_p(1) \cdot o_p(1) + O_p(1) \cdot O_p(1) \cdot o_p(1) = o_p(1).
\end{align*}
Then the numerator part is shown.

Lastly, we show for the denominator part
\begin{align*}
  &\quad \frac{1}{n} \sum_{i=1}^{n} \left( \hat\theta_i - \bar{\hat\theta}\right)^2 \to_p \frac{1}{n} \sum_{i=1}^{n} \left( \hat\theta_{i,c} - \bar{\hat\theta}_c\right)^2.
\end{align*}
Similarly, for simplicity of notations, we denote the whole denominator part as a function $g_n \left( t\right)$. Similarly, we have
\begin{align*}
  g_n \left( \hat{V}\right) - g_n \left( V\right) &= \left(\hat{V} - V\right) g_n^\prime \left( \tilde V\right).
\end{align*}
Here with a slight abuse of notation, $\tilde{V}$ satisfies $\left|\tilde{V} - V \right| \leq \left| \hat V - V\right|$. We finish the proof by showing
\begin{align*}
  g_n^\prime \left( \tilde V\right) = O_p(1).
\end{align*}
Let
\begin{align*}
  D_i \coloneqq \frac{\frac{1}{J_i}\hat\sigma_i^2}{\left(\frac{1}{J_i}\hat\sigma_i^2 + \tilde V\right)^2}.
\end{align*}
\begin{align*}
  G_i \coloneqq \frac{\tilde V}{\frac{1}{J_i}\hat\sigma_i^2 + \tilde V}.
\end{align*}
\begin{align*}
  g_n^\prime \left( \tilde V\right) &= \frac{2}{n} \sum_{i=1}^{n} \left[ G_i \left(\bar X_i - \bar X\right) - \frac{1}{n}\sum_{k=1}^{n}G_k \left(\bar X_k - \bar X  \right) \right] D_i\left(\bar X_i - \bar X\right)\\
  &= \frac{2}{n} \sum_{i=1}^{n} G_i D_i \left(\bar X_i - \bar X\right)^2 - 2 \left[ \frac{1}{n} \sum_{i=1}^{n} G_i \left(\bar X_i - \bar X\right) \right] \left[\frac{1}{n} \sum_{i=1}^{n} D_i \left(\bar X_i - \bar X\right)  \right].
\end{align*}
Similarly,  to show it's $O_p(1)$, it suffices to show for $k_1, k_2 \in \left\{0,1\right\}$, $k_3 \in \left\{0,1,2\right\}$,
\begin{align*}
  \frac{1}{n} \sum_{i=1}^{n} G_i^{k_1} D_i^{k_2} \bar X_i^{k_3} = O_p(1).
\end{align*}
For any $M > 0$, there exists $0<\delta <V $, and
\begin{align*}
  &\quad \Pr \left(  \left| \frac{1}{n} \sum_{i=1}^{n} G_i^{k_1} D_i^{k_2} \bar X_i^{k_3}  \right| >M\right)\\
  &\leq \Pr \left(   \frac{1}{n} \sum_{i=1}^{n} G_i^{k_1} D_i^{k_2} \left| \bar X_i \right|^{k_3}  >M, \left|\tilde V - V \right| \leq \delta\right) + \Pr \left(\left|\tilde{V} - V \right| > \delta\right)\\
  &\leq \Pr \left( \frac{1}{n} \sum_{i=1}^{n}\left( \frac{V-\delta}{\frac{1}{J_i}\hat \sigma_i^2 + V-\delta}\right)^{k_1}  \left(\frac{\frac{1}{J_i}\hat\sigma_i^2}{\left(\frac{1}{J_i} \hat\sigma_i^2+ V - \delta\right)^2}\right)^{k_2} \left|\bar X_i \right|^{k_3} > M\right) + \Pr \left(\left|\tilde{V} - V \right| > \delta\right)\\
  &\leq \Pr \left( \frac{1}{\left(V-\delta\right)^{k_2}} \frac{1}{n} \sum_{i=1}^{n}\left|\bar X_i \right|^{k_3} > M   \right) + \Pr \left(\left|\tilde{V} - V \right| > \delta\right).
\end{align*}
Because
\begin{align*}
  \frac{1}{n} \sum_{i=1}^{n}\left|\bar X_i \right|^{k_3} = O_p(1),
\end{align*}
therefore it is proved.
\end{proof}

\begin{proof}[Proof of \Cref{prop:inference}]
  Firstly, for simplicity we abbreviate notations and denote $\hat\beta_{\HE}$ as $\hat\beta$, $\hat\theta_{i,\HE}$ as $\hat\theta_{i}$, and $\Var\left(\theta_i\right)$ as $V$.
  For the numerator, we have
  \begin{align*}
    &\quad \frac{1}{n} \sum_{i=1}^{n} \left(\hat\theta_i - \bar{\hat\theta}\right)^2 \hat u_i^2\\
    &= \frac{1}{n} \sum_{i=1}^{n} \left(\hat \theta_i - \bar{\hat\theta}\right)^2 \left(Y_i - \bar Y - \hat\beta\left(\hat\theta_i - \bar{\hat\theta}\right) \right)^2\\
    &= \frac{1}{n} \sum_{i=1}^{n} \left(\hat \theta_i - \bar{\hat\theta}\right)^2 \left(Y_i - \bar Y\right)^2 - 2\hat\beta \frac{1}{n} \sum_{i=1}^{n} \left(\hat \theta_i - \bar{\hat\theta}\right)^3 \left(Y_i - \bar Y\right)\\
    &\quad + \hat\beta^2 \frac{1}{n} \sum_{i=1}^{n} \left(\hat \theta_i - \bar{\hat\theta}\right)^4.\\
    & 
  \end{align*}
  Then by \Cref{lemma:inference_c}, in order to show
  \begin{align*}
    \frac{1}{n} \sum_{i=1}^{n} \left(\hat\theta_i - \bar{\hat\theta}\right)^2 \hat u_i^2 \to \E \left[ \left(\theta_i - \E \left[\theta_i\right]\right)^2 u_i^2 \right],
  \end{align*}
  it suffices to show for integers $2\leq k\leq4$,
  \begin{align*}
    \frac{1}{n} \sum_{i=1}^{n} \left(\hat\theta_i - \bar{\hat\theta}\right)^{k} \left( Y_i - \bar Y\right)^{4-k} \to_p \frac{1}{n} \sum_{i=1}^{n} \left(\hat\theta_{i,c} - \bar{\hat\theta}_c\right)^{k} \left( Y_i - \bar Y\right)^{4-k}.
  \end{align*}
  Because of the proof shown in \Cref{prop:beta}, the above can be shown if we have $2\leq k\leq4$,
  \begin{align*}
    \frac{1}{n} \sum_{i=1}^{n} \left(\hat\theta_i - \bar{\hat\theta}\right)^{k} Y_i ^{4-k} \to_p \frac{1}{n} \sum_{i=1}^{n} \left(\hat\theta_{i,c} - \bar{\hat\theta}_c\right)^{k} Y_i ^{4-k}
  \end{align*}
  For simplicity of notations, we denote the left hand side as the function value of $f_n(t)$ at $t = \hat V$, and the right hand side as the function value at $t = V$. Then we have
  \begin{align*}
    f_n\left(\hat V\right) - f_n\left(V\right) = \left(\hat V - V\right) f_n^\prime \left(\tilde V\right).
  \end{align*}
  Here $\tilde V$ satisfies $\left|\tilde V - V\right| \leq \left|\hat V - V\right|$. Then we only need to show
  \begin{align*}
    f_n^\prime \left(\tilde V\right) = O_p(1).
  \end{align*}
  Let
  \begin{align*}
    D_i \coloneqq \frac{\frac{1}{J_i}\hat\sigma_i^2}{\left(\frac{1}{J_i}\hat\sigma_i^2 + \tilde V\right)^2}.
  \end{align*}
  \begin{align*}
    G_i \coloneqq \frac{\tilde V}{\frac{1}{J_i}\hat\sigma_i^2 + \tilde V}.
  \end{align*}
  \begin{align*}
    &\quad f_n^\prime \left(\tilde V\right)\\
     &= \frac{k}{n} \sum_{i=1}^{n} \left[ G_i \left(\bar X_i - \bar X\right) - \frac{1}{n}\sum_{k=1}^{n}G_k \left(\bar X_k - \bar X  \right) \right]^{k-1} \left(D_i\left(\bar X_i - \bar X\right) - \frac{1}{n} \sum_{k=1}^{n} D_k \left(\bar X_k - \bar X\right) \right) Y_i^{4-k}\\
  \end{align*}
  It suffices to show for $k_1 \in \left\{ 1,2,3 \right\}$, $k_2 \in \left\{ 0,1\right\}$, $k_3 \in 1,2,3$,
  \begin{align*}
    \frac{1}{n} \sum_{i=1}^{n} G_i^{k_1} D_i^{k_2} \bar X_i^{k_3} Y_i^{4-k} = O_p(1).
  \end{align*}
  For any $M > 0$, there exists $0<\delta <V $, and
  \begin{align*}
    &\quad \Pr \left(  \left| \frac{1}{n} \sum_{i=1}^{n} G_i^{k_1} D_i^{k_2} \bar X_i^{k_3} Y_i^{4-k}  \right| >M\right) \\
    &\leq \Pr \left( \frac{1}{n} \sum_{i=1}^{n}\left( \frac{V-\delta}{\frac{1}{J_i}\hat \sigma_i^2 + V-\delta}\right)^{k_1}  \left(\frac{\frac{1}{J_i}\hat\sigma_i^2}{\left(\frac{1}{J_i} \hat\sigma_i^2+ V - \delta\right)^2}\right)^{k_2} \left|\bar X_i \right|^{k_3} Y_i^{4-k} > M\right) + \Pr \left(\left|\tilde{V} - V \right| > \delta\right).\\
    &\leq \Pr \left( \frac{1}{\left(V-\delta\right)^{k_2}} \frac{1}{n} \sum_{i=1}^{n}\left|\bar X_i \right|^{k_3} Y_i^{4-k} > M   \right) + \Pr \left(\left|\tilde{V} - V \right| > \delta\right).
  \end{align*}
  Because 
  \begin{align*}
    \frac{1}{n} \sum_{i=1}^{n}\left|\bar X_i \right|^{k_3} Y_i^{4-k} = O_p(1),
  \end{align*}
  therefore it is proved.

  For the denominator, from the proof of \Cref{prop:beta} and \Cref{lemma:beta_c} , we have
  \begin{align*}
    \left( \frac{1}{n} \sum_{i=1}^{n} \left(\hat\theta_i -\bar{\hat\theta}\right)^2 \right)\to_p V^2.
  \end{align*}
\end{proof}

\setlength\bibsep{0pt}
\bibliography{reference}

\newpage

\clearpage

\setcounter{page}{1}

\section*{Supplemental Appendix}

\section{Lemmas and Proofs for Asymptotic Normality}
\label{sec:appendix-lemmas-nonnormal}

Firstly, for simplicity we abbreviate notations and denote $\Var\left(\theta_i\right)$ as $V$.

\begin{remark}
  \label{remark:sigma_hat}
  Recall that the variance estimator is defined as \[\hat\sigma_i^2 \coloneqq \frac{1}{J_i - 1} \sum_{j=1}^{J_i} \left(X_{i,j} - \bar X_i\right)^2.\] Then $\hat\sigma_i^2$ is unbiased for $\sigma_i^2$ and we have for its variance that
  \begin{align}
    \Var\left( \hat\sigma_i^2 \mid \sigma_i^2 ,J_i \right) &= \frac{1}{J_i} \E\left[ \epsilon_{i,j}^4 \mid \sigma_i^2, J_i \right] - \frac{\sigma^2_i\left(J_i-3\right)}{J_i \left(J_i -1\right)} \nonumber \\
    &\leq \frac{1}{J_i}K\sigma_i^4 - \frac{\sigma^2_i\left(J_i-3\right)}{J_i \left(J_i -1\right)} &\text{(\Cref{assum:nonnormal}.3)} \nonumber\\
    &\leq \frac{1}{J_i}K\sigma_i^4, &\text{(\Cref{assum:nonnormal}.1)}  \label{eq:var_nonnormal}
  \end{align}
  where the equality follows from e.g. \citet{o2014some} (Result 3, p. 284).
\end{remark}

\begin{lemma}
  \label{lemma:properties_nonnormal}
  Under \Cref{assum:nonnormal}, we have:
  \begin{enumerate}
    \item Properties of $\bar \epsilon_i$, $\bar X_i$, $\hat\sigma_i^2$, $c_i$:
    
    For any integer $k_1 \in \left\{1,2\right\}$, $k_2 \geq 1$,  we have
    \begin{enumerate}
      \item $\bar\epsilon_i = O_p(n^{-1/4})$, and $\E\left[\left|\bar\epsilon_i \right|^{k_1} \right] \to 0$. \label{item:epsilon}
      \item $\bar X_i = \theta_i + O_p(n^{-1/4})$, and $\E \left[ \left| \bar X_i \right|^{k_1} \right] \to \E \left[ \left| \theta_i \right|^{k_1} \right]$.
      \item $\hat\sigma_i^2 = \sigma_i^2 + O_p(n^{-1/4})$, and $\E \left[ \left| \hat\sigma_i^2 - \sigma_i^2 \right|^{k_1} \right] \to 0$.
      \item $\E \left[ \frac{1}{J_i} \hat\sigma_i^2 \right] \to 0$, and $\E \left[ \left( \frac{1}{J_i} \hat\sigma_i^2 \right)^2 \right] \to 0$.
      \item $\E\left[ \left(\frac{1}{J_i} \hat\sigma_i^2 \right)^4\right] = o(n^{-1/2})$.  \label{item:angelova}
      \item $c_i^{k_2} = 1 + O_p(n^{-1/2})$, and $\E \left[ c_i^{k_2} \right] = 1 + O(n^{-1/2})$. \label{item:c} 
    \end{enumerate}

    \item Properties of sample moments of $\bar\epsilon_i$, $c_i$, $\theta_i$: \label{item:moment}
    
    For any integer $k_1\geq 0$\footnote{For statements like this, if $k=0$ is on the exponent, it means that the term is $1$.}, $0 \leq k_2 \leq 2$, and $k_3, k_4 \in \left\{0,1 \right\}$, we have 
    \begin{align*}
      &\frac{1}{n}\sum_{i=1}^{n} c_i^{k_1} \left(\theta_i - k_3 \E \left[\theta_i\right]\right)^{k_2} = \E \left[\left(\theta_i - k_3 \E \left[\theta_i\right]\right)^{k_2}\right] +O_p(n^{-1/2}), \\
      &\frac{1}{n}\sum_{i=1}^{n} c_i^{k_1} \left( \theta_i - \E \left[\theta_i\right]\right)^{k_4}\bar\epsilon_i = o_p(n^{-1/2}), \\
      &\frac{1}{n}\sum_{i=1}^{n} c_i^{k_1} \left( \theta_i- k_3 \E \left[\theta_i\right]\right)^{k_2}\bar\epsilon_i^2 = O_p(n^{-1/2}).
    \end{align*}

    \item Properties of sample means of $\theta_i$, $\bar X_i$, $\hat\sigma_i^2$: 
    \begin{enumerate}
      \item $\bar\theta = \E \left[\theta_i\right] + O_p(n^{-1/2})$.
      \item $\bar X = \E \left[\theta_i\right] + O_p(n^{-1/2})$.
      \item $\frac{1}{n}\sum_{i=1}^{n}\frac{1}{J_i}\hat\sigma_i^2 = O_p(n^{-1/2})$
    \end{enumerate}

    \item Properties of sample moments of $\epsilon_i$, $c_i$, $u_i$:
    
    For any integer $k\geq 0$, we have
    \begin{enumerate}
      \item $\frac{1}{n} \sum_{i=1}^{n} c_i^{k} u_i = O_p(n^{-1/2})$.
      \item $\frac{1}{n}\sum_{i=1}^{n} c_i^{k} u_i \bar \epsilon_i = o_p(n^{-1/2})$.
    \end{enumerate}
    
  \end{enumerate}
\end{lemma}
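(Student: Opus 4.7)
The central observation is that \Cref{assum:nonnormal}.\ref{item:J_nonnormal}--\ref{item:J2_nonnormal} force $\E[J_i^{-1}]=O(n^{-1/2})$ and $\E[J_i^{-2}]=o(n^{-1/2})$, so every term in the lemma that features $1/J_i$ (either directly, via $\bar\epsilon_i$, or via the shrinkage weight) inherits a vanishing order in $n$. The plan is therefore to compute the relevant conditional moments given $(\theta_i,\sigma_i^2,J_i)$, integrate using the independence assumptions in \Cref{assum:nonnormal}.1--\ref{item:J_nonnormal}, and then convert moment bounds into probability bounds by Markov and Chebyshev. The whole lemma is a technical ``plumbing'' result: none of the steps requires new ideas, but there are many items, so the strategy will be to batch them by the underlying moment calculation they rely on.

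For Part~1, I would start with $\bar\epsilon_i$: conditionally on $(\sigma_i^2,J_i)$, $\E[\bar\epsilon_i^2\mid\sigma_i^2,J_i]=\sigma_i^2/J_i$, so independence of $J_i$ and $\sigma_i^2$ (\Cref{assum:nonnormal}.1) plus \Cref{assum:nonnormal}.\ref{item:J_nonnormal} yield $\E[\bar\epsilon_i^2]=O(n^{-1/2})$, and Markov gives $\bar\epsilon_i=O_p(n^{-1/4})$ and $\E[|\bar\epsilon_i|^{k_1}]\to 0$ for $k_1\leq 2$ via Lyapunov. The statement for $\bar X_i$ follows from $\bar X_i=\theta_i+\bar\epsilon_i$. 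For $\hat\sigma_i^2$, I would combine unbiasedness with the variance bound $\Var(\hat\sigma_i^2\mid\sigma_i^2,J_i)\leq K\sigma_i^4/J_i$ from \Cref{remark:sigma_hat} and \Cref{assum:nonnormal}.6, and then use Markov again. For $J_i^{-1}\hat\sigma_i^2$, a direct computation gives $\E[J_i^{-1}\hat\sigma_i^2]=\E[J_i^{-1}]\E[\sigma_i^2]=O(n^{-1/2})$ and, similarly, $\E[(J_i^{-1}\hat\sigma_i^2)^2]\leq \E[J_i^{-2}]\cdot\E[\hat\sigma_i^4]$, which is $o(n^{-1/2})$ under \Cref{assum:nonnormal}.\ref{item:J2_nonnormal}; Part~1\ref{item:angelova} is handled the same way using $J_i\geq 3$ to reduce $\E[J_i^{-4}]$ to $\E[J_i^{-2}]$. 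For Part~1\ref{item:c}, write $1-c_i^{k_2}$ as a polynomial in $(J_i^{-1}\hat\sigma_i^2)/(V+J_i^{-1}\hat\sigma_i^2)$, which is bounded above by $(J_i^{-1}\hat\sigma_i^2)/V$, so the $O_p(n^{-1/2})$ rate transfers directly.

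For Part~2 I would split each average into its expectation plus a deviation. The expectation part is computed by conditioning on $(\theta_i,\sigma_i^2,J_i)$, using $\E[\bar\epsilon_i\mid\theta_i,\sigma_i^2,J_i]=0$ for terms linear in $\bar\epsilon_i$ (to obtain $o_p(n^{-1/2})$), using the $\sigma_i^2/J_i$ expression for terms quadratic in $\bar\epsilon_i$ (giving the $O(n^{-1/2})$ order from Part~1), and invoking $\E[c_i^{k_1}]=1+O(n^{-1/2})$ from Part~1\ref{item:c} to factor the $\theta_i$-polynomial cleanly. The deviation part is controlled by Chebyshev once I show the appropriate second moment is $O(1)$ uniformly in $n$, which follows from the moment assumptions on $\theta_i$, $\sigma_i^2$, and the ratio $\epsilon_{i,j}/\sigma_i$ together with $0<c_i\leq 1$. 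Part~3 follows immediately from a standard Chebyshev argument using \Cref{assum:nonnormal}.5 and the variance computations already done, and Part~4 follows the same template but uses the independence $u_i\perp\epsilon_{i,j}\mid\theta_i$ from \Cref{assum:nonnormal}.4 together with $\E[u_i\mid\theta_i]=0$ (implied by \Cref{assum:nonnormal}.2), so each summand has mean zero and Chebyshev delivers the rate.

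The main obstacle I anticipate is bookkeeping rather than mathematical: keeping track of the precise orders of magnitude when several of the ``small'' factors $c_i-1$, $\bar\epsilon_i$, and $J_i^{-1}\hat\sigma_i^2$ appear together, and making sure that products such as $c_i^{k_1}(\theta_i-\E[\theta_i])^{k_2}\bar\epsilon_i$ are handled by iterated expectations in the right order (first condition on $(\theta_i,\sigma_i^2,J_i)$, exploit $\E[\bar\epsilon_i\mid\cdot]=0$, then integrate). A secondary, but minor, concern is that \Cref{assum:nonnormal}.3 is stated for ``some $L\geq 3$,'' whereas Part~1\ref{item:angelova} effectively requires control of $\E[\hat\sigma_i^8]$; I would either note that the higher-moment version of \Cref{assum:nonnormal}.3 is what is being used here, or appeal to \Cref{assum:nonnormal}.6 plus a Rosenthal-type inequality to bound $\E[\hat\sigma_i^8]$ by a constant times $\E[\sigma_i^8]<\infty$, which suffices to close the argument.
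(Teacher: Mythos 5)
Your overall route is the same as the paper's: compute conditional moments given $(\sigma_i^2,J_i)$, factor expectations using the independence in \Cref{assum:nonnormal}.1 and \Cref{assum:nonnormal}.3, and convert to probability bounds via Markov/Chebyshev. Your cleaner bound $0\le 1-c_i^{k_2}\le k_2 J_i^{-1}\hat\sigma_i^2/V$ for Part~1\ref{item:c} is a legitimate simplification of the paper's mean-value expansion, and your reading of item 1\ref{item:angelova} is right: the paper does invoke $\mu_k\le K\sigma_i^k$ for $k$ up to $8$, i.e.\ it uses \Cref{assum:nonnormal}.3 for all $L\le 8$, exactly as you flag.

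There is, however, one genuine gap, in the $k_4=0$ case of the second display of Part~2, i.e.\ $\frac{1}{n}\sum_i c_i^{k_1}\bar\epsilon_i=o_p(n^{-1/2})$. You propose to kill the expectation by conditioning on $(\theta_i,\sigma_i^2,J_i)$ and using $\E[\bar\epsilon_i\mid\theta_i,\sigma_i^2,J_i]=0$. That step fails because $c_i=V/(V+J_i^{-1}\hat\sigma_i^2)$ is \emph{not} measurable with respect to $(\theta_i,\sigma_i^2,J_i)$: it depends on $\hat\sigma_i^2$, which is built from the same $\epsilon_{i,1},\dots,\epsilon_{i,J_i}$ as $\bar\epsilon_i$. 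Since \Cref{assum:nonnormal}.3 explicitly allows asymmetric noise, $\Cov(\hat\sigma_i^2,\bar\epsilon_i\mid\sigma_i^2,J_i)$ is of order $\mu_3/J_i$ and hence $\E[c_i^{k_1}\bar\epsilon_i\mid\sigma_i^2,J_i]$ is of order $J_i^{-2}$, not zero. The conclusion survives, but only via the argument the paper actually uses: expand $c_i^{k_1}=1-k_1 J_i^{-1}\hat\sigma_i^2/V+R_i$ so that the leading term contributes $\E[\bar\epsilon_i]=0$ exactly, and bound the cross terms $\sqrt{n}\,\E[J_i^{-1}\hat\sigma_i^2\,\bar\epsilon_i]$ and $\sqrt{n}\,\E[J_i^{-2}(\hat\sigma_i^2)^2\bar\epsilon_i]$ by Cauchy--Schwarz using \Cref{assum:nonnormal}.\ref{item:J2_nonnormal} and item 1\ref{item:angelova}. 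Note the contrast with the $k_4=1$ case, where the paper conditions in the opposite direction (on the noise), so that the factor $\theta_i-\E[\theta_i]$, which \emph{is} independent of $(c_i,\bar\epsilon_i)$, delivers exact mean zero. A minor further slip: $\E[u_i\mid\theta_i]=0$ is not implied by \Cref{assum:nonnormal}.2 (which only gives $\E[u_i]=0$ and $\E[u_i\theta_i]=0$); fortunately Part~4 only needs $\E[u_i]=0$ together with the independence of $\theta_i$ from the noise.
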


\begin{proof}[Proof of \Cref{lemma:properties_nonnormal}]
  Below we take any $s >0$, 
  \begin{enumerate}
    \item Properties of $\bar \epsilon_i$, $\bar X_i$, $\hat\sigma_i^2$, $c_i$:
    \begin{enumerate}
      \item  By Markov's inequality,
      \begin{align*}
        \Pr \left( n^{1/4}\left|\bar \epsilon_i \right| > s  \right) \leq \frac{\sqrt{n} \E \left[\bar\epsilon_i^2\right]}{s^2} = \frac{\sqrt{n}\E \left[\frac{1}{J_i}\right]\E \left[\sigma_i^2\right]}{s^2}.
      \end{align*}
      Since $\sqrt{n} \E \left[\frac{1}{J_i}\right] \to \kappa$, we have $\bar\epsilon_i = O_p(n^{-1/4})$.

      Since 
      \begin{align*}
        \E \left[\bar\epsilon_i^2\right] = \E \left[\frac{1}{J_i}\right] \E \left[\sigma_i^2\right] \to 0,
      \end{align*}
      combined with the Cauchy-Schwarz inequality, we have the rest of the results.
      \item This follows from \ref{item:epsilon}.
      \item By Markov's inequality and \eqref{eq:var_nonnormal},
      \begin{align*}
        \Pr \left( n^{1/4}\left|\hat\sigma_i^2 - \sigma_i^2 \right| > s   \right) &\leq \frac{\sqrt{n}\E\left[\left(\hat\sigma_i^2 - \sigma_i^2\right)^2\right]}{s^2}\\
         &\leq \frac{K \sqrt{n}\E \left[\frac{1}{J_i} \right]\E \left[\sigma_i^4\right]}{s^2}.
      \end{align*}
      Due to \Cref{assum:nonnormal}.7, we have $\hat\sigma_i^2 = \sigma_i^2 + O_p(n^{-1/4})$.
      
      Since by \eqref{eq:var_nonnormal},
      \begin{align*}
        \E \left[ \left(\hat\sigma_i^2 - \sigma_i^2\right)^2 \right] \leq K \E\left[\frac{1}{J_i}\right] \E\left[\sigma_i^4\right]  \to 0,
      \end{align*}
      combined with the Cauchy-Schwarz inequality, we have the rest of the results.
      \item We have
      \begin{align*}
        \E \left[ \frac{1}{J_i} \hat\sigma_i^2 \right] = \E \left[ \frac{1}{J_i} \right] \E \left[ \sigma_i^2 \right]\to0,
      \end{align*}
      \begin{align*}
        \E \left[ \left( \frac{1}{J_i} \hat\sigma_i^2 \right)^2 \right] \leq \E \left[ \frac{1}{J_i^2} \left( \frac{K\sigma_i^4}{J_i} + \sigma_i^4  \right)  \right] \to 0. & \quad \text{(By \eqref{eq:var_nonnormal})}
      \end{align*}
      \item From Theorem 2 of \citet{angelova2012moments}, the fourth moment of $\hat\sigma_i^2$ is
      \begin{align*}
        \E\left[ \left(\hat\sigma_i^2\right)^4 \mid J_i,\sigma_i^2\right] &= \mu_2^4 
        + \frac{6\mu_2^2(\mu_4 - \mu_2^2)}{J_i}
        + \frac{4\mu_6\mu_2 + 3\mu_4^2 - 18\mu_4\mu_2^2 - 24\mu_3^2\mu_2 + 23\mu_2^4}{J_i^2}\\
        &\quad + \frac{\mu_8 - 4\mu_6\mu_2 - 24\mu_5\mu_3 - 3\mu_4^2 + 72\mu_4\mu_2^2 + 96\mu_3^2\mu_2 - 86\mu_2^4}{J_i^3}\\
        &\quad + \frac{4(6\mu_6\mu_2 + 6\mu_4^2 - 39\mu_4\mu_2^2 - 40\mu_3^2\mu_2 + 45\mu_2^4)}{J_i^3(J_i-1)}\\
        &\quad + \frac{4(36\mu_4\mu_2^2 - 8\mu_5\mu_3 + 52\mu_3^2\mu_2 - 61\mu_2^4)}{J_i^3(J_i-1)^2}\\
        &\quad + \frac{8(\mu_4^2 - 6\mu_4\mu_2^2 - 12\mu_3^2\mu_2 + 15\mu_2^4)}{J_i^3(J_i-1)^3},
      \end{align*}
      where $\mu_k, k=1,...,8$ are the $k$-th raw moments of $\epsilon_{i,j} \mid \sigma_i^2$. By \Cref{assum:nonnormal}.3, we have $\mu_k \leq K \sigma_i^k$ for some constant $K$. Thus, by independence of \Cref{assum:nonnormal}.1 and \Cref{assum:nonnormal}.8, we have that
      \begin{align*}
        \sqrt{n} \E\left[\left(\frac{1}{J_i}\hat\sigma_i^2\right)^4 \right] \to 0.
      \end{align*}

      \item
       By the mean value theorem,
      \begin{align*}
        c_i^{k_2} &= \left( \frac{V}{V + \frac{1}{J_i} \hat\sigma_i^2}\right)^{k_2}\\
        &= 1-k_2 \frac{1}{VJ_i} \hat\sigma_i^2 + \frac{k_2 \left(k_2 +1\right)}{2} \frac{1}{\left(1 + \omega_i\right)^{k_2 + 2}} \frac{1}{V^2 J_i^2} \left(\hat\sigma_i^2\right)^2,
      \end{align*}
      where $\omega_i$ is between $0$ and $\frac{1}{VJ_i} \hat\sigma_i^2$. Therefore,
      \begin{align*}
        \abs{\sqrt{n}\E\left[ 1- c_i^{k_2} \right]} &\leq \frac{k_2}{V}\sqrt{n}\E\left[\frac{\sigma^2_i}{J_i} \right] \\
        &\quad + \frac{k_2\left(k_2 +1\right)}{2V^2} \sqrt{n} \E\left[ \frac{1}{J_i^2} \left( \sigma_i^4 + \frac{K}{J_i}\sigma_i^4 \right) \right] &\text{(By \eqref{eq:var_nonnormal})}\\
        &\leq \frac{k_2}{V} \E\left[\sigma_i^2\right] \sqrt{n}\E\left[\frac{1}{J_i}\right] \\
        &\quad + \frac{k_2\left(k_2 +1\right)}{2V^2} \E\left[\sigma_i^4\right] \sqrt{n} \E\left[\frac{1}{J_i^2}\right] \\
        &\quad + \frac{k_2\left(k_2 +1\right)K}{2V^2} \E\left[\sigma_i^4\right] \sqrt{n} \E\left[\frac{1}{J_i^3}\right] & \text{(By \Cref{assum:nonnormal}.1)}\\
        &\to \frac{k_2}{V} \E\left[\sigma_i^2\right] \kappa + 0 + 0. &\text{(By \Cref{assum:nonnormal}.7 and  \ref{assum:nonnormal}.8)}
      \end{align*}
       Then we have the second result. By Markov's inequality, we have the first result.
    \end{enumerate}
    \item 
      By Chebyshev's inequality ($k_2$ can be replaced by $k_4$),
\begin{align*}
  &\quad \Pr\left( \sqrt{n}\left| \frac{1}{n} \sum_{i=1}^{n} c_i^{k_1} \left(\theta_i - k_3 \E\left[\theta_i\right]\right)^{k_2} - \E \left[ c_i^{k_1} \left(\theta_i - k_3 \E\left[\theta_i\right]\right)^{k_2}  \right] \right| > s \right)\\
   &\leq \frac{\E \left[ c_i^{2k_1} \left(\theta_i - k_3 \E\left[\theta_i\right]\right)^{2k_2} \right]}{s^2} \leq \frac{\E \left[\left(\theta_i - k_3 \E\left[\theta_i\right]\right)^{2k_2}\right]}{s^2}
\end{align*}
\begin{align*}
  &\quad \Pr \left( \sqrt{n}\left| \frac{1}{n} \sum_{i=1}^{n} c_i^{k_1} \left(\theta_i - k_3 \E\left[\theta_i\right]\right)^{k_2} \bar\epsilon_i - \E \left[ c_i^{k_1} \left(\theta_i - k_3 \E\left[\theta_i\right]\right)^{k_2} \bar\epsilon_i \right] \right| > s \right)\\
  & \leq \frac{\E \left[ c_i^{2k_1} \left(\theta_i - k_3 \E\left[\theta_i\right]\right)^{2k_2} \bar\epsilon_i^{2}  \right]}{s^2} \leq \frac{\E \left[ \left(\theta_i - k_3 \E\left[\theta_i\right]\right)^{2k_2} \right] \E\left[ \frac{1}{J_i} \right] \E \left[\sigma_i^2\right]}{s^2} \to 0,
\end{align*}
\begin{align*}
  &\quad \Pr \left( \sqrt{n}\left| \frac{1}{n} \sum_{i=1}^{n} c_i^{k_1} \left(\theta_i - k_3 \E\left[\theta_i\right]\right)^{k_2} \bar\epsilon_i^2 - \E \left[ c_i^{k_1} \left(\theta_i - k_3 \E\left[\theta_i\right]\right)^{k_2} \bar\epsilon_i^2 \right] \right| > s \right) \\
  &\leq \frac{\E \left[ c_i^{2k_1} \left(\theta_i - k_3 \E\left[\theta_i\right]\right)^{2k_2} \bar\epsilon_i^{4}  \right]}{s^2}\\
  &\leq \frac{\E \left[ \left(\theta_i - k_3 \E\left[\theta_i\right]\right)^{2k_2} \right] \left( \E\left[ \frac{1}{J_i^3} \right] \E \left[K\sigma_i^4\right] + 3\E\left[\frac{J_i-1}{J_i^3} \right] \E\left[\sigma_i^4\right] \right)}{s^2} \to 0.
\end{align*}
Meanwhile, by independence from \Cref{assum:nonnormal}.3,
\begin{align*}
  &\quad \E \left[ c_i^{k_1}\left(\theta_i - k_3 \E\left[\theta_i\right]\right)^{k_2} \right] =\E \left[ c_i^{k_1} \right] \E \left[ \left(\theta_i - k_3 \E\left[\theta_i\right]\right)^{k_2} \right]\\
   &= \left(1 + O(n^{-1/2}) \right) \E \left[ \left(\theta_i - k_3 \E\left[\theta_i\right]\right)^{k_2} \right] &\text{(By \ref{item:c})}\\
   &= \E \left[ \left(\theta_i - k_3 \E\left[\theta_i\right]\right)^{k_2} \right] + O(n^{-1/2}),
\end{align*}
by the mean value theorem ($k_4 = 0$) and \eqref{eq:var_nonnormal},
\begin{align*}
  \abs{\sqrt{n}\E\left[c_i^{k_1} \bar\epsilon_i\right]} &\leq 0 + \frac{k_1}{V}\sqrt{n}\E\left[ \frac{1}{J_i}\hat\sigma_i^2 \bar\epsilon_i \right] \\
  &\quad + \frac{k_1\left(k_1 +1\right)}{2V^2} \sqrt{n} \E\left[ \frac{1}{J_i^2} \left(\hat\sigma_i^2\right)^2 \bar \epsilon_i \right] \\
  &\leq \frac{k_1}{V} \sqrt{ \sqrt{n} \E\left[\frac{1}{J_i^2} \left( \sigma_i^4 + \frac{K}{J_i}\sigma_i^4 \right) \right] \sqrt{n} \E\left[ \frac{1}{J_i}\sigma_i^2 \right] }\\
  &\quad + \frac{k_1\left(k_1 +1\right)}{2V^2} \sqrt{ \sqrt{n}\E\left[ \frac{1}{J_i^4}\left(\hat\sigma_i^2 \right)^4 \right] \sqrt{n}\E\left[ \frac{1}{J_i}\sigma_i^2 \right]  } &\text{(Cauchy-Schwarz)}\\
  &\to 0,
\end{align*}
where the last step follows from independence of \Cref{assum:nonnormal}.1, \Cref{assum:nonnormal}.7, \Cref{assum:nonnormal}.8 and \ref{item:angelova},

by the independence in \Cref{assum:nonnormal}.3 ($k_4 = 1$),
\begin{align*}
  &\E \left[ c_i^{k_1} \left(\theta_i - \E\left[\theta_i\right]\right) \bar\epsilon_i \right]  = \E \left[\E\left( c_i^{k_1}\left(\theta_i -\E\left[\theta_i\right]\right)\bar{\epsilon_i}  \bigg| \epsilon \right) \right] = 0,
\end{align*}
and also by \Cref{assum:nonnormal}.3,
\begin{align*}
  \left|\sqrt{n} \E \left[ c_i^{k_1} \left(\theta_i - k_3 \E\left[\theta_i\right]\right)^{k_2}\bar\epsilon_i^2 \right] \right| \leq \E \left[ \left| \left(\theta_i - k_3 \E\left[\theta_i\right]\right)^{k_2}  \right| \right] \sqrt{n} \E \left[ \frac{1}{J_i} \right] \E \left[ \sigma_i^2\right],
\end{align*}
where $\sqrt{n} \E \left[\frac{1}{J_i}\right] \to \kappa$. Therefore we have the results.
   \item  \begin{enumerate}
    \item Follows from the finite fourth moment and the inequality below for any $s>0$, 
    \begin{align*}
      &\Pr \left( \left| \sqrt{n} \left( \bar\theta - \E\left[ \theta_i\right] \right) \right| > s  \right) \leq \frac{\E\left[\theta_i^2\right]}{s^2}.  &\text{(Chebyshev's inequality)}
    \end{align*}
    \item Follows from the property of $\bar\theta$ and the second property of \ref{item:moment}:
    \begin{align*}
      \bar X = \bar\theta + \bar\epsilon = \E\left[\theta_i\right] + O_p(n^{-1/2}) + o_p(n^{-1/2}) = \E\left[\theta_i\right] + O_p(n^{-1/2}).
    \end{align*}
    \item By Chebyshev's inequality and \eqref{eq:var_nonnormal}, for any $s>0$,
    \begin{align*}
      \Pr \left(\sqrt{n}\left| \frac{1}{n} \sum_{i=1}^{n} \frac{1}{J_i} \hat\sigma_i^2 - \E \left[ \frac{1}{J_i} \hat\sigma_i^2\right]  \right| > s \right) \leq \frac{\E \left[\frac{1}{J_i^2} \left(  \frac{K\sigma_i^4}{J_i}+ \sigma_i^4 \right)\right]}{s^2} \to 0,
    \end{align*}
    and also
    \begin{align*}
      \sqrt{n}\E \left[\frac{1}{J_i} \hat\sigma_i^2 \right] = \sqrt{n}\E \left[ \frac{1}{J_i} \right] \E \left[\sigma_i^2\right] \to \kappa \E \left[\sigma_i^2\right].
    \end{align*}
    Thus
    \begin{align*}
      \frac{1}{n}\sum_{i=1}^{n} \frac{1}{J_i} \hat \sigma_i^2 = O_p(n^{-1/2}).
    \end{align*}
   \end{enumerate}
   \item \begin{enumerate}
    \item Since $u_i \ind \epsilon_{i,j} \mid \theta_i$, by Chebyshev's inequality,
    \begin{align*}
      \Pr \left( \sqrt{n} \left| \frac{1}{n} \sum_{i=1}^{n} c_i^k u_i \right|>s \right) \leq \frac{\E \left[ c_i^{2k} u_i^2 \right]}{s^2} \leq \frac{\sigma^2_u}{s^2}
    \end{align*}
    Thus 
    \begin{align*}
      \frac{1}{n} \sum_{i=1}^{n} c_i^k u_i = O_p(n^{-1/2}).
    \end{align*}
    \item By Chebyshev's inequality,
    \begin{align*}
      \Pr \left( \sqrt{n} \left| \frac{1}{n} \sum_{i=1}^{n} c_i^k u_i \bar\epsilon_i \right|>s \right) \leq \frac{\E \left[ c_i^{2k} u_i^2 \bar\epsilon_i^2\right]}{s^2} \leq \frac{\sigma^2_u}{s^2} \E \left[\frac{1}{J_i }\right] \E \left[\sigma_i^2\right] \to 0.
    \end{align*}
    Thus 
    \begin{align*}
      \frac{1}{n} \sum_{i=1}^{n} c_i^k u_i\bar\epsilon_i = o_p(n^{-1/2}).
    \end{align*}
   \end{enumerate}
  \end{enumerate}
\end{proof}

\begin{lemma}
  \label{lemma:sigma_nonnormal}
  Under \Cref{assum:nonnormal}, we have
  \begin{align*}
    \frac{1}{n} \sum_{i=1}^{n} c_i^2 \bar\epsilon_i^2 = \frac{1}{n} \sum_{i=1}^{n} c_i^2 \frac{1}{J_i}\hat\sigma_i^2 + o_p(n^{-1/2}).
  \end{align*}
\end{lemma}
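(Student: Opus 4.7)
The plan is to let $W_i \coloneqq \bar\epsilon_i^2 - \hat\sigma_i^2/J_i$, rewrite the target as $(1/n)\sum_i c_i^2 W_i$, and show this is $o_p(n^{-1/2})$ by first exhibiting a clean algebraic identity for $W_i$ and then splitting the weighted sum into a mean piece and a weight-fluctuation remainder.

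First I would derive the identity. Since $\hat\sigma_i^2 = (J_i-1)^{-1}(\sum_j \epsilon_{i,j}^2 - J_i\bar\epsilon_i^2)$ and $J_i\bar\epsilon_i^2 = J_i^{-1}(\sum_j \epsilon_{i,j})^2 = J_i^{-1}\sum_j \epsilon_{i,j}^2 + (2/J_i)\sum_{j<k}\epsilon_{i,j}\epsilon_{i,k}$, direct simplification yields
\begin{equation*}
  W_i \;=\; \frac{1}{J_i-1}\left[J_i \bar\epsilon_i^2 - \frac{1}{J_i}\sum_{j=1}^{J_i}\epsilon_{i,j}^2\right] \;=\; \frac{2}{J_i(J_i-1)}\sum_{1\le j<k\le J_i}\epsilon_{i,j}\epsilon_{i,k}.
\end{equation*}
Conditional on $(\sigma_i^2, J_i)$ the $\epsilon_{i,j}$ are mean zero and independent, so $\E[W_i \mid \sigma_i^2, J_i]=0$ and
\begin{equation*}
  \Var(W_i \mid \sigma_i^2, J_i) \;=\; \frac{4}{J_i^2(J_i-1)^2}\binom{J_i}{2}\sigma_i^4 \;=\; \frac{2\sigma_i^4}{J_i(J_i-1)} \;\le\; \frac{4\sigma_i^4}{J_i^2},
\end{equation*}
where the last bound uses $J_i\ge 3$ so that $J_i-1\ge J_i/2$. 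Taking expectations and invoking \Cref{assum:nonnormal}.\ref{item:J2_nonnormal} gives $\E[W_i^2] \le 4\E[\sigma_i^4]\E[J_i^{-2}] = o(n^{-1/2})$.

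Next I would decompose
\begin{equation*}
  \frac{1}{n}\sum_{i=1}^n c_i^2 W_i \;=\; \frac{1}{n}\sum_{i=1}^n W_i \;+\; \frac{1}{n}\sum_{i=1}^n (c_i^2-1)W_i.
\end{equation*}
For the first sum, independence across $i$ together with $\E[W_i]=0$ implies $\Var\bigl(\sqrt{n}\cdot(1/n)\sum_i W_i\bigr)=\Var(W_i)\le\E[W_i^2]=o(1)$, so Chebyshev yields $(1/n)\sum_i W_i = o_p(n^{-1/2})$. For the second sum, Cauchy--Schwarz gives
\begin{equation*}
  \left(\frac{1}{n}\sum_i (c_i^2-1)W_i\right)^2 \;\le\; \frac{1}{n}\sum_i (c_i^2-1)^2 \cdot \frac{1}{n}\sum_i W_i^2.
\end{equation*}
By \Cref{lemma:properties_nonnormal}.\ref{item:c} applied with $k_2=2$ and $k_2=4$, $\E[(c_i^2-1)^2]=\E[c_i^4]-2\E[c_i^2]+1 = O(n^{-1/2})$, so Markov gives $(1/n)\sum_i(c_i^2-1)^2 = O_p(n^{-1/2})$; the same argument with $\E[W_i^2]=o(n^{-1/2})$ yields $(1/n)\sum_i W_i^2 = o_p(n^{-1/2})$. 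Combining, the second sum is of order $\sqrt{O_p(n^{-1/2})\cdot o_p(n^{-1/2})} = o_p(n^{-1/2})$, completing the proof.

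The main obstacle is spotting the identity in the first step. Since $c_i^2$ depends on $\hat\sigma_i^2$, and under non-normality $\bar\epsilon_i^2$ and $\hat\sigma_i^2$ are correlated, one cannot simply take iterated expectations on each piece separately. The representation $W_i = (2/[J_i(J_i-1)])\sum_{j<k}\epsilon_{i,j}\epsilon_{i,k}$ recasts $W_i$ as a mean-zero degenerate U-statistic kernel whose second moment is tightly controlled by $\E[J_i^{-2}]$---precisely the rate supplied by \Cref{assum:nonnormal}.\ref{item:J2_nonnormal}---after which the rest of the argument is a routine Cauchy--Schwarz split backed by the weight regularity already recorded in \Cref{lemma:properties_nonnormal}.\ref{item:c}.
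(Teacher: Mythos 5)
Your proof is correct, and the first half (the representation $W_i = \tfrac{2}{J_i(J_i-1)}\sum_{j<k}\epsilon_{i,j}\epsilon_{i,k}$ with conditional mean zero and conditional variance $2\sigma_i^4/(J_i(J_i-1))$) is exactly the paper's starting point. Where you diverge is in how the random weight $c_i^2$ is handled. The paper Taylor-expands $c_i^2 = 1 - \tfrac{2}{VJ_i}\hat\sigma_i^2 + \tfrac{3}{(1+\omega_i)^4}\tfrac{1}{V^2J_i^2}(\hat\sigma_i^2)^2$ and bounds $\sqrt{n}\,\E[c_i^2 W_i]$ term by term via Cauchy--Schwarz, which forces it to invoke the fourth-moment bound $\E[(\hat\sigma_i^2/J_i)^4] = o(n^{-1/2})$ (the Angelova computation in \Cref{lemma:properties_nonnormal}); it then concentrates the sample average around this expectation by Chebyshev. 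You instead write $c_i^2 = 1 + (c_i^2-1)$, dispose of the leading term by a direct variance calculation, and control the remainder with a sample-level Cauchy--Schwarz using only $\E[(c_i^2-1)^2]=O(n^{-1/2})$ (which follows from the already-recorded $\E[c_i^{k}]=1+O(n^{-1/2})$) together with $\E[W_i^2]=o(n^{-1/2})$. Your route is more economical within this lemma---it never touches the fourth moment of $\hat\sigma_i^2/J_i$ and sidesteps the explicit expansion of the weight---at the cost of leaning on \Cref{lemma:properties_nonnormal} for the moment of $c_i^4$, whose own proof uses the same expansion machinery. Both arguments use the same assumptions in the end; yours packages the dependence between $c_i^2$ and $W_i$ (which, as you note, is the real obstacle since both are built from the $\epsilon_{i,j}$) into a single crude but sufficient Cauchy--Schwarz step rather than an exact cancellation of expectations.
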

\begin{proof}[Proof of \Cref{lemma:sigma_nonnormal}]
  First, notice that the difference
  \begin{align*}
    \bar \epsilon_i^2  - \frac{1}{J_i} \hat\sigma_i^2&= \bar\epsilon_i^2 - \frac{1}{J_i\left(J_i-1\right)} \sum_{j=1}^{J_i} \epsilon_{i,j}^2 + \frac{1}{J_i-1}\bar\epsilon_i^2 \\
    &= \frac{1}{J_i\left(J_i -1\right)} \left(\sum_{j=1}^{J_i}\epsilon_{i,j} \right)^2 - \frac{1}{J_i\left(J_i -1\right)} \sum_{j=1}^{J_i} \epsilon_{i,j}^2 \\
    &= \frac{2}{J_i \left(J_i -1\right)} \sum_{k\leq j} \epsilon_{i,k}\epsilon_{i,j}.
  \end{align*}
  Then the properties of the difference are
  \begin{align*}
    &\E \left[ \bar\epsilon_i^2 - \frac{1}{J_i} \hat\sigma_i^2 \right] = 0,\\
    &\E \left[ \left(\bar\epsilon_i^2 - \frac{1}{J_i} \hat\sigma_i^2\right)^2 \mid J_i, \sigma_i^2\right] = \frac{2\sigma_i^4}{J_i\left(J_i -1\right)}.
  \end{align*}
  For $c_i^2$, by the mean value theorem,
  \begin{align*}
    c_i^2 &= \left( \frac{V}{V + \frac{1}{J_i}\hat\sigma_i^2}\right)^2\\
    &= 1 - 2\frac{1}{VJ_i} \hat\sigma_i^2 + 3\frac{1}{\left(1 + \omega_i\right)^4} \frac{1}{V^2J_i^2}\left(\hat\sigma_i^2\right)^2,
  \end{align*}
  where $\omega_i$ is between $0$ and $\frac{1}{VJ_i}\hat\sigma_i^2$. Therefore,

  \begin{align*}
    &\quad \abs{\sqrt{n}\E\left[ c_i^2 \left( \bar\epsilon_i^2 - \frac{1}{J_i}\hat\sigma_i^2\right) \right]}\\
    &\leq \abs{\sqrt{n} \E\left[  \left( \bar\epsilon_i^2 - \frac{1}{J_i}\hat\sigma_i^2\right) \right]} + \frac{2}{V} \abs{\sqrt{n} \E \left[ \frac{1}{J_i} \hat\sigma_i^2 \left( \bar\epsilon_i^2 - \frac{1}{J_i}\hat\sigma_i^2\right) \right]} \\
    &\quad + \frac{3}{V^2}\abs{\sqrt{n} \E \left[ \frac{1}{\left(1 + \omega_i\right)^4} \frac{1}{J_i^2}\left(\hat\sigma_i^2\right)^2 \left( \bar\epsilon_i^2 - \frac{1}{J_i}\hat\sigma_i^2\right) \right]} \\
    &\leq 0 + \frac{2}{V} \sqrt{n \E\left[\frac{1}{J_i^2} \left(\hat\sigma_i^2 \right)^2 \right] \E\left[\left(\bar\epsilon_i^2 - \frac{1}{J_i}\hat\sigma_i^2  \right)^2  \right] }\\
    &\quad + \frac{3}{V^2} \sqrt{n \E\left[\frac{1}{J_i^4} \left(\hat\sigma_i^2\right)^4 \right] \E\left[\left(\bar\epsilon_i^2 - \frac{1}{J_i}\hat\sigma_i^2  \right)^2  \right] } &\text{(Cauchy-Schwarz)}\\
    &\leq \frac{2}{V} \sqrt{\sqrt{n} \E\left[\frac{1}{J_i^2}\left( \frac{K}{J_i} \sigma_i^4 + \sigma_i^4  \right)   \right]\sqrt{n} \E\left[ \frac{2\sigma_i^4}{J_i \left(J_i -1\right)}  \right] }\\
    &\quad + \frac{3}{V^2} \sqrt{\sqrt{n} \E\left[\frac{1}{J_i^4}\left(  \hat\sigma_i^2\right)^4   \right]\sqrt{n} \E\left[ \frac{2\sigma_i^4}{J_i \left(J_i -1\right)}  \right] }. &\text{(By properties of the difference)}\\
  \end{align*}
  Since we have independence of $\sigma_i^2 \ind J_i$ from \Cref{assum:nonnormal}.1, \Cref{assum:nonnormal}.7, \Cref{assum:nonnormal}.8, and that
  \begin{align*}
    \sqrt{n} \E \left[\frac{1}{J_i^4} \left(\hat\sigma_i^2\right)^4 \right] \to 0,
  \end{align*}
  from \Cref{lemma:properties_nonnormal}.\ref{item:angelova}, the above converges to $0$.
  Thus, the expectation
  \begin{align*}
    \sqrt{n} \E\left[ c_i^2 \left( \bar\epsilon_i^2 - \frac{1}{J_i}\hat\sigma_i^2\right) \right] \to 0.
  \end{align*}
  Next, we show the convergence to expectations by Chebyshev's inequality:
  \begin{align*}
    &\quad \Pr \left( \sqrt{n}\left| \frac{1}{n} \sum_{i=1}^{n} c_i^2 \left( \bar\epsilon_i^2 - \frac{1}{J_i}\hat\sigma_i^2\right) - \E \left[ c_i^2 \left( \bar\epsilon_i^2 - \frac{1}{J_i}\hat\sigma_i^2\right)  \right] \right| > s  \right)\\
    &\leq \frac{\E \left[ c_i^4 \left( \bar\epsilon_i^2 - \frac{1}{J_i}\hat\sigma_i^2\right)^2  \right]}{s^2} \leq \frac{\E \left[  \left( \bar\epsilon_i^2 - \frac{1}{J_i}\hat\sigma_i^2\right)^2  \right]}{s^2} = \frac{\E\left[ \frac{2\sigma_i^4}{J_i \left(J_i -1\right)}  \right]}{s^2}  \to 0.
  \end{align*}
  And the proof is complete.
\end{proof}

Recall that we define in the beginning of this section $V \coloneqq \Var\left(\theta_i\right)$ and in \Cref{sec:HE}
\begin{align*}
  \hat V \coloneqq \frac{1}{n} \sum_{i=1}^{n} \left( \bar X_i - \bar X \right)^2 - \frac{n-1}{n^2} \sum_{i=1}^{n} \frac{1}{J_i} \hat \sigma^2_i.
\end{align*}
The following lemma shows that $\hat V$ converges to $V$ at $\sqrt{n}$-rate.
\begin{lemma}
  \label{lemma:V_nonnormal}
  Under \Cref{assum:nonnormal}, we have
  \begin{align*}
    \hat V = V + O_p(n^{-1/2}).
  \end{align*}
\end{lemma}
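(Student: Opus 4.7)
The plan is to write $\bar X_i - \bar X = (\theta_i - \bar\theta) + (\bar\epsilon_i - \bar\epsilon)$, expand the square in the definition of $\hat V$, and show that each resulting piece is either $V + O_p(n^{-1/2})$ or $o_p(n^{-1/2})$, with the noisy $\bar\epsilon_i^2$ average cancelling the correction term $\frac{n-1}{n^2}\sum_i \frac{1}{J_i}\hat\sigma_i^2$ up to $o_p(n^{-1/2})$.

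Specifically, I would decompose
\begin{align*}
\frac{1}{n}\sum_{i=1}^n (\bar X_i - \bar X)^2
&= \frac{1}{n}\sum_{i=1}^n (\theta_i - \bar\theta)^2
+ \frac{2}{n}\sum_{i=1}^n (\theta_i - \bar\theta)(\bar\epsilon_i - \bar\epsilon)
+ \frac{1}{n}\sum_{i=1}^n (\bar\epsilon_i - \bar\epsilon)^2.
\end{align*}
For the first term, I would center at $\E[\theta_i]$ and use Chebyshev on $(\theta_i - \E[\theta_i])^2 - V$ (valid under the finite fourth moment in Assumption \ref{assum:nonnormal}.5) together with $(\bar\theta - \E[\theta_i])^2 = O_p(n^{-1})$, obtaining $V + O_p(n^{-1/2})$. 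For the cross term, I would expand and use Lemma \ref{lemma:properties_nonnormal}.\ref{item:moment} (the case $k_1 = 0$, $k_4 = 1$) to bound $\frac{1}{n}\sum (\theta_i - \E[\theta_i])\bar\epsilon_i = o_p(n^{-1/2})$, with the remaining products involving $\bar\theta - \E[\theta_i]$ and $\bar\epsilon$ being $o_p(n^{-1/2})$ by the sample-mean properties already collected in Lemma \ref{lemma:properties_nonnormal}.

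For the third term, I would write $\frac{1}{n}\sum (\bar\epsilon_i - \bar\epsilon)^2 = \frac{1}{n}\sum \bar\epsilon_i^2 - \bar\epsilon^2$, discard $\bar\epsilon^2 = O_p(n^{-1})$, and then show the matching identity
\begin{align*}
\frac{1}{n}\sum_{i=1}^n \bar\epsilon_i^2 = \frac{1}{n}\sum_{i=1}^n \frac{1}{J_i}\hat\sigma_i^2 + o_p(n^{-1/2}).
\end{align*}
This is exactly Lemma \ref{lemma:sigma_nonnormal} in the degenerate case $c_i \equiv 1$: the mean of $\bar\epsilon_i^2 - \frac{1}{J_i}\hat\sigma_i^2$ is \emph{identically zero}, and its conditional second moment is $\frac{2\sigma_i^4}{J_i(J_i-1)}$, so a direct application of Chebyshev under Assumption \ref{assum:nonnormal}.1, \ref{assum:nonnormal}.6, and \ref{assum:nonnormal}.8 gives the $o_p(n^{-1/2})$ bound. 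Finally, the prefactor $\frac{n-1}{n^2}$ in the correction term equals $\frac{1}{n} + O(n^{-2})$, so together with $\frac{1}{n}\sum \frac{1}{J_i}\hat\sigma_i^2 = O_p(n^{-1/2})$ from Lemma \ref{lemma:properties_nonnormal}, the correction term matches $\frac{1}{n}\sum \frac{1}{J_i}\hat\sigma_i^2$ up to $O_p(n^{-3/2})$.

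The main obstacle is the cancellation in the $\bar\epsilon_i^2$ vs. $\frac{1}{J_i}\hat\sigma_i^2$ step, but because the per-unit discrepancy has exact mean zero, no Taylor expansion is needed (unlike in Lemma \ref{lemma:sigma_nonnormal}, where the presence of $c_i^2$ forces one), and the argument reduces to Chebyshev on independent mean-zero summands. Combining the three pieces with the correction term then yields $\hat V = V + O_p(n^{-1/2})$.
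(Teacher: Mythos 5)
Your proposal is correct and follows essentially the same route as the paper: expand $(\bar X_i-\bar X)^2$ into the $\theta$, cross, and noise pieces, control each via the sample-moment facts in Lemma \ref{lemma:properties_nonnormal}, and handle the correction term $\frac{n-1}{n^2}\sum_i \frac{1}{J_i}\hat\sigma_i^2$ separately. The one difference is that you prove the exact $o_p(n^{-1/2})$ cancellation between $\frac{1}{n}\sum_i\bar\epsilon_i^2$ and the correction term, whereas the paper simply bounds each of these two terms as $O_p(n^{-1/2})$ individually, which already suffices for the claimed rate; your sharper step is valid (it is Lemma \ref{lemma:sigma_nonnormal} with $c_i\equiv 1$) but is not needed for this lemma.
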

\begin{proof}[Proof of \Cref{lemma:V_nonnormal}]
  By \Cref{lemma:properties_nonnormal},
  \begin{align}
    \hat{V} &= \frac{1}{n} \sum_{i=1}^n \left( \bar{X}_i- \bar{X} \right)^2 - \frac{n-1}{n^2} \sum_{i=1}^{n} \frac{1}{J_i} \hat \sigma^2_i \nonumber\\
    &= \frac{1}{n} \sum_{i=1}^{n} \left(\theta_i - \E\left[\theta_i\right]\right)^2 + \frac{1}{n} \sum_{i=1}^{n} \bar \epsilon_i^2 - \left( \bar\theta - \E\left[\theta_i\right] \right)^2 - \bar\epsilon^2 + \frac{2}{n} \sum_{i=1}^{n} \left( \theta_i - \E \left[\theta_i\right]\right)\bar\epsilon_i  \nonumber\\
    &\quad - 2 \left(\bar\theta-\E \left[\theta_i\right]\right)\bar\epsilon  - \frac{n-1}{n^2} \sum_{i=1}^{n} \frac{1}{J_i} \hat\sigma_i^2 \nonumber\\
    &= V + O_p(n^{-1/2}) + O_p(n^{-1/2}) - O_p(n^{-1}) - o_p(n^{-1}) + o_p(n^{-1/2}) \nonumber \\
    &\quad - o_p(n^{-1}) - O_p(n^{-1/2}) \nonumber\\
    &= V + O_p(n^{-1/2}). \nonumber
  \end{align}
\end{proof}

We prove the common-weight result here.
\begin{proof}[Proof of \Cref{prop:CW}]
  Firstly, for simplicity we abbreviate notations and denote $\hat\beta_{\CW}$ as $\hat\beta$, and $\Var\left(\theta_i\right)$ as $V$.
  \begin{align*}
    \sqrt{n}\left(\hat\beta - \beta\right) &=   \hat V ^{-1}\left( \beta \frac{1}{\sqrt{n}} \sum_{i=1}^{n} \left(\bar X_i - \bar X\right) \left(\theta_i - \bar \theta\right) - \beta \sqrt{n}\hat V + \frac{1}{\sqrt{n}} \sum_{i=1}^{n} \left( \bar X_i - \bar X \right)\left(u_i - \bar u\right) \right)\\
    &= \frac{\beta \sqrt{n} T_{1,n} - \beta \sqrt{n}T_{2.n} + \sqrt{n}T_{3,n}}{T_{2,n}}.
  \end{align*}
  Firstly, from \Cref{lemma:properties_nonnormal} and from the proof of \Cref{lemma:sigma_nonnormal}, we have for the denominator,
  \begin{align*}
    T_{2,n} &= \frac{1}{n} \sum_{i=1}^{n}  \left(\bar X_i - \bar X\right)^2 - \frac{1}{n} \sum_{i=1}^{n}\frac{1}{J_i} \hat\sigma_i^2  + o_p(n^{-1/2})\\
    &= \frac{1}{n} \sum_{i=1}^{n}  \left(\bar X_i - \bar X\right)^2 - \frac{1}{n} \sum_{i=1}^{n} \bar\epsilon_i^2 + o_p(n^{-1/2}).
  \end{align*}
  From \Cref{lemma:V_nonnormal}, 
  \begin{align*}
    T_{2,n} &= V + O_p(n^{-1/2}).
  \end{align*}
  For the numerator terms, by properties from \Cref{lemma:properties_nonnormal},
  \begin{align*}
    \sqrt{n} T_{1,n} &= \frac{1}{\sqrt{n}}\sum_{i=1}^{n} \left(\theta_i - \E \left[\theta_i\right]\right)^2 + \frac{1}{\sqrt{n}} \sum_{i=1}^{n} \bar\epsilon_i \left(\theta_i - \E \left[\theta_i\right]\right) \\
    &\quad - \sqrt{n} \left(\bar\theta - \E \left[\theta_i\right]\right)^2 - \sqrt{n} \left(\bar\theta - \E \left[\theta_i\right]\right)\bar\epsilon\\
    &= \frac{1}{\sqrt{n}}\sum_{i=1}^{n} \left(\theta_i - \E \left[\theta_i\right]\right)^2 + o_p(1),
  \end{align*}
  \begin{align*}
    \sqrt{n} T_{3,n} &= \frac{1}{\sqrt{n}} \sum_{i=1}^{n} \left(\theta_i - \E\left[\theta_i\right]\right) u_i + \frac{1}{\sqrt{n}} \sum_{i=1}^{n} \bar\epsilon_i u_i\\
    &\quad - \sqrt{n} \left(\bar\theta - \E \left[\theta_i\right]\right) \bar u - \sqrt{n} \bar \epsilon \bar u \\
    &= \frac{1}{\sqrt{n}} \sum_{i=1}^{n} \left(\theta_i - \E\left[\theta_i\right]\right) u_i + o_p(1).
  \end{align*}
  Combined with the proof of \Cref{lemma:V_nonnormal}, we have
  \begin{align*}
    \sqrt{n}T_{2,n} &= \frac{1}{\sqrt{n}} \sum_{i=1}^{n}\left[ \left(\theta_i - \E\left[\theta_i\right]\right)^2 + \bar \epsilon_i^2\right] - \sqrt{n}\left( \bar\theta - \E\left[\theta_i\right] \right)^2 - \sqrt{n}\bar\epsilon^2 + \frac{2}{\sqrt{n}} \sum_{i=1}^{n} \left( \theta_i - \E \left[\theta_i\right]\right)\bar\epsilon_i \\
    &\quad - 2 \sqrt{n}\left(\bar\theta-\E \left[\theta_i\right]\right)\bar\epsilon  - \frac{1}{\sqrt{n}}  \sum_{i=1}^{n} \bar\epsilon_i^2 \\
    &= \frac{1}{\sqrt{n}} \sum_{i=1}^{n} \left(\theta_i - \E\left[\theta_i\right]\right)^2  + o_p(1).
  \end{align*}
  Therefore, the numerator is
  \begin{align*}
    &\quad \beta\sqrt{n} T_{1,n} - \beta\sqrt{n} T_{2,n} + \sqrt{n} T_{3,n} \\
    &=  \frac{1}{\sqrt{n}} \sum_{i=1}^{n} \left(\theta_i - \E\left[\theta_i\right]\right) u_i + o_p(1).
  \end{align*}
Then applying the central limit theorem combined with the denominator, by Slutsky's theorem,
\begin{align*}
\sqrt{n} \left( \hat\beta - \beta \right)= \frac{\beta \sqrt{n} T_{1,n} - \beta \sqrt{n} T_{2,n} + \sqrt{n} T_{3,n}}{T_{2,n}}\to_d N\left(0, \frac{\E \left[ u_i^2 \left(\theta_i-\E \left[\theta_i\right]\right)^2 \right]}{V^2}\right).
\end{align*}

\end{proof}

\section{Lemmas and Proofs for Inference}
Firstly, for simplicity we abbreviate notations and denote $\hat\beta_{c,\HE}$ as $\hat\beta_c$, $\hat\beta_\HE$ as $\hat\beta$, $\hat\theta_{i,c,\HE}$ as $\hat\theta_{i,c}$, $\hat\theta_{i,\HE}$ as $\hat\theta_i$ and $\Var\left(\theta_i\right)$ as $V$.

\begin{lemma}
  \label{lemma:inference_c}
  Under \Cref{assum:nonnormal} and \Cref{assum:inference},
  \begin{align*}
    \frac{1}{n} \sum_{i-1}^{n} \left(\hat\theta_{i,c} - \bar{\hat\theta}_c\right)^2\hat u_{i,c}^2 \to _p \E \left[ u_i^2 \left(\theta_i - \E\left[\theta_i\right]\right)^2  \right].
  \end{align*}
\end{lemma}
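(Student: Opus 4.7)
The plan is to exploit the OLS normal equations for the semi-oracle regression, which give \(\hat u_{i,c} = (Y_i - \bar Y) - \hat\beta_{c,\HE}(\hat\theta_{i,c,\HE} - \bar{\hat\theta}_{c,\HE})\), so that squaring and multiplying by \((\hat\theta_{i,c,\HE} - \bar{\hat\theta}_{c,\HE})^2\) and averaging yields
\begin{align*}
\frac{1}{n}\sum_{i=1}^n \left(\hat\theta_{i,c,\HE} - \bar{\hat\theta}_{c,\HE}\right)^2 \hat u_{i,c}^2 &= \frac{1}{n}\sum_{i=1}^n \left(\hat\theta_{i,c,\HE} - \bar{\hat\theta}_{c,\HE}\right)^2 (Y_i - \bar Y)^2 \\
&\quad - 2\hat\beta_{c,\HE} \cdot \frac{1}{n}\sum_{i=1}^n \left(\hat\theta_{i,c,\HE} - \bar{\hat\theta}_{c,\HE}\right)^3 (Y_i - \bar Y) \\
&\quad + \hat\beta_{c,\HE}^2 \cdot \frac{1}{n}\sum_{i=1}^n \left(\hat\theta_{i,c,\HE} - \bar{\hat\theta}_{c,\HE}\right)^4.
\end{align*}
Since \Cref{lemma:beta_c} gives \(\hat\beta_{c,\HE} \to_p \beta\), it suffices by Slutsky's theorem to prove, for each \(k \in \{2,3,4\}\), the limit
\[
\frac{1}{n}\sum_{i=1}^n \left(\hat\theta_{i,c,\HE} - \bar{\hat\theta}_{c,\HE}\right)^k (Y_i - \bar Y)^{4-k} \to_p \E\!\left[(\theta_i - \E[\theta_i])^k (Y_i - \E[Y_i])^{4-k}\right],
\]
and then combine the three limits with continuous mapping.

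For each such \(k\), I would substitute the decomposition \(\hat\theta_{i,c,\HE} - \bar{\hat\theta}_{c,\HE} = c_i(\bar X_i - \bar X) - \frac{1}{n}\sum_\ell c_\ell(\bar X_\ell - \bar X)\) already used in the proof of \Cref{lemma:beta_c}, together with \(\bar X_i = \theta_i + \bar\epsilon_i\) and \(c_i = 1 + O_p(n^{-1/2})\) from \Cref{lemma:properties_nonnormal}. Expanding binomially, each sample fourth-order moment splits into a leading term, \(\frac{1}{n}\sum_i (\theta_i - \E[\theta_i])^k(Y_i - \E[Y_i])^{4-k}\), which converges in probability to the population counterpart by the weak law of large numbers under \Cref{assum:inference}, plus residual sample averages each containing at least one factor among \(\bar\epsilon_i\), \(1-c_i\), \(\bar\epsilon\), \(\bar\theta - \E[\theta_i]\), \(\bar X - \E[\theta_i]\), or \(\bar Y - \E[Y_i]\). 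Each such residual is \(o_p(1)\) by Chebyshev/Markov-type second-moment bounds entirely analogous to those in \Cref{lemma:properties_nonnormal}, but now at the fourth-order level; the strengthened moments \(\E[\theta_i^8]<\infty\) and \(\E[|\theta_i^{k_1} u_i^{k_2}|]<\infty\) for \(1\le k_1\le 8\), \(1\le k_2\le 4\) in \Cref{assum:inference} are exactly what is required to make the relevant variance bounds finite.

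Having established the three limits, I substitute \(Y_i - \E[Y_i] = \beta(\theta_i - \E[\theta_i]) + u_i\) into them and combine with the coefficients \(1, -2\beta, \beta^2\) to obtain
\[
\E\!\left[(\theta_i - \E[\theta_i])^2\left\{(Y_i - \E[Y_i]) - \beta(\theta_i - \E[\theta_i])\right\}^2\right] = \E\!\left[(\theta_i - \E[\theta_i])^2 u_i^2\right],
\]
which is the claimed limit. The cancellation is purely algebraic and does not require \(\E[u_i(\theta_i - \E[\theta_i])^k] = 0\) for \(k \ge 2\). The main obstacle is the fourth-order bookkeeping in the \(k=2\) term, where cross terms such as \((\theta_i - \E[\theta_i])\bar\epsilon_i u_i^2\) and \((1-c_i)(\theta_i - \E[\theta_i])^2 u_i^2\) enter simultaneously; controlling these is where the higher moment bounds of \Cref{assum:inference} are essential, since the lower-order moments in \Cref{assum:nonnormal} alone would not suffice to make the Chebyshev second-moment bounds finite.
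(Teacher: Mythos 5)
Your proposal is correct and follows essentially the same route as the paper: expand $\hat u_{i,c}$ via the OLS fitted value into the three terms with coefficients $1$, $-2\hat\beta_{c,\HE}$, $\hat\beta_{c,\HE}^2$, establish convergence of each fourth-order sample moment to its population counterpart (this is exactly what the paper's \Cref{lemma:theta_4}--\ref{lemma:theta_u} do, using the same Chebyshev-plus-higher-moment bounds under \Cref{assum:inference}), and recombine algebraically using $\hat\beta_{c,\HE}\to_p\beta$. The only cosmetic difference is that the paper substitutes $Y_i-\bar Y=\beta(\theta_i-\bar\theta)+(u_i-\bar u)$ at the sample level before taking limits, whereas you carry $(Y_i-\bar Y)^{4-k}$ through and perform the substitution on the population moments at the end; the two are equivalent.
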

\begin{proof}[Proof of \Cref{lemma:inference_c}]
  Combining \Cref{lemma:theta_4}, \ref{lemma:theta_3}, \ref{lemma:theta_2}, \ref{lemma:theta_u}, we have
  \begin{align*}
    &\quad \frac{1}{n} \sum_{i=1}^{n} \left(\hat\theta_{i,c} - \bar{\hat\theta}_c\right)^2\hat u_{i,c}^2\\
    &= \frac{1}{n} \sum_{i=1}^{n} \left(\hat \theta_{i,c} - \bar{\hat\theta}_c\right)^2 \left(Y_i - \bar Y - \hat\beta_c\left(\hat\theta_{i,c} - \bar{\hat\theta}_c\right) \right)^2\\
    &= \frac{1}{n} \sum_{i=1}^{n} \left(\hat \theta_{i,c} - \bar{\hat\theta}_c\right)^2 \left(Y_i - \bar Y\right)^2 - 2\hat\beta_c \frac{1}{n} \sum_{i=1}^{n} \left(\hat \theta_{i,c} - \bar{\hat\theta}_c\right)^3 \left(Y_i - \bar Y\right)\\
    &\quad + \hat\beta_c^2 \frac{1}{n} \sum_{i=1}^{n} \left(\hat \theta_{i,c} - \bar{\hat\theta}_c\right)^4.\\
    &= \beta^2 \frac{1}{n} \sum_{i=1}^{n} \left(\hat \theta_{i,c} - \bar{\hat\theta}_c\right)^2\left(\theta_i - \bar\theta\right)^2 - 2\beta \frac{1}{n} \sum_{i=1}^{n} \left(\hat \theta_{i,c} - \bar{\hat\theta}_c\right)^2 \left(\theta_i - \bar\theta\right)\left(u_i - \bar u\right)\\
    &\quad + \frac{1}{n} \sum_{i=1}^{n} \left(\hat \theta_{i,c} - \bar{\hat\theta}_c\right)^2 \left(u_i - \bar u\right)^2\\
    &\quad - 2\hat\beta_{c}\beta \frac{1}{n} \sum_{i=1}^{n} \left(\hat \theta_{i,c} - \bar{\hat\theta}_c\right)^3 \left(\theta_i - \bar\theta\right) + 2\hat\beta_{c} \frac{1}{n} \sum_{i=1}^{n} \left(\hat \theta_{i,c} - \bar{\hat\theta}_c\right)^3 \left(u_i - \bar u\right)\\
    &\quad + \hat\beta_c^2 \frac{1}{n} \sum_{i=1}^{n} \left(\hat \theta_{i,c} - \bar{\hat\theta}_c\right)^4\\
    &= \beta^2 \E \left[ \left(\theta_i - \E \left[\theta_i\right]\right)^4 \right] - 2\beta \E\left[\left(\theta_i - \E\left[\theta_i\right]\right)^3 u_i \right] + \E \left[\left(\theta_i - \E \left[\theta_i\right]\right)^2 u_i^2\right]\\
    &\quad - 2\beta^2 \E \left[\left(\theta_i - \E \left[\theta_i\right]\right)^4\right] + 2\beta  \E\left[\left(\theta_i - \E\left[\theta_i\right]\right)^3 u_i \right]+ \beta^2 \E \left[ \left(\theta_i - \E \left[\theta_i\right]\right)^4 \right] + o_p(1)\\
    &= \E \left[ u_i^2 \left(\theta_i - \E\left[\theta_i\right]\right)^2  \right] + o_p(1).
  \end{align*}
  
\end{proof}

\begin{lemma}
  \label{lemma:theta_4}
  Under  \Cref{assum:nonnormal} and \Cref{assum:inference},
  \begin{align*}
    \frac{1}{n} \sum_{i=1}^{n} \left(\hat \theta_{i,c} - \bar{\hat\theta}_c\right)^4 \to_p \E \left[ \left(\theta_i - \E\left[\theta_i\right]\right)^4\right].
  \end{align*}
\end{lemma}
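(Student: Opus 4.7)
The plan is to decompose the centered shrinkage estimator $A_i := \hat\theta_{i,c} - \bar{\hat\theta}_c$ as a leading term $\theta_i - \E[\theta_i]$ plus a remainder $R_i$ whose components are controlled by the results of \Cref{lemma:properties_nonnormal}, and then handle the multinomial expansion of $A_i^4$ term by term. Substituting $\bar X_i = \theta_i + \bar\epsilon_i$ into the definition of $\hat\theta_{i,c}$ gives
\begin{align*}
A_i = c_i(\theta_i - \bar\theta) + c_i(\bar\epsilon_i - \bar\epsilon) - \frac{1}{n}\sum_{k=1}^n c_k(\theta_k - \bar\theta) - \frac{1}{n}\sum_{k=1}^n c_k(\bar\epsilon_k - \bar\epsilon).
\end{align*}
Rearranging the first term as $c_i(\theta_i - \bar\theta) = (\theta_i - \E[\theta_i]) + (c_i-1)(\theta_i - \E[\theta_i]) - c_i(\bar\theta - \E[\theta_i])$ yields $A_i = (\theta_i - \E[\theta_i]) + R_i$, where $R_i$ collects an $i$-indexed piece $(c_i-1)(\theta_i - \E[\theta_i]) + c_i(\bar\epsilon_i - \bar\epsilon)$ with vanishing moments (since $\E[(1-c_i)^q] = O(\E[J_i^{-q}])$ and $\E[\bar\epsilon_i^{2q}\mid \sigma_i^2, J_i] = O(J_i^{-q}\sigma_i^{2q})$), and three ``global'' pieces constant in $i$ that are $O_p(n^{-1/2})$ by \Cref{lemma:properties_nonnormal} parts 2 and 3.

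Next, I expand $A_i^4 = \sum_{k=0}^4 \binom{4}{k}(\theta_i - \E[\theta_i])^{4-k} R_i^k$, sum over $i$, and divide by $n$. The leading ($k=0$) term $\frac{1}{n}\sum_i (\theta_i - \E[\theta_i])^4$ converges in probability to $\E[(\theta_i - \E[\theta_i])^4]$ by Chebyshev's inequality, using $\E[\theta_i^8]<\infty$ from \Cref{assum:inference} to bound the variance of $(\theta_i - \E[\theta_i])^4$. For each $k\geq 1$, I would expand $R_i^k$ by the multinomial theorem into products of its five components. Global factors pull out of the sum as $O_p(n^{-1/2})$ constants and thus contribute $o_p(1)$ after multiplication by the $O_p(1)$ sample moments $\frac{1}{n}\sum_i (\theta_i - \E[\theta_i])^{4-k}$. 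For the remaining terms of the form $\frac{1}{n}\sum_i (\theta_i - \E[\theta_i])^{4-k}(c_i-1)^a \bar\epsilon_i^b$, I apply Chebyshev's inequality, factorizing expectations via $\theta_i \ind (J_i,\sigma_i^2,\epsilon_{i,j})$ from \Cref{assum:nonnormal}.1 and .3. The factored expectation $\E[(\theta_i - \E[\theta_i])^{4-k}(c_i-1)^a \bar\epsilon_i^b] = \E[(\theta_i - \E[\theta_i])^{4-k}]\,\E[(c_i-1)^a \bar\epsilon_i^b]$ vanishes because at least one of $a$ or $b$ is positive, and the second moment needed for Chebyshev is finite by $\E[\theta_i^8]<\infty$ and $\E[\sigma_i^{16}]<\infty$.

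The main obstacle is bookkeeping: the multinomial expansion of $R_i^k$ for $k=1,\ldots,4$ produces many cross terms mixing powers of $(c_i-1)$, $\bar\epsilon_i$, and global $O_p(n^{-1/2})$ factors, each of which requires a tailored moment bound with enough room to absorb four total powers of ``noisy'' quantities paired with up to four powers of $\theta_i - \E[\theta_i]$. The stronger moment conditions in \Cref{assum:inference}, in particular $\E[\theta_i^8]<\infty$, are precisely what allow the resulting Chebyshev bounds to deliver $o_p(1)$ rather than merely $O_p(1)$, so that every cross term is negligible and only the leading $\E[(\theta_i - \E[\theta_i])^4]$ survives in the limit.
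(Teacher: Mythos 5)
Your proposal is correct and follows essentially the same route as the paper: a polynomial expansion of the centered regressor combined with Chebyshev's inequality, the moment bounds of \Cref{lemma:properties_nonnormal}, and independence to factorize expectations (the paper organizes the bookkeeping around showing $\frac{1}{n}\sum_i c_i^{k_1}\bar X_i^{k_2} \to_p \E[\theta_i^{k_2}]$ and then binomially expanding $\left(\hat\theta_{i,c}-\bar{\hat\theta}_c\right)^4$ in terms of $c_i\bar X_i$ and $c_i-\bar c$, rather than isolating an explicit remainder $R_i$, but the content is the same). One small wording fix: $\E\left[(c_i-1)^a\bar\epsilon_i^{\,b}\right]$ does not vanish exactly (e.g.\ for $a=1,b=0$ or $a=0,b=2$) — it only tends to zero, which is what the moment bounds you cite deliver and is all that the Chebyshev step requires.
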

\begin{proof}[Proof of \Cref{lemma:theta_4}]
  For any $s  >0$ and integers $k_1\geq0$, $0\leq k_2, k_3 \leq 4$, 
  \begin{align*}
    \Pr \left( \left| \frac{1}{n}\sum_{i=1}^{n} c_i^{k_1} \bar X_i ^{k_2} - \E \left[ c_i^{k_1} \bar X_i^{k_2} \right] \right| >s \right) \leq \frac{\E \left[ c_i^{2k_1} \bar X_i^{2k_2} \right]}{ns^2} \leq \frac{\E \left[ \bar X_i^{2k_2} \right]}{ns^2} \to 0.
  \end{align*}
   \begin{align*}
    &\E \left[ c_i^{k_1} \theta_i ^{k_3} \right] = \E \left[ c_i^{k_1} \right] \E \left[ \theta_i ^{k_3} \right] \to \E \left[ \theta_i^{k_3}\right]. &\text{(By \Cref{lemma:properties_nonnormal} \ref{item:c})}
   \end{align*}
   For $k_4 = 4$, by independence and \eqref{eq:var_nonnormal},
   \begin{align*}
    \left| \E \left[ c_i^{k_1} \theta_i^{k_3}\bar\epsilon_i^{k_4} \right]  \right| \leq \E \left[\left|\theta_i^{k_3} \right| \right]\E \left[ \bar\epsilon_i^{k_4} \right] &= \E \left[\left|\theta_i^{k_3} \right| \right]\E \left[ \frac{1}{J_i^3}\E\left[\epsilon_{i,j}^4 \mid J_i,\sigma_i^2 \right] + \frac{3\left(J_i -1\right)}{J_i^3}\sigma_i^2\right]\\
    &\leq \E \left[\left|\theta_i^{k_3} \right| \right]\E \left[ \frac{1}{J_i^3}K\sigma_i^4 + \frac{3\left(J_i -1\right)}{J_i^3}\sigma_i^2\right]\\
    &= \E \left[\left|\theta_i^{k_3} \right| \right]\left( \E \left[ \frac{1}{J_i^3}\right]K\E\left[\sigma_i^4\right] + \E\left[\frac{3\left(J_i -1\right)}{J_i^3}\right]\E\left[\sigma_i^2\right]  \right)\to 0.
   \end{align*}
   Then by Jensen's inequality, for $1\leq k_4 \leq 4$,
   \begin{align*}
    \left| \E \left[ c_i^{k_1} \theta_i^{k_3}\bar\epsilon_i^{k_4} \right]  \right| \to 0.
   \end{align*}
   Therefore for $0\leq k_2\leq 4$,
   \begin{align*}
    &\quad \E \left[ c_i^{k_1} \bar X_i^{k_2} \right] \\
    &= \sum_{k_4=0}^{k_2}  \binom{k_2}{k_4}\E \left[c_i^{k_1}  \theta_i^{k_2 - k_4} \bar\epsilon_i^{k_4}  \right]\\
    &\to \E \left[ \theta_i^{k_2} \right].
   \end{align*}
   So for $0\leq k_2\leq 4$,
   \begin{align*}
    \frac{1}{n} \sum_{i=1}^{n} c_i^{k_1} \bar X_i ^{k_2} \to_p \E \left[ \theta_i^{k_2} \right].
   \end{align*}
   Therefore for $1 \leq k \leq 4$, $k_2 \geq0$,
   \begin{align*}
    &\quad \frac{1}{n}\sum_{i=1}^{n} \left( c_i \bar X_i - \frac{1}{n}\sum_{i=1}^{n} c_i \bar X_i \right)^{k}\\
    &= \sum_{k_1=0}^{k}\binom{k}{k_1} \left( - \frac{1}{n} \sum_{i=1}^{n}c_i \bar X_i \right)^{k-k_1} \frac{1}{n}\sum_{i=1}^{n} c_i^{k_1} \bar X_i^{k_1}\\
    &= \sum_{k_1 = 0}^{k} \binom{k}{k_1} \left(- \E \left[\theta_i\right] \right)^{k-k_1} \E \left[ \theta_i^{k_1} \right] + o_p(1)\\
    &= \E \left[ \left(\theta_i - \E \left[\theta_i\right]\right)^k \right] + o_p(1).
   \end{align*}
   \begin{align*}
    &\quad \frac{1}{n}\sum_{i=1}^{n} \left( c_i \bar X_i - \frac{1}{n}\sum_{i=1}^{n} c_i \bar X_i \right)^{k} c_i^{k_2}\\
    &= \sum_{k_1=0}^{k}\binom{k}{k_1} \left( - \frac{1}{n} \sum_{i=1}^{n}c_i \bar X_i \right)^{k-k_1} \frac{1}{n}\sum_{i=1}^{n} c_i^{k_1+k_2} \bar X_i^{k_1}\\
    &= \E \left[ \left(\theta_i - \E \left[\theta_i\right]\right)^k \right] + o_p(1).
   \end{align*}
   \begin{align*}
    &\quad \frac{1}{n}\sum_{i=1}^{n} \left( c_i \bar X_i - \frac{1}{n}\sum_{i=1}^{n} c_i \bar X_i \right)^{k} \left(c_i - \bar c\right)^{k_2}\\
    &= \sum_{k_1=0}^{k}\binom{k}{k_1} \left( - \frac{1}{n} \sum_{i=1}^{n}c_i \bar X_i \right)^{k-k_1} \frac{1}{n}\sum_{i=1}^{n} c_i^{k_1+k_2} \bar X_i^{k_1}\left(c_i - \bar c\right)^{k_2}\\
    &=  \sum_{k_1=0}^{k}\binom{k}{k_1} \left( - \frac{1}{n} \sum_{i=1}^{n}c_i \bar X_i \right)^{k-k_1} \sum_{t=0}^{k_2}\binom{k_2}{t} \left( -\bar c \right)^{k_2 - t}\frac{1}{n}\sum_{i=1}^{n} c_i^{k_1+k_2+t} \bar X_i^{k_1}\\
    &=  \sum_{k_1=0}^{k}\binom{k}{k_1} \left( - \E \left[\theta_i\right] \right)^{k-k_1} \sum_{t=0}^{k_2}\binom{k_2}{t} \left( -1 \right)^{k_2 - t}\E \left[\theta_i^{k_1}\right] + o_p(1)\\
    &= o_p(1).
   \end{align*}
   The above also applies to $k=0$.

   Therefore,
   \begin{align*}
    &\quad \frac{1}{n} \sum_{i=1}^{n} \left(\hat \theta_{i,c} - \bar{\hat\theta}_c\right)^4\\
    &= \frac{1}{n} \sum_{i=1}^{n} \left( c_i \left( \bar X_i - \bar X \right) - \frac{1}{n} \sum_{i=1}^{n} \left( c_i \left(\bar X_i - \bar X\right) \right) \right)^4\\
    &= \frac{1}{n} \sum_{i=1}^{n} \left( c_i \bar X_i  - \frac{1}{n} \sum_{i=1}^{n} c_i \bar X_i - \bar X \left(c_i - \bar c  \right) \right)^4\\
    &= \frac{1}{n} \sum_{i=1}^{n} \left(c_i \bar X_i - \frac{1}{n}\sum_{i=1}^{n} c_i \bar X_i \right)^4 - 4 \bar X \frac{1}{n} \sum_{i=1}^{n} \left(c_i \bar X_i - \frac{1}{n}\sum_{i=1}^{n} c_i \bar X_i \right)^3 \left(c_i - \bar c \right)\\
    & \quad + 6 \bar X^2 \frac{1}{n} \sum_{i=1}^{n} \left(c_i \bar X_i - \frac{1}{n}\sum_{i=1}^{n} c_i \bar X_i \right)^2 \left(c_i - \bar c \right)^2 - 4 \bar X^3 \frac{1}{n} \sum_{i=1}^{n} \left(c_i \bar X_i - \frac{1}{n}\sum_{i=1}^{n} c_i \bar X_i \right) \left(c_i - \bar c \right)^3\\
    &\quad + \bar X^4 \frac{1}{n}\sum_{i=1}^{n} \left(c_i - \bar c \right)^4. \\
    &= \E \left[ \left(\theta_i - \E \left[\theta_i\right]\right)^4 \right] + o_p(1).
  \end{align*}
\end{proof}

\begin{lemma}
  \label{lemma:theta_3}
  Under \Cref{assum:nonnormal} and \Cref{assum:inference},
  \begin{align*}
    \frac{1}{n} \sum_{i=1}^{n} \left(\hat\theta_{i,c} - \bar{\hat\theta}_c\right)^{k} \left(\theta_i - \bar\theta\right)^{4-k} \to_p \E \left[ \left(\theta_i - \E\left[\theta_i\right]\right)^4 \right].
  \end{align*}
\end{lemma}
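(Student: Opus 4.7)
The plan is to mirror the structure of the proof of \Cref{lemma:theta_4} and reduce the claim to a collection of sample averages that can be handled by Chebyshev's inequality and the moment bounds in \Cref{lemma:properties_nonnormal} and \Cref{assum:inference}. The key identity is
\begin{align*}
  \hat\theta_{i,c} - \bar{\hat\theta}_c
  = \Bigl(c_i \bar X_i - \tfrac{1}{n}\sum_{j=1}^n c_j \bar X_j\Bigr) - \bar X\,(c_i - \bar c),
\end{align*}
obtained from $\hat\theta_{i,c} = c_i \bar X_i + (1-c_i)\bar X$. Substituting this into $(\hat\theta_{i,c} - \bar{\hat\theta}_c)^k$ and applying the binomial theorem yields, for each $k \in \{0,1,2,3,4\}$, a finite sum of terms each of which, after averaging over $i$ and multiplying by $(\theta_i - \bar\theta)^{4-k}$, reduces to a constant prefactor (a power of $\bar X$ or $\bar c$, both $O_p(1)$) times a sample average of the form
\begin{align*}
  M_n(a_1,a_2,a_3,a_4) \coloneqq \frac{1}{n}\sum_{i=1}^n c_i^{a_1}\bar X_i^{a_2}(c_i-\bar c)^{a_3}(\theta_i - \bar\theta)^{a_4},
\end{align*}
with nonnegative integer exponents satisfying $a_2 + a_3 \le k$ and $a_4 \le 4-k$, hence $a_2+a_3+a_4 \le 4$.

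The next step is to compute the probability limits of each $M_n$. Writing $\bar X_i = \theta_i + \bar\epsilon_i$ and further expanding $(c_i-\bar c)^{a_3}$ via the binomial theorem (using $\bar c = 1 + O_p(n^{-1/2})$ from \Cref{lemma:properties_nonnormal}.\ref{item:c}), each $M_n$ becomes a finite linear combination of terms of the form $n^{-1}\sum_i c_i^{b_1}\theta_i^{b_2}\bar\epsilon_i^{b_3}$ with $b_2 + b_3 \le 4$. By Chebyshev's inequality together with the independence $c_i \ind (\theta_i, u_i)$ (consequence of \Cref{assum:nonnormal}.1 and \Cref{assum:nonnormal}.3) and $\E[\theta_i^8] < \infty$ from \Cref{assum:inference}.2, each such average concentrates at its mean; using \Cref{lemma:properties_nonnormal} and the bound $\E[c_i^m] = 1+O(n^{-1/2})$ together with $\E[|\bar\epsilon_i|^{b_3}] \to 0$ for $b_3 \ge 1$, only terms with $b_3 = 0$ contribute in the limit. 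Thus $M_n(a_1,a_2,a_3,a_4) \to_p 0$ whenever $a_3 \ge 1$ (the prefactor $\bar c^{a_3}$-expansion provides the cancellation of the constant terms), and $M_n(a_1,a_2,0,a_4) \to_p \E[\theta_i^{a_2}]\cdot\E[(\theta_i-\E\theta_i)^{a_4}]$ up to terms vanishing from expansion of the centering.

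Collecting everything, the only contribution that survives in the limit is the leading term of the binomial expansion of $(c_i\bar X_i - \overline{c\bar X})^k$ that pairs $c_i^k (\bar X_i - \E\theta_i)^k$ with $(\theta_i-\bar\theta)^{4-k}$. Because $c_i^k \to_p 1$ and $\bar X_i - \E\theta_i = (\theta_i - \E\theta_i) + \bar\epsilon_i$ with $\bar\epsilon_i$-contributions vanishing in average by \Cref{lemma:properties_nonnormal}.\ref{item:moment}, this term converges in probability to $\E[(\theta_i - \E\theta_i)^4]$, which is the claimed limit.

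The only nontrivial obstacle is the bookkeeping: verifying that every cross-term generated by the double binomial expansion either has $b_3 \ge 1$ (and hence vanishes through the $\bar\epsilon_i$-moment) or is one of the bias-cancelling pairs created by subtracting $\overline{c\bar X}$ and $\bar c$. The moment requirement $\E[\theta_i^8] < \infty$ is exactly what is needed: the Chebyshev bound on $M_n$ involves $\E[c_i^{2a_1}\bar X_i^{2a_2}(c_i-\bar c)^{2a_3}(\theta_i-\bar\theta)^{2a_4}]$, whose polynomial part in $\theta_i$ has total degree at most $2(a_2+a_4) \le 8$. Once this is checked, the rest of the argument is routine LLN.
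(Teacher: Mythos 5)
Your plan is correct and follows essentially the same route as the paper: the paper likewise reduces the claim to Chebyshev concentration of sample averages of the form $\tfrac{1}{n}\sum_i c_i^{k_1}\bar X_i^{k_2}\theta_i^{k_3}$ with $k_2+k_3\le 4$ (which is where $\E[\theta_i^8]<\infty$ enters, exactly as you note), shows their expectations converge to $\E[\theta_i^{k_2+k_3}]$ because all $\bar\epsilon_i$-moments vanish, and then expands $(\theta_i-\bar\theta)^{4-k}$ and $(\hat\theta_{i,c}-\bar{\hat\theta}_c)^k$ binomially using the same decomposition $\hat\theta_{i,c}-\bar{\hat\theta}_c = (c_i\bar X_i - \overline{c\bar X}) - \bar X(c_i-\bar c)$ borrowed from \Cref{lemma:theta_4}. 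The bookkeeping you defer is exactly the bookkeeping the paper carries out, so no gap.
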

\begin{proof}[Proof of \Cref{lemma:theta_3}]
  For any $s>0$ and integers $k_1 \geq0$, $0 \leq k_2 \leq 4$, $k_2 + k_3 \leq 4$, by Chebyshev's inequality,
  \begin{align*}
    \Pr \left( \left| \frac{1}{n} \sum_{i=1}^{n} c_i^{k_1} \bar X_i^{k_2} \theta_i^{k_3} - \E \left[ c_i^{k_1} \bar X_i^{k_2} \theta_i^{k_3} \right]  \right| >s\right) \leq \frac{\E \left[ c_i^{2k_1} \bar X_i^{2k_2} \theta_i^{2k_3} \right]}{ns^2} \leq \frac{\E \left[ \bar X_i^{2k_2} \theta_i^{2k_3} \right]}{ns^2} \to 0.
  \end{align*}
  For $0 \leq k_2 \leq 4$, $k_2 + k_3 \leq 4$,
  \begin{align*}
    &\quad \E \left[ c_i^{k_1} \bar X_i^{k_2} \theta_i^{k_3} \right]\\
    &= \sum_{k_4 = 0}^{k_2} \binom{k_2}{k_4} \E \left[ c_i^{k_1}  \theta_i^{k_2 - k_4 + k_3} \bar \epsilon_i^{k_4} \right]\\
    &\to \E \left[ \theta_i^{k_2 + k_3} \right]. &\text{(By \Cref{lemma:properties_nonnormal} \ref{item:c})}
  \end{align*}
  Therefore, similar to the proof of \Cref{lemma:theta_4},
  \begin{align*}
    &\quad \frac{1}{n} \sum_{i=1}^{n} \left(\hat\theta_{i,c} - \bar{\hat\theta}_c\right)^{k} \left(\theta_i - \bar\theta\right)^{4-k}\\
    &= \sum_{k_1=0}^{4-k} \binom{4-k}{k_1} \bar\theta^{4-k-k_1} \frac{1}{n} \sum_{i=1}^{n} \left(\hat\theta_{i,c} - \bar{\hat\theta}_c\right)^{k} \theta_i^{k_1}\\
    &= \sum_{k_1=0}^{4-k} \binom{4-k}{k_1} \left( \E \left[\theta_i\right]\right)^{4-k-k_1} \E \left[ \left(\theta_i -\E \left[\theta_i\right]\right)^k \theta_i^{k_1}  \right]  + o_p(1)\\
    &= \E \left[ \left(\theta_i - \E \left[\theta_i\right]\right)^4 \right] + o_p(1).
  \end{align*}
\end{proof}

\begin{lemma}
  \label{lemma:theta_2}
  Under \Cref{assum:nonnormal} and \Cref{assum:inference},
  \begin{align*}
    \frac{1}{n} \sum_{i=1}^{n} \left(\hat\theta_{i,c} - \bar{\hat\theta}_c\right)^3 \left(u_i - \bar u \right)\to_p \E \left[ \left(\theta_i - \E\left[\theta_i\right]\right)^3 u_i  \right],
  \end{align*}
  \begin{align*}
    \frac{1}{n} \sum_{i=1}^{n} \left(\hat\theta_{i,c} - \bar{\hat\theta}_c\right)^2 \left(\theta_i -\bar \theta\right)\left(u_i - \bar u \right) \to_p \E\left[ \left(\theta_i - \E\left[\theta_i\right]\right)^3 u_i \right].
  \end{align*}
\end{lemma}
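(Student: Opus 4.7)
The plan is to extend the bookkeeping argument used for \Cref{lemma:theta_4} and \Cref{lemma:theta_3}, with the extra ingredient being the $u_i$ factor handled via the conditional independence $u_i \ind \epsilon_{i,j} \mid \theta_i$. The central new building block to establish is: for integers $k_1 \geq 0$ and $k_2, k_3 \geq 0$ with $k_2 + k_3 \leq 3$,
\begin{align*}
  \frac{1}{n} \sum_{i=1}^{n} c_i^{k_1} \bar X_i^{k_2} \theta_i^{k_3} u_i \to_p \E\left[\theta_i^{k_2 + k_3} u_i\right].
\end{align*}
Concentration around the expectation follows from Chebyshev's inequality, using $|c_i| \leq 1$ together with Cauchy-Schwarz and the moment conditions in \Cref{assum:nonnormal} and \Cref{assum:inference} to control $\E[\bar X_i^{2 k_2} \theta_i^{2 k_3} u_i^2]$.

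For the limit of the expectation, I expand $\bar X_i^{k_2} = \sum_{\ell = 0}^{k_2} \binom{k_2}{\ell} \theta_i^{k_2 - \ell} \bar\epsilon_i^{\ell}$. For $\ell = 0$ the resulting term is $\E[c_i^{k_1}] \cdot \E[\theta_i^{k_2 + k_3} u_i]$, which tends to $\E[\theta_i^{k_2 + k_3} u_i]$ by \Cref{lemma:properties_nonnormal}. For $\ell \geq 1$, the conditional independence $u_i \ind \epsilon_{i,j} \mid \theta_i$ together with $\epsilon_{i,j} \ind \theta_i$ and $J_i \ind \theta_i$ from \Cref{assum:nonnormal} factorizes as
\begin{align*}
  \E\left[c_i^{k_1} \theta_i^{k_2 - \ell + k_3} \bar\epsilon_i^{\ell} u_i\right] = \E\left[\theta_i^{k_2 - \ell + k_3} u_i\right] \cdot \E\left[c_i^{k_1} \bar\epsilon_i^{\ell}\right],
\end{align*}
and the second factor vanishes by Cauchy-Schwarz and the bound $\E[\bar\epsilon_i^{2\ell}] \to 0$, itself implied by $J_i \geq 3$, $\E[J_i^{-1}] = O(n^{-1/2})$, and the noise moment bounds in \Cref{assum:nonnormal}.

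With this building block in hand, both displayed limits reduce to the same mechanical expansion as in \Cref{lemma:theta_4} and \Cref{lemma:theta_3}. Writing
\begin{align*}
  \hat\theta_{i,c} - \bar{\hat\theta}_c = \left(c_i \bar X_i - \frac{1}{n} \sum_{j = 1}^{n} c_j \bar X_j\right) - \bar X \left(c_i - \bar c\right),
\end{align*}
I raise this to the appropriate power ($3$ for the first claim, $2$ for the second) and multiply by $(\theta_i - \bar\theta)^{k}(u_i - \bar u)$ for $k \in \{0,1\}$. Every resulting sample average is either covered by the building block above (after invoking $\bar X, \frac{1}{n}\sum_j c_j \bar X_j, \bar\theta \to_p \E[\theta_i]$, $\bar c \to_p 1$, and $\bar u \to_p 0$), or contains a factor $(c_i - \bar c)^m$ with $m \geq 1$. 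The latter vanishes by the same telescoping argument as in \Cref{lemma:theta_4}: expanding $(c_i - \bar c)^m$ binomially produces coefficients $\binom{m}{t}(-1)^{m-t}$ that sum to zero, while each inner average converges to a limit independent of $t$.

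Collecting surviving terms for the first claim yields $\E[\theta_i^3 u_i] - 3\E[\theta_i]\E[\theta_i^2 u_i] + 3\E[\theta_i]^2 \E[\theta_i u_i] = \E[(\theta_i - \E[\theta_i])^3 u_i]$ (using $\E[u_i] = 0$ to kill the constant term of the binomial expansion), and the same limit for the second claim follows by an analogous expansion with one extra $\theta_i$ factor absorbing $(\theta_i - \bar\theta)$. The main obstacle is not any single step but the sheer volume of cross-terms produced by the cubic/quadratic expansion; fortunately the structure parallels the preceding two lemmas almost verbatim, so the bookkeeping is routine once the building block is in place.
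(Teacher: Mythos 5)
Your proposal is correct and follows essentially the same route as the paper's proof: Chebyshev concentration of $\frac{1}{n}\sum_i c_i^{k_1}\bar X_i^{k_2}\theta_i^{k_3}u_i$, binomial expansion of $\bar X_i^{k_2}$ with the $\ell=0$ term factorizing via independence of $(c_i,\bar\epsilon_i)$ from $(\theta_i,u_i)$ and the $\ell\geq 1$ cross-terms vanishing, followed by the same mechanical assembly as in \Cref{lemma:theta_4} and \Cref{lemma:theta_3}. The only cosmetic difference is that you state one unified building block covering both displayed limits, whereas the paper proves the first display directly ($k_3=0$) and dispatches the second by citing the procedure of \Cref{lemma:theta_3}.
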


\begin{proof}
  For any $s >0$ and integers $k_1 \geq0$, $0 \leq k_2 \leq 3$, by Chebyshev's inequality,
  \begin{align}
    \Pr \left( \left| \frac{1}{n} \sum_{i=1}^{n} c_i^{k_1} \bar X_i^{k_2}u_i  - \E\left[ c_i^{k_1} \bar X_i^{k_2} u_i\right] \right| >s \right) \leq \frac{\E \left[\bar X_i^{2k_2} u_i^2 \right]}{ns^2} \to 0.
  \end{align}
  For $0\leq k_3 \leq 3$,
  \begin{align*}
   &\E \left[ c_i^{k_1}\theta_i^{k_3} u_i\right] = \E \left[c_i^{k_1}\right] \E \left[ \theta_i^{k_3}u_i \right] \to \E \left[ \theta_i^{k_3} u_i \right]. &\text{(By \Cref{lemma:properties_nonnormal} \ref{item:c})}
  \end{align*}
  For $k_4 = 4$, by independence and \eqref{eq:var_nonnormal},
  \begin{align*}
    \left| \E \left[ c_i^{k_1} \theta_i^{k_3} \bar\epsilon_i^{k_4} u_i^2 \right] \right|\leq \E \left[ \left| \theta_i^{k_3} u_i^2 \right| \right] \E \left[ \bar\epsilon_i^{k_4} \right] &= \E \left[\left|\theta_i^{k_3} u_i^2\right| \right]\E \left[ \frac{1}{J_i^3}\E\left[\epsilon_{i,j}^4 \mid J_i,\sigma_i^2 \right] + \frac{3\left(J_i -1\right)}{J_i^3}\sigma_i^2\right]\\
    &\leq \E \left[\left|\theta_i^{k_3} u_i^2\right| \right]\E \left[ \frac{1}{J_i^3}K\sigma_i^4 + \frac{3\left(J_i -1\right)}{J_i^3}\sigma_i^2\right]\\
    &= \E \left[\left|\theta_i^{k_3}u_i^2 \right| \right]\left( \E \left[ \frac{1}{J_i^3}\right]K\E\left[\sigma_i^4\right] + \E\left[\frac{3\left(J_i -1\right)}{J_i^3}\right]\E\left[\sigma_i^2\right]  \right)\to 0.
  \end{align*}
  Then by Jensen's inequality, for $1\leq k_4 \leq 4$,
  \begin{align*}
    \left| \E \left[ c_i^{k_1} \theta_i^{k_3} \bar\epsilon_i^{k_4} u_i^2 \right] \right| \to 0.
  \end{align*}
  Therefore for $1\leq k_2\leq 4$,
  \begin{align*}
    &\quad \E \left[ c_i^{k_1} \bar X_i^{k_2} u_i \right] \\
    &= \sum_{k_4=0}^{k_2}  \binom{k_2}{k_4}\E \left[c_i^{k_1}  \theta_i^{k_2 - k_4} \bar\epsilon_i^{k_4} u_i  \right]\\
    &\to \E \left[ \theta_i^{k_2} u_i \right].
  \end{align*}
  Therefore,
  \begin{align*}
    &\quad \frac{1}{n} \sum_{i=1}^{n} \left(\hat\theta_{i,c} - \bar{\hat\theta}_c\right)^3 \left(u_i - \bar u\right)\\
    &=\frac{1}{n} \sum_{i=1}^{n} \left(\hat\theta_{i,c} - \bar{\hat\theta}_c\right)^3 u_i + o_p(1)\\
    &= \E \left[ \left(\theta_i - \E \left[\theta_i\right]\right)^3 u_i \right] + o_p(1).
  \end{align*}
  The second result is shown by combining the proof procedures of \Cref{lemma:theta_3} and the above.
\end{proof}

\begin{lemma}
  \label{lemma:theta_u}
  Under \Cref{assum:nonnormal} and \Cref{assum:inference},
  \begin{align*}
    \frac{1}{n} \sum_{i=1}^{n} \left(\hat\theta_{i,c} - \bar{\hat\theta}_c\right)^2 \left(u_i - \bar u \right)^2\to_p \E \left[ u_i^2 \left(\theta_i - \E \left[\theta_i\right]\right)^2 \right]
  \end{align*}
\end{lemma}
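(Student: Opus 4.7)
The plan is to mirror the arguments in \Cref{lemma:theta_4}, \Cref{lemma:theta_3}, and \Cref{lemma:theta_2}. First I would reduce the target sum to $n^{-1}\sum_{i=1}^{n}(\hat\theta_{i,c}-\bar{\hat\theta}_c)^2 u_i^2$ by showing that the $-2\bar u\,u_i$ and $\bar u^2$ cross-terms are $o_p(1)$; this follows because $\bar u = O_p(n^{-1/2})$, combined with $n^{-1}\sum(\hat\theta_{i,c}-\bar{\hat\theta}_c)^2 u_i = O_p(1)$ and $n^{-1}\sum(\hat\theta_{i,c}-\bar{\hat\theta}_c)^2 = O_p(1)$, both of which are delivered by the preceding lemmas' proofs.

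Next, I would use the decomposition $\hat\theta_{i,c}-\bar{\hat\theta}_c = (c_i\bar X_i - \overline{c\bar X}) - \bar X(c_i-\bar c)$ and expand the square. This produces sample averages of the form $n^{-1}\sum_{i=1}^{n} c_i^{k_1}\bar X_i^{k_2} u_i^2$ with $k_1 \in \{0,1,2\}$ and $k_2 \in \{0,1,2\}$, multiplied by $\bar X^{k_3}$ and/or $\bar c^{k_4}$ powers. For each such average, Chebyshev's inequality delivers a bound of order $n^{-1}\E[\bar X_i^{2k_2} u_i^4]$, which is finite by \Cref{assum:inference} (providing $\E[\theta_i^8]<\infty$ and $\E[|\theta_i^{k}u_i^{4}|]<\infty$ for the relevant orders) together with standard moments of $\bar\epsilon_i$; hence each average converges in probability to its mean.

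The mean itself is computed by expanding $\bar X_i^{k_2} = (\theta_i + \bar\epsilon_i)^{k_2}$ and exploiting the joint independence of $\epsilon_{i,j}$ from $(\theta_i, u_i)$ established by \Cref{assum:nonnormal}.3--4, together with independence of $J_i$. Cross-terms containing $\bar\epsilon_i^{k_5}$ with $k_5 \geq 1$ factor as $\E[\theta_i^{k_2-k_5} u_i^2]\,\E[c_i^{k_1}\bar\epsilon_i^{k_5}]$; by Cauchy--Schwarz and $\E[\bar\epsilon_i^2] = \E[1/J_i]\,\E[\sigma_i^2] \to 0$, these vanish. The remaining $k_5 = 0$ piece equals $\E[c_i^{k_1}]\,\E[\theta_i^{k_2} u_i^2]$, which converges to $\E[\theta_i^{k_2} u_i^2]$ via \Cref{lemma:properties_nonnormal}(\ref{item:c}). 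Reassembling the three pieces with $\bar X \to_p \E[\theta_i]$ and $\bar c \to_p 1$ yields exactly $\E[(\theta_i - \E[\theta_i])^2 u_i^2]$.

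The main obstacle is bookkeeping in the expansion step, since the interaction of $c_i$, $\bar X_i$, $\bar X$, and $u_i^2$ produces a larger set of terms than in the prior lemmas. However, because $k_2 \leq 2$, only a handful of $\bar\epsilon_i$ cross-terms arise, each handled by the same independence-plus-moment arguments already used in \Cref{lemma:theta_2}. No new techniques beyond those in the preceding lemmas are required, so once the reduction to $n^{-1}\sum c_i^{k_1}\bar X_i^{k_2} u_i^2$ is accomplished, the proof is an essentially mechanical synthesis of earlier arguments.
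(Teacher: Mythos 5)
Your proposal is correct and follows essentially the same route as the paper: reduce to $\tfrac{1}{n}\sum_i(\hat\theta_{i,c}-\bar{\hat\theta}_c)^2u_i^2$, expand via $c_i\bar X_i$ and the centering terms, apply Chebyshev to each sample average $\tfrac{1}{n}\sum_i c_i^{k_1}\bar X_i^{k_2}u_i^2$ using the moment bounds from \Cref{assum:inference}, and compute the limiting expectations by the binomial expansion $\bar X_i^{k_2}=(\theta_i+\bar\epsilon_i)^{k_2}$ together with the independence of $\bar\epsilon_i$ from $(\theta_i,u_i)$ so that the $\bar\epsilon_i$ cross-terms vanish. The only cosmetic difference is that you kill the cross-terms by Cauchy--Schwarz on $\E[c_i^{k_1}\bar\epsilon_i^{k_5}]$ where the paper bounds $|c_i|\le 1$ and invokes Jensen from the $k_4=4$ case; both are valid.
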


\begin{proof}
  For any $s >0$ and integers $k_1 \geq0$, $0 \leq k_2 \leq 3$, by Chebyshev's inequality,
  \begin{align*}
    \Pr \left( \left| \frac{1}{n} \sum_{i=1}^{n} c_i^{k_1} \bar X_i^{k_2}u_i^2 - \E \left[ c_i^{k_1}\bar X_i^{k_2} u_i^2 \right]   \right| >s \right) \leq \frac{\E \left[\bar X_i^{2k_2} u_i^4 \right]}{ns^2} \to 0.
  \end{align*}
  For $0\leq k_3 \leq 3$,
  \begin{align*}
   &\E \left[ c_i^{k_1}\theta_i^{k_3} u_i^2\right] = \E \left[c_i^{k_1}\right] \E \left[ \theta_i^{k_3}u_i^2 \right] \to \E \left[ \theta_i^{k_3} u_i^2 \right]. &&\text{(By \Cref{lemma:properties_nonnormal} \ref{item:c})}
  \end{align*}
  For $k_4 = 4$, by independence and \eqref{eq:var_nonnormal}, 
  \begin{align*}
    \left| \E \left[ c_i^{k_1} \theta_i^{k_3} \bar\epsilon_i^{k_4} u_i^2 \right] \right|\leq \E \left[ \left| \theta_i^{k_3} u_i^2 \right| \right] \E \left[ \bar\epsilon_i^{k_4} \right] 
    &= \E \left[\left|\theta_i^{k_3} u_i^2\right| \right]\E \left[ \frac{1}{J_i^3}\E\left[\epsilon_{i,j}^4 \mid J_i,\sigma_i^2 \right] + \frac{3\left(J_i -1\right)}{J_i^3}\sigma_i^2\right]\\
    &\leq \E \left[\left|\theta_i^{k_3} u_i^2\right| \right]\E \left[ \frac{1}{J_i^3}K\sigma_i^4 + \frac{3\left(J_i -1\right)}{J_i^3}\sigma_i^2\right]\\
    &= \E \left[\left|\theta_i^{k_3}u_i^2 \right| \right]\left( \E \left[ \frac{1}{J_i^3}\right]K\E\left[\sigma_i^4\right] + \E\left[\frac{3\left(J_i -1\right)}{J_i^3}\right]\E\left[\sigma_i^2\right]  \right)\to 0.
  \end{align*}
  Then by Jensen's inequality, for $1\leq k_4 \leq 4$,
  \begin{align*}
    \left| \E \left[ c_i^{k_1} \theta_i^{k_3} \bar\epsilon_i^{k_4} u_i^2 \right] \right| \to 0.
  \end{align*}
  Therefore for $1\leq k_2\leq 4$,
  \begin{align*}
    &\quad \E \left[ c_i^{k_1} \bar X_i^{k_2} u_i^2 \right] \\
    &= \sum_{k_4=0}^{k_2}  \binom{k_2}{k_4}\E \left[c_i^{k_1}  \theta_i^{k_2 - k_4} \bar\epsilon_i^{k_4} u_i^2  \right]\\
    &\to \E \left[ \theta_i^{k_2} u_i^2 \right].
  \end{align*}
  Therefore,
  \begin{align*}
    &\quad \frac{1}{n} \sum_{i=1}^{n} \left(\hat\theta_{i,c} - \bar{\hat\theta}_c\right)^2 \left(u_i - \bar u\right)^2\\
    &=\frac{1}{n} \sum_{i=1}^{n} \left(\hat\theta_{i,c} - \bar{\hat\theta}_c\right)^2 u_i^2 + o_p(1)\\
    &= \E \left[ \left(\theta_i - \E \left[\theta_i\right]\right)^2 u_i^2 \right] + o_p(1).
  \end{align*}

\end{proof}

\section{Lemmas and Proofs for Correlated $J_i$ and $\sigma_i^2$}
\label{sec:appendix-lemmas-correlated_J}

Firstly, for simplicity we abbreviate notations and denote $\hat\beta_{c,\HE}$ as $\hat\beta_c$, $\hat\beta_\HE$ as $\hat\beta$, $\hat\theta_{i,c,\HE}$ as $\hat\theta_{i,c}$, $\hat\theta_{i,\HE}$ as $\hat\theta_i$ and $\Var\left(\theta_i\right)$ as $V$. We keep the subscripts for those related to HO.

We next define the shrinkage weight
\begin{align*}
  c_{i,\text{HO}} \coloneqq \frac{V}{\frac{\E\left[\sigma_i^2\right]}{J_i} + V},
\end{align*}
where $V = \Var\left(\theta_i\right)$.
And without affecting the results, we define
\begin{align*}
  \hat \theta_{i, \text{HO}} \coloneqq \frac{\hat\sigma_\theta^2}{\frac{1}{J_i}\hat\sigma^2 + \hat\sigma_\theta^2}\bar X_i + \frac{\frac{1}{J_i}\hat\sigma^2}{\frac{1}{J_i}\hat\sigma^2 + \hat\sigma_\theta^2}\bar X.
\end{align*}

\begin{assumption}
  \label{assum:corr_J}
  \begin{enumerate}
    \item $J_i$ is independent of $\theta_i$ and $u_i$. $J_i \ind \epsilon_{i,j} \mid \sigma_i^2$ . $J_i \geq 3, a.s$. 
    \item $\E \left[u_i\right] = 0$, $\E(u_i  \theta_i) = 0$. $\E\left[Y_i^4\right] < \infty$.
    \item $\E\left[\epsilon_{i,j} \mid  \sigma^2_i\right] =0$, $\E\left[ \epsilon_{i,j}^2 \mid \sigma_i^2  \right] = \sigma_i^2$, $\E\left[\abs{\epsilon_{i,j}}^L \mid \sigma_i^2\right] \leq K\sigma_i^L, L\geq3$. Also, $\epsilon_{i,j} \ind \theta_i$.
    \item $ u_i \ind \epsilon_{i,j} \big| \theta_i$. 
    \item $\E\left[\theta_i^4\right] < \infty$. 
    \item $\E\left[\sigma^{32}_i\right] < \infty$. 
  \end{enumerate}
\end{assumption}

\begin{assumption}
  \label{assum:corr_J_large_J}
  \begin{align*}
    n^{\frac{3}{2}} \E \left[ \frac{1}{J_i^3} \right] \to \kappa^3
  \end{align*}
\end{assumption}

\begin{remark}
  \label{remark:corr_J}
  Since we assume $J_i \geq 3$, \Cref{assum:corr_J_large_J} also implies that
  \begin{align*}
    &n\E\left[ \frac{1}{J_i^2} \right] \leq  \left(n^{\frac{3}{2}} \E\left[\frac{1}{J_i^3} \right] \right)^{2/3} = O(1). &\text{(Jensen's inequality)}\\
  \end{align*}
\end{remark}

\begin{lemma}
  \label{lemma:properties_J}
  Under \Cref{assum:corr_J}, \Cref{assum:corr_J_large_J}, we have:
  \begin{enumerate}
    \item Properties of $\bar \epsilon_i$, $\bar X_i$, $\hat\sigma_i^2$, $c_i$, $c_{i,\text{HO}}$:
    
    For any integer $k_1 \in \left\{1,2\right\}$, $k_2 \geq 1$,  we have
    \begin{enumerate}
      \item $\bar\epsilon_i = O_p(n^{-1/4})$, and $\E\left[\left|\bar\epsilon_i \right|^{k_1} \right] \to 0$. \label{item:epsilon}
      \item $\bar X_i = \theta_i + O_p(n^{-1/4})$, and $\E \left[ \left| \bar X_i \right|^{k_1} \right] \to \E \left[ \left| \theta_i \right|^{k_1} \right]$.
      \item $\hat\sigma_i^2 = \sigma_i^2 + O_p(n^{-1/4})$, and $\E \left[ \left| \hat\sigma_i^2 - \sigma_i^2 \right|^{k_1} \right] \to 0$.
      \item $\E \left[ \frac{1}{J_i} \hat\sigma_i^2 \right] \to 0$, and $\E \left[ \left( \frac{1}{J_i} \hat\sigma_i^2 \right)^2 \right] \to 0$.
      \item $\E\left[ \left(\frac{1}{J_i} \hat\sigma_i^2 \right)^4\right] = o(n^{-1/2})$.  \label{item:angelova}
      \item $c_i^{k_2} = 1 + O_p(n^{-1/2})$, and $\E \left[ c_i^{k_2} \right] = 1 + O(n^{-1/2})$. \label{item:c}
      \item $c_{i,\HO}^{k_2} = 1 + O_p(n^{-1/2})$, and $\E \left[c_{i,\HO}^{k_2} \right] = 1 + O(n^{-1/2})$.
    \end{enumerate}

    \item Properties of sample moments of $\bar\epsilon_i$, $c_i$, $\theta_i$, $c_{i,\HO}$: \label{item:moment}
    
    For any integer $k_1\geq 0$\footnote{For statements like this, if $k=0$ is on the exponent, it means that the term is $1$.}, $0 \leq k_2 \leq 2$, and $k_3, k_4 \in \left\{0,1 \right\}$, we have 
    \begin{align*}
      &\frac{1}{n}\sum_{i=1}^{n} c_i^{k_1} \left(\theta_i - k_3 \E \left[\theta_i\right]\right)^{k_2} = \E \left[\left(\theta_i - k_3 \E \left[\theta_i\right]\right)^{k_2}\right] +O_p(n^{-1/2}), \\
      &\frac{1}{n}\sum_{i=1}^{n} c_i^{k_1} \left( \theta_i - \E \left[\theta_i\right]\right)^{k_4}\bar\epsilon_i = o_p(n^{-1/2}), \\
      &\frac{1}{n}\sum_{i=1}^{n} c_i^{k_1} \left( \theta_i- k_3 \E \left[\theta_i\right]\right)^{k_2}\bar\epsilon_i^2 = O_p(n^{-1/2}).
    \end{align*}
    And similar results hold for $c_{i,\text{HO}}$.

    \item Properties of sample means of $\theta_i$, $\bar X_i$, $\hat\sigma_i^2$:
    \begin{enumerate}
      \item $\bar\theta = \E \left[\theta_i\right] + O_p(n^{-1/2})$.
      \item $\bar X = \E \left[\theta_i\right] + O_p(n^{-1/2})$.
      \item $\frac{1}{n}\sum_{i=1}^{n}\frac{1}{J_i}\hat\sigma_i^2 = O_p(n^{-1/2})$
    \end{enumerate}

    \item Properties of sample moments of $\epsilon_i$, $c_i$, $u_i$,$c_{i,\text{HO}}$:
    
    For any integer $k\geq 0$, we have
    \begin{enumerate}
      \item $\frac{1}{n} \sum_{i=1}^{n} c_i^{k} u_i = O_p(n^{-1/2})$.
      \item $\frac{1}{n}\sum_{i=1}^{n} c_i^{k} u_i \bar \epsilon_i = o_p(n^{-1/2})$.
    \end{enumerate}
    And similar results hold for $c_{i,\HO}$.
  \end{enumerate}
\end{lemma}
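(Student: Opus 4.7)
The plan is to replay the proof of \Cref{lemma:properties_nonnormal} verbatim with a single systematic modification: every factorization of the form $\E[J_i^{-k}\sigma_i^{2m}]=\E[J_i^{-k}]\E[\sigma_i^{2m}]$ that relied on the independence $J_i\ind\sigma_i^2$ is replaced by the Cauchy--Schwarz bound
\begin{align*}
\E\bigl[J_i^{-k}\sigma_i^{2m}\bigr]\leq\bigl(\E\bigl[J_i^{-2k}\bigr]\bigr)^{1/2}\bigl(\E\bigl[\sigma_i^{4m}\bigr]\bigr)^{1/2}.
\end{align*}
The $J_i$-moment factors are controlled by \Cref{assum:corr_J_large_J} combined with the pointwise inequality $J_i^{-k}\leq 3^{3-k}J_i^{-3}$ for $k\geq 3$ (using $J_i\geq 3$), which yields $\E[J_i^{-k}]=O(n^{-3/2})$ for every $k\geq 3$, together with the weaker consequences $\E[J_i^{-2}]=O(n^{-1})$ and $\E[J_i^{-1}]=O(n^{-1/2})$ noted in \Cref{remark:corr_J} and Jensen's inequality. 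The $\sigma_i$-moment factors remain finite by \Cref{assum:corr_J}.6, which upgrades the moment bound on $\sigma_i$ from $16$ to $32$ precisely so that the doubled exponents produced by Cauchy--Schwarz stay controlled.

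Because $J_i$ is still independent of $\theta_i$ and $u_i$, and $\epsilon_{i,j}\ind\theta_i$, every part of the earlier proof that exploited only those independences transfers unchanged---the $\bar\theta$, $\bar X$, and $n^{-1}\sum_{i=1}^n J_i^{-1}\hat\sigma_i^2$ statements in item 3, the $\theta_i$- and $u_i$-weighted Chebyshev steps in items 2 and 4, and the conditional moment arguments for $\bar\epsilon_i$ given $\sigma_i^2$---once the Cauchy--Schwarz substitution is performed inside each variance bound. The new claims on $c_{i,\HO}^{k_2}$ in items 1(g), 2, and 4 are in fact easier than the corresponding $c_i^{k_2}$ claims, because $c_{i,\HO}$ depends only on $J_i$ (with $\E[\sigma_i^2]$ a deterministic constant), so the mean-value expansion of $1-c_{i,\HO}^{k_2}$ produces only pure $J_i^{-k}$ terms, all bounded directly by \Cref{assum:corr_J_large_J}.

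The hardest step is item 1(v), namely $n^{1/2}\E[(J_i^{-1}\hat\sigma_i^2)^4]\to 0$, where the strengthened assumptions must interact optimally. Using \citet{angelova2012moments}'s explicit formula for $\E[(\hat\sigma_i^2)^4\mid J_i,\sigma_i^2]$ (a polynomial in $J_i^{-1}$ whose coefficients are of order at most $\sigma_i^8$), the quantity $\E[(J_i^{-1}\hat\sigma_i^2)^4]$ reduces to a finite sum of terms $\E[J_i^{-4-\ell}\sigma_i^{2q}]$ with $\ell\geq 0$ and $q\leq 4$. The dominant term satisfies
\begin{align*}
\E\bigl[J_i^{-4}\sigma_i^8\bigr]\leq\bigl(\E[J_i^{-8}]\bigr)^{1/2}\bigl(\E[\sigma_i^{16}]\bigr)^{1/2}\leq\bigl(3^{-5}\E[J_i^{-3}]\bigr)^{1/2}\bigl(\E[\sigma_i^{32}]\bigr)^{1/4},
\end{align*}
which is $O(n^{-3/4})$ by \Cref{assum:corr_J_large_J}, so $n^{1/2}\E[J_i^{-4}\sigma_i^8]=O(n^{-1/4})\to 0$; the subdominant terms (larger $\ell$, smaller $q$) decay faster by the same argument. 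Once item 1(v) is in hand, the $c_i^{k_2}$ expansions in item 1(vi) and the $\E[c_i^{k_1}\bar\epsilon_i]$ controls in item 2 go through exactly as in the original proof, with Cauchy--Schwarz supplying the analogous $O(n^{-1/4})$ bounds on the second-order remainders. This is why \Cref{assum:corr_J_large_J} strengthens \Cref{assum:nonnormal}.8 from a $J_i^{-2}$ to a $J_i^{-3}$ condition, and why the $\sigma_i$-moment requirement doubles from $16$ to $32$.
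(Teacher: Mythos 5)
Your proposal matches the paper's own proof in both strategy and detail: the paper likewise reruns the argument for \Cref{lemma:properties_nonnormal} with every independence-based factorization $\E[J_i^{-k}\sigma_i^{2m}]=\E[J_i^{-k}]\E[\sigma_i^{2m}]$ replaced by a Cauchy--Schwarz bound, controls the resulting $J_i$-moments via $J_i^{-k}\leq J_i^{-3}$ for $k\geq 3$ together with \Cref{assum:corr_J_large_J} and \Cref{remark:corr_J}, absorbs the doubled $\sigma_i$-exponents using $\E[\sigma_i^{32}]<\infty$, and treats the Angelova fourth-moment term and the $c_{i,\HO}$ claims exactly as you describe. No gaps.
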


\begin{proof}[Proof of \Cref{lemma:properties_J}]
  Below we take any $s >0$, 
  \begin{enumerate}
    \item Properties of $\bar \epsilon_i$, $\bar X_i$, $\hat\sigma_i^2$, $c_i$:
    \begin{enumerate}
      \item By Markov's inequality,
      \begin{align*}
        \Pr \left( n^{1/4}\left|\bar \epsilon_i \right| > s  \right) \leq \frac{\sqrt{n} \E \left[\bar\epsilon_i^2\right]}{s^2} = \frac{\sqrt{n}\E \left[\frac{1}{J_i}\sigma_i^2\right]}{s^2} \leq \frac{\sqrt{n \E \left[\frac{1}{J_i^2}\right]\E \left[\sigma_i^4\right]}}{s^2}.
      \end{align*}
      Since $n \E \left[\frac{1}{J_i^2}\right] \to \kappa^2$, we have $\bar\epsilon_i = O_p(n^{-1/4})$.

      Since 
      \begin{align*}
        \E \left[\bar\epsilon_i^2\right] = \E \left[\frac{1}{J_i}\sigma_i^2\right] \leq \sqrt{ \E \left[ \frac{1}{J_i^2} \right] \E \left[\sigma_i^4\right] } \to 0,
      \end{align*}
      combined with the Cauchy-Schwarz inequality, we have the rest of the results.
      \item This follows from \ref{item:epsilon}.
      \item By Markov's inequality and \eqref{eq:var_nonnormal},
      \begin{align*}
        \Pr \left( n^{1/4}\left|\hat\sigma_i^2 - \sigma_i^2 \right| > s   \right) \leq \frac{\sqrt{n}\E\left[\left(\hat\sigma_i^2 - \sigma_i^2\right)^2\right]}{s^2} \leq \frac{2 \sqrt{n}\E \left[\frac{K}{J_i } \sigma_i^4\right]}{s^2} \leq \frac{2K \sqrt{n \E \left[\frac{1}{\left(J_i \right)^2}\right] \E \left[\sigma_i^8\right]}}{s^2}.
      \end{align*}
      Due to \Cref{assum:corr_J_large_J}, we have $\hat\sigma_i^2 = \sigma_i^2 + O_p(n^{-1/4})$.
      
      Also,
      \begin{align*}
        \E \left[ \left(\hat\sigma_i^2 - \sigma_i^2\right)^2 \right] \leq 2 \E\left[\frac{K}{J_i}\sigma_i^4\right] \leq 2K \sqrt{\E\left[\frac{1}{\left(J_i \right)^2}\right]\E\left[\sigma_i^8\right]} \to 0,
      \end{align*}
      combined with the Cauchy-Schwarz inequality, we have the rest of the results.
      \item We have
      \begin{align*}
        0 \leq \E \left[ \frac{1}{J_i} \hat\sigma_i^2 \right] = \E \left[ \frac{1}{J_i}  \sigma_i^2 \right] \leq \sqrt{ \E\left[\frac{1}{J_i^2}\right] \E\left[\sigma_i^4\right]}\to0,
      \end{align*}
      \begin{align*}
        \E \left[ \left( \frac{1}{J_i} \hat\sigma_i^2 \right)^2 \right] = \E \left[ \frac{1}{J_i^2} \left( \frac{K\sigma_i^4}{J_i } + \sigma_i^4  \right)  \right] \to 0.
      \end{align*}
      \item From Theorem 2 of \citet{angelova2012moments}, the fourth moment of $\hat\sigma_i^2$ is
      \begin{align*}
        \E\left[ \left(\hat\sigma_i^2\right)^4 \mid J_i,\sigma_i^2\right] &= \mu_2^4 
        + \frac{6\mu_2^2(\mu_4 - \mu_2^2)}{J_i}
        + \frac{4\mu_6\mu_2 + 3\mu_4^2 - 18\mu_4\mu_2^2 - 24\mu_3^2\mu_2 + 23\mu_2^4}{J_i^2}\\
        &\quad + \frac{\mu_8 - 4\mu_6\mu_2 - 24\mu_5\mu_3 - 3\mu_4^2 + 72\mu_4\mu_2^2 + 96\mu_3^2\mu_2 - 86\mu_2^4}{J_i^3}\\
        &\quad + \frac{4(6\mu_6\mu_2 + 6\mu_4^2 - 39\mu_4\mu_2^2 - 40\mu_3^2\mu_2 + 45\mu_2^4)}{J_i^3(J_i-1)}\\
        &\quad + \frac{4(36\mu_4\mu_2^2 - 8\mu_5\mu_3 + 52\mu_3^2\mu_2 - 61\mu_2^4)}{J_i^3(J_i-1)^2}\\
        &\quad + \frac{8(\mu_4^2 - 6\mu_4\mu_2^2 - 12\mu_3^2\mu_2 + 15\mu_2^4)}{J_i^3(J_i-1)^3},
      \end{align*}
      where $\mu_k, k=1,...,8$ are the $k$-th raw moments of $\epsilon_{i,j} \mid \sigma_i^2$. By \Cref{assum:corr_J}.3, we have $\mu_k \leq K \sigma_i^k$ for some constant $K$. By the law of iterated expectations and \Cref{assum:corr_J_large_J}, for the first term we have
      \begin{align*}
        \sqrt{n} \E\left[\frac{1}{J_i^4}\sigma_i^8 \right] \leq \sqrt{n \E\left[\frac{1}{J_i^8} \right] \E\left[\sigma_i^{16}\right]} \leq \sqrt{n \E\left[\frac{1}{J_i^3}\right] \E\left[ \sigma_i^{16}\right]}  \to 0.
      \end{align*}
      For the rest of the terms, we have the same result. Therefore,
      \begin{align*}
        \sqrt{n} \E\left[ \left(\frac{1}{J_i}\hat\sigma_i^2\right)^4 \right] = o(n^{-1/2}).
      \end{align*}
      \item By the mean value theorem,
      \begin{align*}
        c_i^{k_2} &= \left( \frac{V}{V + \frac{1}{J_i} \hat\sigma_i^2}\right)^{k_2}\\
        &= 1-k_2 \frac{1}{VJ_i} \hat\sigma_i^2 + \frac{k_2 \left(k_2 +1\right)}{2} \frac{1}{\left(1 + \omega_i\right)^{k_2 + 2}} \frac{1}{V^2 J_i^2} \left(\hat\sigma_i^2\right)^2,
      \end{align*}
      where $\omega_i$ is between $0$ and $\frac{1}{VJ_i} \hat\sigma_i^2$. Therefore,
      \begin{align*}
        \abs{\sqrt{n}\E\left[ 1- c_i^{k_2} \right]} &\leq \frac{k_2}{V}\sqrt{n}\E\left[\frac{\sigma^2_i}{J_i} \right] \\
        &\quad + \frac{k_2\left(k_2 +1\right)}{2V^2} \sqrt{n} \E\left[ \frac{1}{J_i^2} \left( \sigma_i^4 + \frac{K}{J_i}\sigma_i^4 \right) \right] &\text{(By \eqref{eq:var_nonnormal})}\\
        &\leq \frac{k_2}{V} \sqrt{\E\left[\sigma_i^4\right] n\E\left[\frac{1}{J_i^2}\right]} \\
        &\quad + \frac{k_2\left(k_2 +1\right)}{2V^2} \sqrt{\E\left[\sigma_i^8\right] n \E\left[\frac{1}{J_i^4}\right]} \\
        &\quad + \frac{k_2\left(k_2 +1\right)K}{2V^2}\sqrt{\E\left[\sigma_i^8\right] n \E\left[\frac{1}{J_i^6}\right]} & \text{(Cauchy-Schwarz)}\\
        &= O(1) + o(1) + o(1) &\text{(By \Cref{assum:corr_J_large_J})}
      \end{align*}
       Then we have the second result. By Markov's inequality, we have the first result.
       \item The proof is similar to \ref{item:c}. By the mean value theorem,
        \begin{align*}
          c_{i,\HO}^{k_2} &= \left( \frac{V}{V + \frac{\E\left[\sigma_i^2\right]}{J_i}}\right)^{k_2}\\
          &= 1-k_2 \frac{\E\left[\sigma_i^2\right]}{VJ_i} + \frac{k_2 \left(k_2 +1\right)}{2} \frac{1}{\left(1 + \omega_i\right)^{k_2 + 2}} \frac{\E\left[\sigma_i^2\right]^2}{V^2 J_i^2}.
        \end{align*}
        It is easy to see that the rest of the proof is similar to \ref{item:c}.
    \end{enumerate}
    \item 
      By Chebyshev's inequality ($k_2$ can be replaced by $k_4$),
\begin{align*}
  &\quad \Pr\left( \sqrt{n}\left| \frac{1}{n} \sum_{i=1}^{n} c_i^{k_1} \left(\theta_i - k_3 \E\left[\theta_i\right]\right)^{k_2} - \E \left[ c_i^{k_1} \left(\theta_i - k_3 \E\left[\theta_i\right]\right)^{k_2}  \right] \right| > s \right)\\
   &\leq \frac{\E \left[ c_i^{2k_1} \left(\theta_i - k_3 \E\left[\theta_i\right]\right)^{2k_2} \right]}{s^2} \leq \frac{\E \left[\left(\theta_i - k_3 \E\left[\theta_i\right]\right)^{2k_2}\right]}{s^2} 
\end{align*}
\begin{align*}
  &\quad \Pr \left( \sqrt{n}\left| \frac{1}{n} \sum_{i=1}^{n} c_i^{k_1} \left(\theta_i - k_3 \E\left[\theta_i\right]\right)^{k_2} \bar\epsilon_i - \E \left[ c_i^{k_1} \left(\theta_i - k_3 \E\left[\theta_i\right]\right)^{k_2} \bar\epsilon_i \right] \right| > s \right)\\
  & \leq \frac{\E \left[ c_i^{2k_1} \left(\theta_i - k_3 \E\left[\theta_i\right]\right)^{2k_2} \bar\epsilon_i^{2}  \right]}{s^2} \leq \frac{\E \left[ \left(\theta_i - k_3 \E\left[\theta_i\right]\right)^{2k_2} \right] \sqrt{\E\left[ \frac{1}{J_i^2} \right] \E \left[\sigma_i^4\right]}}{s^2} \to 0,
\end{align*}
\begin{align*}
  &\quad \Pr \left( \sqrt{n}\left| \frac{1}{n} \sum_{i=1}^{n} c_i^{k_1} \left(\theta_i - k_3 \E\left[\theta_i\right]\right)^{k_2} \bar\epsilon_i^2 - \E \left[ c_i^{k_1} \left(\theta_i - k_3 \E\left[\theta_i\right]\right)^{k_2} \bar\epsilon_i^2 \right] \right| > s \right) \\
  &\leq \frac{\E \left[ c_i^{2k_1} \left(\theta_i - k_3 \E\left[\theta_i\right]\right)^{2k_2} \bar\epsilon_i^{4}  \right]}{s^2}\\
   &\leq \frac{\E \left[ \left(\theta_i - k_3 \E\left[\theta_i\right]\right)^{2k_2} \right]\left(  \sqrt{\E\left[ \frac{1}{J_i^6} \right] \E \left[K^2\sigma_i^8\right]} + \sqrt{\E\left[ \frac{\left(J_i -1\right)^2}{J_i^6} \right] \E\left[\sigma_i^8\right]}\right) }{s^2} \to 0.
\end{align*}
Meanwhile,
\begin{align*}
  &\quad\sqrt{n}\E \left[ c_i^{k_1}\left(\theta_i - k_3 \E\left[\theta_i\right]\right)^{k_2} \right] = \sqrt{n}\E \left[ c_i^{k_1} \right] \E \left[ \left(\theta_i - k_3 \E\left[\theta_i\right]\right)^{k_2} \right] \\
  &=\sqrt{n} \E \left[ \left(\theta_i - k_3 \E\left[\theta_i\right]\right)^{k_2} \right] + O(1), &\text{(By \ref{item:c})}
\end{align*}

by the mean value theorem ($k_4 = 0$) and \eqref{eq:var_nonnormal},
\begin{align*}
  \abs{\sqrt{n}\E\left[c_i^{k_1} \bar\epsilon_i\right]} &\leq 0 + \frac{k_1}{V}\sqrt{n}\E\left[ \frac{1}{J_i}\hat\sigma_i^2 \bar\epsilon_i \right] \\
  &\quad + \frac{k_1\left(k_1 +1\right)}{2V^2} \sqrt{n} \E\left[ \frac{1}{J_i^2} \left(\hat\sigma_i^2\right)^2 \bar \epsilon_i \right] \\
  &\leq \frac{k_1}{V} \sqrt{ \sqrt{n} \E\left[\frac{1}{J_i^2} \left( \sigma_i^4 + \frac{K}{J_i}\sigma_i^4 \right) \right] \sqrt{n} \E\left[ \frac{1}{J_i}\sigma_i^2 \right] }\\
  &\quad + \frac{k_1\left(k_1 +1\right)}{2V^2} \sqrt{ \sqrt{n}\E\left[ \frac{1}{J_i^4}\left(\hat\sigma_i^2 \right)^4 \right] \sqrt{n}\E\left[ \frac{1}{J_i}\sigma_i^2 \right]  } &\text{(Cauchy-Schwarz)}\\
  &\leq  \frac{k_1}{V} \sqrt{ \sqrt{n \E\left[\frac{1}{J_i^4}\right] \E\left[\sigma_i^8 \right] + n \E\left[\frac{1}{J_i^6} \right] K^2 \E\left[ \sigma_i^8\right]} \sqrt{n \E\left[ \frac{1}{J_i^2}\right] \E\left[\sigma_i^4\right]} } \\
  &\quad + \frac{k_1\left(k_1 +1\right)}{2V^2} \sqrt{\sqrt{n}\E\left[ \frac{1}{J_i^4}\left(\hat\sigma_i^2 \right)^4 \right] \sqrt{n\E\left[\frac{1}{J_i^2}\right] \E\left[\sigma_i^4\right]} } \\
  &= o(1) \cdot O(1) + o(1) \cdot O(1)
\end{align*}
where the last step follows from \Cref{assum:corr_J_large_J} and \ref{item:angelova},

by the independence in \Cref{assum:nonnormal}.3 ($k_4 = 1$),
\begin{align*}
  &\E \left[ c_i^{k_1} \left(\theta_i - \E\left[\theta_i\right]\right) \bar\epsilon_i \right]  = \E \left[\E\left( c_i^{k_1}\left(\theta_i -\E\left[\theta_i\right]\right)\bar{\epsilon_i}  \bigg| \epsilon \right) \right] = 0,
\end{align*}

and also by \Cref{assum:nonnormal}.3,
\begin{align*}
  \left|\sqrt{n} \E \left[ c_i^{k_1} \left(\theta_i - k_3 \E\left[\theta_i\right]\right)^{k_2}\bar\epsilon_i^2 \right] \right| \leq \E \left[ \left| \left(\theta_i - k_3 \E\left[\theta_i\right]\right)^{k_2}  \right| \right] \sqrt{n \E \left[ \frac{1}{J_i^2} \right] \E \left[ \sigma_i^4\right]},
\end{align*}
where $n \E \left[\frac{1}{J_i^2}\right]=O(1)$. Therefore we have the results. For $c_{i,\text{HO}}$, the proof is similar.
   \item  \begin{enumerate}
    \item Follows from the finite fourth moment and the inequality below for any $s>0$, 
    \begin{align*}
      &\Pr \left( \left| \sqrt{n} \left( \bar\theta - \E\left[ \theta_i\right] \right) \right| > s  \right) \leq \frac{\E\left[\theta_i^2\right]}{s^2}.  &\text{(Chebyshev's inequality)}
    \end{align*}
    \item Follows from the property of $\bar\theta$ and the second property of \ref{item:moment}:
    \begin{align*}
      \bar X = \bar\theta + \bar\epsilon = \E\left[\theta_i\right] + O_p(n^{-1/2}) + o_p(n^{-1/2}) = \E\left[\theta_i\right] + O_p(n^{-1/2}).
    \end{align*}
    \item By Chebyshev's inequality and \eqref{eq:var_nonnormal},
    \begin{align*}
      \Pr \left(\sqrt{n}\left| \frac{1}{n} \sum_{i=1}^{n} \frac{1}{J_i} \hat\sigma_i^2 - \E \left[ \frac{1}{J_i} \hat\sigma_i^2\right]  \right| > s \right) \leq \frac{\E \left[\frac{1}{J_i^2} \left(  \frac{K\sigma_i^4}{J_i}+ \sigma_i^4 \right)\right]}{s^2} \to 0,
    \end{align*}
    and also
    \begin{align*}
      \sqrt{n}\E \left[\frac{1}{J_i} \hat\sigma_i^2 \right] \leq \sqrt{n\E \left[ \frac{1}{J_i^2} \right] \E \left[\sigma_i^4\right]} = O(1).
    \end{align*}
    Thus
    \begin{align*}
      \frac{1}{n}\sum_{i=1}^{n} \frac{1}{J_i} \hat \sigma_i^2 = O_p(n^{-1/2}).
    \end{align*}
   \end{enumerate}
   \item \begin{enumerate}
    \item Since $u_i \ind \epsilon_{i,j} \mid \theta_i$, by Chebyshev's inequality,
    \begin{align*}
      \Pr \left( \sqrt{n} \left| \frac{1}{n} \sum_{i=1}^{n} c_i^k u_i \right|>s \right) \leq \frac{\E \left[ c_i^{2k} u_i^2 \right]}{s^2} \leq \frac{\sigma^2_u}{s^2}
    \end{align*}
    Thus 
    \begin{align*}
      \frac{1}{n} \sum_{i=1}^{n} c_i^k u_i = O_p(n^{-1/2}).
    \end{align*}
    \item By Chebyshev's inequality,
    \begin{align*}
      \Pr \left( \sqrt{n} \left| \frac{1}{n} \sum_{i=1}^{n} c_i^k u_i \bar\epsilon_i \right|>s \right) \leq \frac{\E \left[ c_i^{2k} u_i^2 \bar\epsilon_i^2\right]}{s^2} \leq \frac{\sigma_u^2}{s^2} \sqrt{\E \left[\frac{1}{J_i^4 }\right] \E \left[\sigma_i^8\right]} \to 0.
    \end{align*}
    Thus 
    \begin{align*}
      \frac{1}{n} \sum_{i=1}^{n} c_i^k u_i\bar\epsilon_i = o_p(n^{-1/2}).
    \end{align*}
   \end{enumerate}
   For $c_{i,\text{HO}}$, the proof is similar.
  \end{enumerate}
\end{proof}

\begin{lemma}
  \label{lemma:sigma_corr_J}
  Under \Cref{assum:corr_J}, \Cref{assum:corr_J_large_J}, we have
  \begin{align*}
    \frac{1}{n} \sum_{i=1}^{n} c_i^2 \bar\epsilon_i^2 = \frac{1}{n} \sum_{i=1}^{n} c_i^2 \frac{1}{J_i}\hat\sigma_i^2 + o_p(n^{-1/2}).
  \end{align*}
  For $c_{i,\HO}$, the result is similar.
\end{lemma}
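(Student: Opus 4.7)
The plan is to adapt the proof of \Cref{lemma:sigma_nonnormal} to the correlated regime. The algebraic backbone is the same identity
\[
  \bar\epsilon_i^2 - \frac{1}{J_i}\hat\sigma_i^2 = \frac{2}{J_i(J_i-1)}\sum_{k<j}\epsilon_{i,k}\epsilon_{i,j},
\]
which conditional on $(J_i,\sigma_i^2)$ has mean zero and variance $\frac{2\sigma_i^4}{J_i(J_i-1)}$. This step uses only \Cref{assum:corr_J}.3 and carries over verbatim. Combined with a second-order mean-value expansion
\[
  c_i^2 = 1 - \frac{2}{V}\frac{\hat\sigma_i^2}{J_i} + R_i, \qquad |R_i|\le \frac{3}{V^2}\Big(\frac{\hat\sigma_i^2}{J_i}\Big)^2,
\]
the problem reduces to (i) showing $\sqrt{n}\,\E[c_i^2(\bar\epsilon_i^2 - \hat\sigma_i^2/J_i)] \to 0$ and (ii) controlling the fluctuation of the sample mean around this expectation.

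For step (i), the constant-1 piece of $c_i^2$ contributes exactly zero by the identity above. The linear and quadratic pieces are handled by Cauchy--Schwarz:
\[
  \sqrt{n}\,\bigl|\E[\tfrac{\hat\sigma_i^2}{J_i}(\bar\epsilon_i^2 - \tfrac{\hat\sigma_i^2}{J_i})]\bigr| \le \sqrt{\sqrt{n}\,\E[(\hat\sigma_i^2/J_i)^2]\cdot \sqrt{n}\,\E[(\bar\epsilon_i^2 - \hat\sigma_i^2/J_i)^2]},
\]
and analogously for the $R_i$ term using $\sqrt{n}\,\E[(\hat\sigma_i^2/J_i)^4]$. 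The factors $\sqrt{n}\,\E[(\hat\sigma_i^2/J_i)^2]$ and $\sqrt{n}\,\E[(\hat\sigma_i^2/J_i)^4]$ are $O(1)$ and $o(1)$ respectively, the latter by \Cref{lemma:properties_J}.\ref{item:angelova}; the variance factor $\sqrt{n}\,\E[(\bar\epsilon_i^2 - \hat\sigma_i^2/J_i)^2] \le 3\sqrt{n}\,\E[\sigma_i^4/J_i^2]$ is $o(1)$ by Cauchy--Schwarz together with $n\E[1/J_i^4] \le \tfrac{n}{3}\E[1/J_i^3] = o(1)$, which follows from $J_i\ge 3$ and \Cref{assum:corr_J_large_J}.

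For step (ii), Chebyshev's inequality gives
\[
  \Pr\Bigl(\sqrt{n}\,\bigl|\tfrac{1}{n}\sum_{i=1}^n c_i^2(\bar\epsilon_i^2 - \tfrac{\hat\sigma_i^2}{J_i}) - \E[\,\cdot\,]\bigr| > s\Bigr) \le \frac{\E[(\bar\epsilon_i^2 - \hat\sigma_i^2/J_i)^2]}{s^2} \le \frac{2\E[\sigma_i^4/(J_i(J_i-1))]}{s^2},
\]
which tends to zero by Cauchy--Schwarz on $\E[\sigma_i^4]\E[1/J_i^2]$ and the moment conditions \Cref{assum:corr_J}.6 and \Cref{assum:corr_J_large_J}. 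Combining (i) and (ii) yields the claim at the $o_p(n^{-1/2})$ rate.

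The main obstacle is the first step: in \Cref{lemma:sigma_nonnormal} the independence of $J_i$ and $\sigma_i^2$ let the cross moments factor as $\E[1/J_i^k]\E[\sigma_i^{2m}]$, so that the rate assumptions $\sqrt{n}\E[1/J_i]\to\kappa$ and $\sqrt{n}\E[1/J_i^2]\to 0$ sufficed. Here the correlation forces every such cross moment to be bounded by Cauchy--Schwarz into a product of a $J_i$-moment and a $\sigma_i$-moment of double the order, which is precisely why \Cref{assum:corr_J_large_J} is strengthened to $n^{3/2}\E[1/J_i^3]\to\kappa^3$ (supplying bounds on $\E[1/J_i^k]$ for $k\le 6$ via $J_i\ge 3$) and why $\E[\sigma_i^{32}]<\infty$ is imposed in \Cref{assum:corr_J}.6 (so that the eighth-order $\sigma_i$-moments arising in \Cref{lemma:properties_J}.\ref{item:angelova} are finite after Cauchy--Schwarz). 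For $c_{i,\HO}$, the same Taylor expansion applies with $\hat\sigma_i^2$ in the weight replaced by the deterministic $\E[\sigma_i^2]$, so the quadratic remainder term no longer involves any estimation noise and the argument simplifies.
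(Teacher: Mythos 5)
Your proof is correct and follows essentially the same route as the paper's: the same identity for $\bar\epsilon_i^2 - \tfrac{1}{J_i}\hat\sigma_i^2$ with its conditional mean and variance, the same second-order mean-value expansion of $c_i^2$, the same Cauchy--Schwarz splitting of each cross moment into a $J_i$-moment and a $\sigma_i$-moment controlled by \Cref{assum:corr_J_large_J} and \Cref{lemma:properties_J}.\ref{item:angelova}, and the same Chebyshev step for the sample-mean fluctuation. The closing remarks on why the assumptions are strengthened relative to \Cref{lemma:sigma_nonnormal} accurately reflect the paper's reasoning.
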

\begin{proof}[Proof of \Cref{lemma:sigma_corr_J}]
  First, notice that the difference
  \begin{align*}
    \bar \epsilon_i^2  - \frac{1}{J_i} \hat\sigma_i^2&= \bar\epsilon_i^2 - \frac{1}{J_i\left(J_i-1\right)} \sum_{j=1}^{J_i} \epsilon_{i,j}^2 + \frac{1}{J_i-1}\bar\epsilon_i^2 \\
    &= \frac{1}{J_i\left(J_i -1\right)} \left(\sum_{j=1}^{J_i}\epsilon_{i,j} \right)^2 - \frac{1}{J_i\left(J_i -1\right)} \sum_{j=1}^{J_i} \epsilon_{i,j}^2 \\
    &= \frac{2}{J_i \left(J_i -1\right)} \sum_{k\leq j} \epsilon_{i,k}\epsilon_{i,j}.
  \end{align*}
  Then the properties of the difference are
  \begin{align*}
    &\E \left[ \bar\epsilon_i^2 - \frac{1}{J_i} \hat\sigma_i^2 \right] = 0,\\
    &\E \left[ \left(\bar\epsilon_i^2 - \frac{1}{J_i} \hat\sigma_i^2\right)^2 \mid J_i, \sigma_i^2\right] = \frac{2\sigma_i^4}{J_i\left(J_i -1\right)}.
  \end{align*}
  For $c_i^2$, by the mean value theorem,
  \begin{align*}
    c_i^2 &= \left( \frac{V}{V + \frac{1}{J_i}\hat\sigma_i^2}\right)^2\\
    &= 1 - 2\frac{1}{VJ_i} \hat\sigma_i^2 + 3\frac{1}{\left(1 + \omega_i\right)^4} \frac{1}{V^2J_i^2}\left(\hat\sigma_i^2\right)^2,
  \end{align*}
  where $\omega_i$ is between $0$ and $\frac{1}{VJ_i}\hat\sigma_i^2$. Therefore,

  \begin{align*}
    &\quad \abs{\sqrt{n}\E\left[ c_i^2 \left( \bar\epsilon_i^2 - \frac{1}{J_i}\hat\sigma_i^2\right) \right]}\\
    &\leq \abs{\sqrt{n} \E\left[  \left( \bar\epsilon_i^2 - \frac{1}{J_i}\hat\sigma_i^2\right) \right]} + \frac{2}{V} \abs{\sqrt{n} \E \left[ \frac{1}{J_i} \hat\sigma_i^2 \left( \bar\epsilon_i^2 - \frac{1}{J_i}\hat\sigma_i^2\right) \right]} \\
    &\quad + \frac{3}{V^2}\abs{\sqrt{n} \E \left[ \frac{1}{\left(1 + \omega_i\right)^4} \frac{1}{J_i^2}\left(\hat\sigma_i^2\right)^2 \left( \bar\epsilon_i^2 - \frac{1}{J_i}\hat\sigma_i^2\right) \right]} \\
    &\leq 0 + \frac{2}{V} \sqrt{n \E\left[\frac{1}{J_i^2} \left(\hat\sigma_i^2 \right)^2 \right] \E\left[\left(\bar\epsilon_i^2 - \frac{1}{J_i}\hat\sigma_i^2  \right)^2  \right] }\\
    &\quad + \frac{3}{V^2} \sqrt{n \E\left[\frac{1}{J_i^4} \left(\hat\sigma_i^2\right)^4 \right] \E\left[\left(\bar\epsilon_i^2 - \frac{1}{J_i}\hat\sigma_i^2  \right)^2  \right] } &\text{(Cauchy-Schwarz)}\\
    &\leq \frac{2}{V} \sqrt{\sqrt{n} \E\left[\frac{1}{J_i^2}\left( \frac{K}{J_i} \sigma_i^4 + \sigma_i^4  \right)   \right]\sqrt{n} \E\left[ \frac{2\sigma_i^4}{J_i \left(J_i -1\right)}  \right] }\\
    &\quad + \frac{3}{V^2} \sqrt{\sqrt{n} \E\left[\frac{1}{J_i^4}\left(  \hat\sigma_i^2\right)^4   \right]\sqrt{n} \E\left[ \frac{2\sigma_i^4}{J_i \left(J_i -1\right)}  \right] }.\\
    &\leq \frac{2}{V} \sqrt{\sqrt{n \E\left[\frac{1}{J_i^6}  \right] \E\left[ K^2\sigma_i^8\right] + n\E\left[\frac{1}{J_i^4} \right]\E\left[\sigma_i^8\right]} \sqrt{n \E\left[\frac{1}{J_i^2\left(J_i -1\right)^2}\right] \E\left[4\sigma_i^8\right]} }\\
    &\quad + \frac{3}{V^2} \sqrt{\sqrt{n \E\left[\frac{1}{J_i^4}\left(  \hat\sigma_i^2\right)^4   \right]} \sqrt{n \E\left[\frac{1}{J_i^2\left(J_i -1\right)^2}\right] \E\left[4\sigma_i^8\right]} }. &\text{(Cauchy-Schwarz)}\\
  \end{align*}
  Since we have \Cref{assum:corr_J_large_J} and that
  \begin{align*}
    \sqrt{n} \E \left[\frac{1}{J_i^4} \left(\hat\sigma_i^2\right)^4 \right] \to 0,
  \end{align*}
  from \Cref{lemma:properties_J}.\ref{item:angelova}, the above converges to $0$.
  Thus, the expectation
  \begin{align*}
    \sqrt{n} \E\left[ c_i^2 \left( \bar\epsilon_i^2 - \frac{1}{J_i}\hat\sigma_i^2\right) \right] \to 0.
  \end{align*}
  Next, we show the convergence to expectations by Chebyshev's inequality:
  \begin{align*}
    &\quad \Pr \left( \sqrt{n}\left| \frac{1}{n} \sum_{i=1}^{n} c_i^2 \left( \bar\epsilon_i^2 - \frac{1}{J_i}\hat\sigma_i^2\right) - \E \left[ c_i^2 \left( \bar\epsilon_i^2 - \frac{1}{J_i}\hat\sigma_i^2\right)  \right] \right| > s  \right)\\
    &\leq \frac{\E \left[ c_i^4 \left( \bar\epsilon_i^2 - \frac{1}{J_i}\hat\sigma_i^2\right)^2  \right]}{s^2} \leq \frac{\E \left[  \left( \bar\epsilon_i^2 - \frac{1}{J_i}\hat\sigma_i^2\right)^2  \right]}{s^2} = \frac{\E\left[ \frac{2\sigma_i^4}{J_i \left(J_i -1\right)}  \right]}{s^2}  \to 0.
  \end{align*}
  And the proof for $c_i$ is complete.

  For $c_{i,\HO}$, the proof is similar.
\end{proof}

\begin{lemma}
  \label{lemma:V_corr_J}
  Under \Cref{assum:corr_J}, \Cref{assum:corr_J_large_J}, we have
  \begin{align*}
    \hat V = V + O_p(n^{-1/2}).
  \end{align*}
\end{lemma}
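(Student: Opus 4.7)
The plan is to mirror the proof of \Cref{lemma:V_nonnormal} line by line, simply substituting the correlated-$J_i$ analogue \Cref{lemma:properties_J} for \Cref{lemma:properties_nonnormal} at each invocation. Since all the structural algebra of $\hat V$ is unchanged by the dependence between $J_i$ and $\sigma_i^2$, the work already done to promote \Cref{lemma:properties_nonnormal} into \Cref{lemma:properties_J} (which is where the stronger moment condition $\E[\sigma_i^{32}]<\infty$ and the condition $n^{3/2}\E[J_i^{-3}]\to \kappa^3$ are used) should make the result immediate.

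First, I would decompose $\bar X_i - \bar X = (\theta_i - \bar\theta) + (\bar\epsilon_i - \bar\epsilon)$ and expand the square to obtain
\begin{align*}
  \hat V
  &= \frac{1}{n}\sum_{i=1}^n (\theta_i - \E[\theta_i])^2 + \frac{1}{n}\sum_{i=1}^n \bar\epsilon_i^2 - (\bar\theta - \E[\theta_i])^2 - \bar\epsilon^2 \\
  &\quad + \frac{2}{n}\sum_{i=1}^n (\theta_i - \E[\theta_i])\bar\epsilon_i - 2(\bar\theta - \E[\theta_i])\bar\epsilon - \frac{n-1}{n^2}\sum_{i=1}^n \frac{1}{J_i}\hat\sigma_i^2.
\end{align*}

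Second, I would bound each piece using \Cref{lemma:properties_J}: the first sum equals $V + O_p(n^{-1/2})$ by Chebyshev's inequality and $\E[\theta_i^4]<\infty$; the second is $O_p(n^{-1/2})$ by item 2 (with $k_2=0$, $k_1=0$); $(\bar\theta - \E[\theta_i])^2 = O_p(n^{-1})$ and $\bar\epsilon^2 = o_p(n^{-1})$ by item 3(a) and item 2; the cross-sum $\frac{1}{n}\sum(\theta_i - \E[\theta_i])\bar\epsilon_i = o_p(n^{-1/2})$ by item 2 (with $k_4=1$); the product $(\bar\theta - \E[\theta_i])\bar\epsilon = O_p(n^{-1/2}) \cdot o_p(n^{-1/2}) = o_p(n^{-1/2})$; and the last sum is $O_p(n^{-1/2})$ by item 3(c). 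Summing these orders of magnitude yields $\hat V = V + O_p(n^{-1/2})$.

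The only point that requires any care is ensuring each invoked bound in \Cref{lemma:properties_J} is actually proved under \Cref{assum:corr_J} and \Cref{assum:corr_J_large_J}, rather than under \Cref{assum:nonnormal}. But this is exactly what \Cref{lemma:properties_J} already packages: the key quantities $\sqrt{n}\E[J_i^{-1}\sigma_i^2]$, $n\E[J_i^{-2}\sigma_i^4]$, and $\sqrt{n}\E[J_i^{-4}(\hat\sigma_i^2)^4]$ are all controlled by Cauchy--Schwarz combined with $\E[\sigma_i^{16}]<\infty$ (or $\E[\sigma_i^{32}]<\infty$ for the last) and \Cref{assum:corr_J_large_J} (together with \Cref{remark:corr_J}), so the bounds carry over without modification. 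Consequently no genuinely new computation is needed; the proof reduces to collecting orders of magnitude exactly as in the proof of \Cref{lemma:V_nonnormal}.
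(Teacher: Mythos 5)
Your proposal is correct and follows essentially the same route as the paper: the paper's own proof expands $\left(\bar X_i - \bar X\right)^2$ in exactly this way and cites \Cref{lemma:properties_J} to assign the same orders of magnitude to each term, concluding $\hat V = V + O_p(n^{-1/2})$. No gaps.
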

\begin{proof}[Proof of \Cref{lemma:V_corr_J}]
  By \Cref{lemma:properties_J},
  \begin{align}
    \hat{V} &= \frac{1}{n} \sum_{i=1}^n \left( \bar{X}_i- \bar{X} \right)^2 - \frac{n-1}{n^2} \sum_{i=1}^{n} \frac{1}{J_i} \hat \sigma^2_i \nonumber\\
    &= \frac{1}{n} \sum_{i=1}^{n} \left(\theta_i - \E\left[\theta_i\right]\right)^2 + \frac{1}{n} \sum_{i=1}^{n} \bar \epsilon_i^2 - \left( \bar\theta - \E\left[\theta_i\right] \right)^2 - \bar\epsilon^2 + \frac{2}{n} \sum_{i=1}^{n} \left( \theta_i - \E \left[\theta_i\right]\right)\bar\epsilon_i  \nonumber\\
    &\quad - 2 \left(\bar\theta-\E \left[\theta_i\right]\right)\bar\epsilon  - \frac{n-1}{n^2} \sum_{i=1}^{n} \frac{1}{J_i} \hat\sigma_i^2 \nonumber\\
    &= V + O_p(n^{-1/2}) + O_p(n^{-1/2}) - O_p(n^{-1}) - o_p(n^{-1}) + o_p(n^{-1/2}) \nonumber \\
    &\quad - o_p(n^{-1}) - O_p(n^{-1/2}) \nonumber\\
    &= V + O_p(n^{-1/2}). \nonumber
  \end{align}
\end{proof}

\begin{lemma}
  \label{lemma:V_corr_J_new_V}
  Under \Cref{assum:corr_J}, \Cref{assum:corr_J_large_J}, we have
  \begin{align*}
    \hat V_{\text{HO}} = V + O_p(n^{-1/2}).
  \end{align*}
\end{lemma}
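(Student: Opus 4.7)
The natural interpretation is that $\hat V_{\text{HO}}$ is the HO analogue of $\hat V$ from \Cref{sec:HE}, namely the split-sample covariance $\hat\sigma_\theta^2 \coloneqq \widehat{\Cov}\bigl(\bar X_{i,t}, \bar X_{i,t-1}\bigr)$ introduced in the HO construction. The plan is to mirror the argument used for \Cref{lemma:V_corr_J}. I would decompose $\bar X_{i,k} = \theta_i + \bar\epsilon_{i,k}$ for $k \in \{t, t-1\}$, and exploit the facts that (a) the two subset noise averages are mean zero given $(\theta_i, \sigma_i^2, J_i)$, (b) they are conditionally independent of each other because the two subsets are disjoint, and (c) each has conditional variance of order $\sigma_i^2/J_i$. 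The key point is that (b) does not require $J_i \ind \sigma_i^2$, so the orthogonality survives the correlated regime of \Cref{assum:corr_J}.

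The main step would be expanding
\begin{align*}
\hat\sigma_\theta^2 = \frac{1}{n}\sum_{i=1}^{n}\bigl(\bar X_{i,t} - \bar X_{\cdot,t}\bigr)\bigl(\bar X_{i,t-1} - \bar X_{\cdot,t-1}\bigr)
\end{align*}
into four groups of terms: (i) the pure-signal part $n^{-1}\sum_i (\theta_i - \bar\theta)^2$, which equals $V + O_p(n^{-1/2})$ by a standard CLT using $\E[\theta_i^4]<\infty$ from \Cref{assum:corr_J}.5; (ii) signal-noise cross products $n^{-1}\sum_i (\theta_i - \bar\theta)(\bar\epsilon_{i,k} - \bar\epsilon_{\cdot,k})$, which are mean zero by $\epsilon_{i,j} \ind \theta_i$ and which I would control at $O_p(n^{-1/2})$ by Chebyshev with variance bound $\E[\theta_i^2]\E[\sigma_i^2/J_i]$, itself $O(1)$ by Cauchy-Schwarz together with $n\E[J_i^{-2}]=O(1)$ (which follows from \Cref{assum:corr_J_large_J} via \Cref{remark:corr_J}); (iii) the noise cross term $n^{-1}\sum_i \bar\epsilon_{i,t}\bar\epsilon_{i,t-1}$, which is mean zero by (b); and (iv) lower-order remainders coming from $\bar\theta - \E[\theta_i]$ and $\bar\epsilon_{\cdot,k}$, all of order $O_p(n^{-1})$ or smaller by \Cref{lemma:properties_J}.

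The hard part will be term (iii): without independence of $J_i$ and $\sigma_i^2$, the variance bound $\E[\bar\epsilon_{i,t}^2 \bar\epsilon_{i,t-1}^2] \lesssim \E[\sigma_i^4/J_i^2]$ cannot be factored, and must instead be handled by Cauchy-Schwarz to obtain $\sqrt{\E[\sigma_i^8]\,\E[J_i^{-4}]}$. This is where the strengthened moment condition \Cref{assum:corr_J}.6 and the rate in \Cref{assum:corr_J_large_J} do the work, since $\E[J_i^{-4}] \le \E[J_i^{-3}] = o(n^{-3/2})$ ensures that term (iii) is $o_p(n^{-1/2})$, well within the desired rate. The same Chebyshev plus Cauchy-Schwarz template handles the analogous centered quantities for the sample means $\bar\epsilon_{\cdot,k}$. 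Combining the four groups yields $\hat V_{\text{HO}} = V + O_p(n^{-1/2})$, exactly paralleling \Cref{lemma:V_corr_J} but with the conditional independence of the two subset averages replacing the role played there by $\frac{n-1}{n^2}\sum_i \frac{1}{J_i}\hat\sigma_i^2$ in canceling the noise variance.
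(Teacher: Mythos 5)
Your proposal is correct and takes essentially the same route as the paper: the paper's own proof is a one-line appeal to ``conventional law of large numbers and central limit theorem arguments'' applied to exactly the split-sample covariance $\frac{1}{n}\sum_{i}(\bar X_{1,i}-\bar X_1)(\bar X_{2,i}-\bar X_2)$ that you identify, and your four-term decomposition (with the disjoint-subset conditional independence killing the noise cross term without needing $J_i \ind \sigma_i^2$) is precisely the detail the paper leaves implicit. The only slip is cosmetic: \Cref{assum:corr_J_large_J} gives $\E[J_i^{-3}] = O(n^{-3/2})$ rather than $o(n^{-3/2})$, which still yields term (iii) $= o_p(n^{-1/2})$ as you claim.
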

\begin{proof}[Proof of \Cref{lemma:V_corr_J_new_V}]
  By conventional law of large number and central limit theorem arguments,
  \begin{align*}
    \hat V_{\text{HO}} &= \frac{1}{n}\sum_{i=1}^{n} \left( \bar X_{1,i}- \bar X_1\right) \left( \bar X_{2,i}- \bar X_2\right)\\
    &= V + O_p(n^{-1/2})
  \end{align*}
\end{proof}

\begin{proposition}
  \label{prop:beta_corr_J}
  Suppose the asymptotic framework satisfies \Cref{assum:corr_J_large_J}. Then under \Cref{assum:corr_J} we have
\begin{align*}
  \sqrt{n} \left(\hat{\beta} - \beta\right) \to_d N \left(0, \frac{\E\left[ u_i^2 \left(\theta_i - \E \left[\theta_i\right]\right)^2 \right]}{V^2}\right).
\end{align*}
\end{proposition}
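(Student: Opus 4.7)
The plan is to mirror the two-step argument used to prove Theorem~\ref{prop:beta}, but to substitute the lemmas established in \Cref{sec:appendix-lemmas-correlated_J} for their independence-based analogs in \Cref{sec:appendix-lemmas-nonnormal}. Specifically, I will (i) first establish the analog of \Cref{lemma:beta_c} for the semi-oracle estimator $\hat\beta_{c,\HE}$ under \Cref{assum:corr_J} and \Cref{assum:corr_J_large_J}, and then (ii) show first-order equivalence $\sqrt{n}(\hat\beta_\HE - \hat\beta_{c,\HE}) = o_p(1)$. The key observation is that Assumption~\ref{assum:corr_J} strengthens the moment condition from $\E[\sigma_i^{16}]<\infty$ in a way that, together with \Cref{assum:corr_J_large_J}, is engineered precisely so that \Cref{lemma:properties_J}, \Cref{lemma:sigma_corr_J} and \Cref{lemma:V_corr_J} give the same $O_p$ and $o_p$ rates as their independence-based counterparts.

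For step (i), I would decompose $\sqrt{n}(\hat\beta_{c,\HE}-\beta) = (\beta\sqrt{n}T_{1,n} - \beta\sqrt{n}T_{2,n} + \sqrt{n}T_{3,n})/T_{2,n}$ exactly as in the proof of \Cref{lemma:beta_c}. Each sample-moment term in $T_{1,n}$, $T_{2,n}$, $T_{3,n}$ can be expanded into pieces of the form $\frac{1}{n}\sum c_i^{k_1}(\theta_i-\E[\theta_i])^{k_2}\bar\epsilon_i^{k_3}u_i^{k_4}$, and by \Cref{lemma:properties_J} each such piece admits the same asymptotic order as in the independent-$J_i$ case. Crucially, \Cref{lemma:sigma_corr_J} allows me to replace $\frac{1}{n}\sum c_i^2\bar\epsilon_i^2$ by $\frac{1}{n}\sum c_i^2 J_i^{-1}\hat\sigma_i^2$ up to $o_p(n^{-1/2})$, which is what collapses the bias term: the unbiasedness $\E[\xi_i]=0$ of the leading term in the numerator then holds by construction of $c_i$, with $c_i(V + J_i^{-1}\hat\sigma_i^2)=V$ a.s. The Lindeberg--Feller CLT on a triangular array, applied exactly as in the proof of \Cref{lemma:beta_c}, yields the asymptotic normal distribution with variance $\E[u_i^2(\theta_i-\E[\theta_i])^2]/V^2$.

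For step (ii), I would repeat the mean value expansion argument from the proof of \Cref{prop:beta}. Writing the OLS numerator and denominator as functions $f_n(t)$ and $g_n(t)$ of a prior-variance argument $t$, evaluated at $t=\hat V$ vs.\ $t=V$, I expand $f_n(\hat V)-f_n(V) = (\hat V - V) f_n'(\tilde V)$ at an intermediate $\tilde V$. By \Cref{lemma:V_corr_J}, $\hat V - V = O_p(n^{-1/2})$, so it suffices to show $f_n'(\tilde V) = O_p(1)$ and $g_n'(\tilde V) = O_p(1)$ on the event $|\tilde V - V| \le \delta$ for some $\delta<V$. This reduces to controlling sums of the form $\frac{1}{n}\sum (J_i^{-1}\hat\sigma_i^2)^{k_2} |\bar X_i|^{k_3} |Y_i|^{k_4}$, and Markov/Chebyshev plus the finite moment conditions in \Cref{assum:corr_J} (identical in form to those in \Cref{assum:nonnormal}.5--6, just with the strengthened $\E[\sigma_i^{32}]<\infty$) deliver these bounds by the same splitting argument as in the main proof.

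The main obstacle is ensuring that every instance in which the original proof invokes $\sqrt{n}\E[J_i^{-1}] = O(1)$ or $\sqrt{n}\E[J_i^{-2}]\to 0$ together with independence $J_i\ind\sigma_i^2$ to factor an expectation is handled correctly when $J_i$ and $\sigma_i^2$ are dependent. The resolution is to replace every factorization step with a Cauchy--Schwarz bound of the form $\E[J_i^{-a}\sigma_i^{2b}] \le \sqrt{\E[J_i^{-2a}]\E[\sigma_i^{4b}]}$; these bounds remain sharp enough because \Cref{assum:corr_J_large_J} implies $n\E[J_i^{-2}]=O(1)$ (and hence $n^{k/2}\E[J_i^{-k}]=O(1)$ for $k\le 3$ by Jensen, as noted in \Cref{remark:corr_J}), while the higher moments of $\sigma_i^2$ are absorbed by $\E[\sigma_i^{32}]<\infty$. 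Once this substitution is carried through in the proofs of \Cref{lemma:properties_J}, \Cref{lemma:sigma_corr_J}, and \Cref{lemma:V_corr_J}, the remainder of the argument for \Cref{prop:beta_corr_J} is formally identical to that of \Cref{prop:beta}.
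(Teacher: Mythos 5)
Your proposal is correct and follows essentially the same route as the paper: the paper's own proof of \Cref{prop:beta_corr_J} simply states that the claim follows from the arguments of \Cref{prop:beta} once the independence-based lemmas are replaced by \Cref{lemma:properties_J}, \Cref{lemma:sigma_corr_J}, and \Cref{lemma:V_corr_J}, which is exactly the substitution you carry out. Your observation that every factorization $\E[J_i^{-a}\sigma_i^{2b}]=\E[J_i^{-a}]\E[\sigma_i^{2b}]$ must be replaced by a Cauchy--Schwarz bound, absorbed by \Cref{assum:corr_J_large_J} and $\E[\sigma_i^{32}]<\infty$, is precisely how the paper's correlated-$J_i$ lemmas are proved.
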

\begin{proof}[Proof of \Cref{prop:beta_corr_J}]
  Based on a similar set of lemmas to \Cref{sec:appendix-lemmas-nonnormal}, the claim follows from the same arguments as in \Cref{prop:beta}.
\end{proof}

\begin{proposition}[\Cref{prop:HO_in_text}]
  \label{prop:beta_HO_corr_J}
  Suppose the asymptotic framework satisfies \Cref{assum:corr_J_large_J}. Then under \Cref{assum:corr_J}, there exist cases where
  \begin{align*}
    \sqrt{n} \left( \hat\beta_{\text{HO}} - \beta \right) \to_d N\left(\frac{\beta}{\Var\left(\theta_i\right)},\frac{\E\left[u_i^2 \left(\theta_i - \E\left[\theta_i\right]\right)^2\right]}{\left(\Var\left(\theta_i\right)\right)^2}\right).
  \end{align*}
\end{proposition}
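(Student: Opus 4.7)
The plan is to mirror the template used in the proofs of \Cref{lemma:beta_c} and \Cref{prop:beta}, but to carefully track the single term that fails to vanish when $J_i$ and $\sigma_i^2$ are correlated. First, I would show that the nuisance estimators consistently recover their population analogues at $\sqrt{n}$ rate: $\hat\sigma_\theta^2 \to V$ by \Cref{lemma:V_corr_J_new_V}, and $\hat\sigma^2 \to \E[\sigma_i^2]$ by a standard Chebyshev argument using \Cref{assum:corr_J}.6. A mean-value reduction identical to the one in the proof of \Cref{prop:beta} then replaces $\hat\beta_\HO$ by the semi-oracle estimator $\hat\beta_{c,\HO}$ obtained by regressing $Y_i$ on $c_{i,\HO}\bar X_i + (1-c_{i,\HO})\bar X$, up to an $o_p(1)$ error on the $\sqrt n$ scale.

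Next I would decompose $\sqrt n(\hat\beta_{c,\HO} - \beta)$ as $(\beta\sqrt n T_{1,n} - \beta\sqrt n T_{2,n} + \sqrt n T_{3,n})/T_{2,n}$, exactly as in the proof of \Cref{lemma:beta_c}. Using the $c_{i,\HO}$-analogues of \Cref{lemma:properties_J} and \Cref{lemma:sigma_corr_J}, the denominator still converges in probability to $V$, and the numerator reduces to $\frac{1}{\sqrt n}\sum_i \xi_i + o_p(1)$ with
\begin{align*}
\xi_i = \beta c_{i,\HO}(\theta_i - \E[\theta_i])^2 - \beta c_{i,\HO}^2\Bigl[(\theta_i - \E[\theta_i])^2 + \tfrac{1}{J_i}\hat\sigma_i^2\Bigr] + c_{i,\HO}(\theta_i - \E[\theta_i])u_i.
\end{align*}
The critical calculation is the expectation $\E[\xi_i]$. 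The defining identity $c_{i,\HO}\cdot\E[\sigma_i^2]/J_i = V(1-c_{i,\HO})$ gives $c_{i,\HO}^2/J_i = V c_{i,\HO}(1-c_{i,\HO})/\E[\sigma_i^2]$, and conditioning on $(J_i,\sigma_i^2)$ with the independence in \Cref{assum:corr_J}.1--2 reduces $\E[\xi_i]$ to
\begin{align*}
\E[\xi_i] = -\,\beta \cdot \frac{V}{\E[\sigma_i^2]}\,\Cov\bigl(c_{i,\HO}(1-c_{i,\HO}),\,\sigma_i^2\bigr).
\end{align*}
When $J_i \ind \sigma_i^2$ this covariance vanishes, recovering the companion independence result in \Cref{prop:HO_in_text}; when they are dependent it is generically nonzero.

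Finally, I would expand $c_{i,\HO}(1-c_{i,\HO}) = \E[\sigma_i^2]/(VJ_i) + O(1/J_i^2)$ and invoke \Cref{assum:corr_J_large_J} together with its consequence $n\E[1/J_i^2] = O(1)$ (\Cref{remark:corr_J}) to obtain
\begin{align*}
\sqrt n\,\E[\xi_i] = -\beta\,\sqrt n\,\Cov(\sigma_i^2,\,1/J_i) + o(1).
\end{align*}
A Lindeberg--Feller CLT applied to $\xi_i - \E[\xi_i]$, where the strengthened moment bound \Cref{assum:corr_J}.6 dominates the extra $J_i^{-k}\sigma_i^{2k}$ contributions, combined with Slutsky's theorem, delivers $\sqrt n(\hat\beta_\HO - \beta) \to_d N(\gamma\beta,\,\E[u_i^2(\theta_i - \E[\theta_i])^2]/V^2)$ with $\gamma = -\lim \sqrt n\,\Cov(\sigma_i^2, 1/J_i)/V$. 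To exhibit a DGP with $\gamma = 1/V > 0$, as in the appendix statement, I would construct a joint law for $(J_i,\sigma_i^2)$ on a two-point support along the sequence $n$ that positively correlates $J_i$ with $\sigma_i^2$ (the ``more measurements when noisier'' scenario motivated in \Cref{sec:HO}) and calibrates the correlation so that $\sqrt n\,\Cov(\sigma_i^2, 1/J_i) \to -V$. The main technical obstacle is the bookkeeping of the $O(1/J_i^2)$ remainder terms in the expansion of $c_{i,\HO}$ and $c_{i,\HO}^2$: these must be shown to be $o(n^{-1/2})$ in expectation so that the amplification covariance is isolated cleanly at the $\sqrt n$ scale, and this is precisely where \Cref{assum:corr_J_large_J} does work beyond \Cref{assum:nonnormal}.7--8.
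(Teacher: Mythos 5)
Your proposal is correct and follows essentially the same route as the paper's proof of \Cref{prop:beta_HO_corr_J}: a semi-oracle reduction to $\hat\beta_{\HO,c}$, the same $T_{1,n},T_{2,n},T_{3,n}$ decomposition, isolation of the one non-vanishing term, Lindeberg--Feller plus Slutsky, and a two-point DGP for $(J_i,\sigma_i^2)$. Your covariance identity $\E[\xi_i]=-\beta\tfrac{V}{\E[\sigma_i^2]}\Cov\bigl(c_{i,\HO}(1-c_{i,\HO}),\sigma_i^2\bigr)$ is exactly the expectation of the paper's leftover bias term $-\beta\,\E\bigl[c_{i,\HO}^2\tfrac{\sqrt n}{J_i}(\hat\sigma_i^2-\E[\sigma_i^2])\bigr]$ (via $c_{i,\HO}^2/J_i = V c_{i,\HO}(1-c_{i,\HO})/\E[\sigma_i^2]$), so the two presentations are algebraically equivalent, with yours giving a slightly more general closed form for $\gamma$ where the paper only evaluates one calibrated example. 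One small arithmetic slip: with $\gamma=-\lim\sqrt n\,\Cov(\sigma_i^2,1/J_i)/V$, matching the stated mean $\beta/\Var(\theta_i)$ requires calibrating $\sqrt n\,\Cov(\sigma_i^2,1/J_i)\to -1$, not $\to -V$; this does not affect the existence claim since any $\gamma>0$ is attainable.
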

\begin{proof}[Proof of \Cref{prop:beta_HO_corr_J}]
  For the shrinkage weight
  \begin{align*}
    c_{i,\text{HO}} = \frac{V}{\frac{1}{J_i}\E \left[\sigma_i^2\right] + V},
  \end{align*}
  we first show properties of the regression coefficients $\hat\beta_{\text{HO},c}$, and then show that $\sqrt{n}\left(\hat\beta_{\text{HO}}-\beta\right) \to_p \sqrt{n} \left(\hat\beta_{\text{HO},c}-\beta\right)$.

  For $\hat\beta_{\text{HO},c}$, based on a similar set of lemmas to \Cref{sec:appendix-lemmas-nonnormal}, from the same derivation as the proof of \Cref{lemma:beta_c}, we have
  \begin{align*}
    \sqrt{n}\left(\hat\beta_{\text{HO},c} - \beta\right) &=  \frac{\beta\sqrt{n}T_{1,n} - \beta\sqrt{n}T_{2,n} + \sqrt{n}T_{3,n}}{T_{2,n}},
  \end{align*}
  where the denominator
  \begin{align*}
    T_{2,n} = V + O_p(n^{-1/2}),
  \end{align*} 
  and the numerator
  \begin{align*}
    &\quad\beta\sqrt{n}T_{1,n} - \beta\sqrt{n}T_{2,n} + \sqrt{n}T_{3,n}\\
    &= \frac{1}{\sqrt{n}}\sum_{i=1}^{n} \left[ \beta c_{i,\text{HO}} \left(\theta_i - \E\left[\theta_i\right]\right)^2 - \beta c_{i,\text{HO}}^2 \left[\left( \theta_i - \E\left[\theta_i\right] \right)^2 + \frac{1}{J_i}\hat\sigma_i^2  \right] + c_{i,\text{HO}} \left(\theta_i - \E\left[\theta_i\right]\right)u_i \right] + o_p(1)\\
    &= \frac{1}{\sqrt{n}}\sum_{i=1}^{n} \left[ \beta c_{i,\text{HO}} \left(\theta_i - \E\left[\theta_i\right]\right)^2 - \beta c_{i,\text{HO}}^2 \left[\left( \theta_i - \E\left[\theta_i\right] \right)^2 + \frac{1}{J_i}\E\left[\sigma_i^2\right]  \right] + c_{i,\text{HO}} \left(\theta_i - \E\left[\theta_i\right]\right)u_i \right] \\
    &\quad - \beta \frac{1}{n} \sum_{i=1}^{n} c_{i,\text{HO}}^2 \frac{\sqrt{n}}{J_i}\left( \hat\sigma_i^2 - \E \left[\sigma_i^2\right]  \right) + o_p(1)\\
    &= \frac{1}{\sqrt{n}}\sum_{i=1}^{n}\xi_i - \beta \frac{1}{n} \sum_{i=1}^{n} c_{i,\text{HO}}^2 \frac{\sqrt{n}}{J_i}\left( \hat\sigma_i^2 - \E \left[\sigma_i^2\right]  \right) + o_p(1).
  \end{align*}
  For $\xi_i$, similar to \Cref{lemma:beta_c}, we have $\E \left[\xi_i\right]=0$, $\E \left[\xi_i^2\right] \to \E\left[u_i^2 \left(\theta_i - \E\left[\theta_i\right]\right)^2\right]$. However, the second term of the numerator might not disappear.

  Next, following from the same arguments as \Cref{prop:beta}, we have $\sqrt{n}\left(\hat\beta_{\text{HO}}-\beta\right) \to_p \sqrt{n} \left(\hat\beta_{\text{HO},c}-\beta\right)$.

  Now, focusing on the bias term, by Chebyshev's inequality and \eqref{eq:var_nonnormal}, for any $s >0$,
  \begin{align*}
    &\quad \Pr\left( \left| \frac{1}{n}\sum_{i=1}^{n} c_{i,\text{HO}}^2 \frac{\sqrt{n}}{J_i} \left( \hat\sigma_i^2 - \E\left[\sigma_i^2\right] \right) - \E\left[ c_{i,\text{HO}}^2 \frac{\sqrt{n}}{J_i} \left( \hat\sigma_i^2 - \E\left[\sigma_i^2\right] \right) \right] \right| >s    \right)\\
    &\leq \frac{\E\left[ \frac{1}{J_i^2} \left( \hat\sigma_i^2 - \E\left[\sigma_i^2\right] \right)^2 \right]}{s^2}\\
    &\leq \frac{ \E\left[ \frac{1}{J_i^2}\left( \frac{K\sigma_i^4}{J_i} + \left(\sigma_i^2 - \E\left[\sigma_i^2\right]\right)^2   \right) \right] }{s^2} \to 0.
  \end{align*}
  Suppose with equal probabilities, $\sigma_i^2 = 12 \gamma V, J_i = \lfloor 2 \sqrt{n} \rfloor$ or $\sigma_i^2 =8 \gamma V, J_i = \lfloor \frac{2}{3} \sqrt{n} \rfloor$, $\gamma>0$. Then we have
  \begin{align*}
    n\E\left[ \frac{1}{J_i^2}  \right] \sim \frac{1}{2}n * \frac{1}{\left(2\sqrt{n}\right)^2} + \frac{1}{2}n * \frac{1}{\left(\frac{2}{3}\sqrt{n}\right)^2} \to \frac{5}{4},
  \end{align*}
  \begin{align*}
    &\quad \E\left[ c_{i,\text{HO}}^2 \frac{\sqrt{n}}{J_i} \left( \hat\sigma_i^2 - \E\left[\sigma_i^2\right] \right) \right]\\
    &= \E\left[ c_{i,\text{HO}}^2 \frac{\sqrt{n}}{J_i} \left( \sigma_i^2 - \E\left[\sigma_i^2\right] \right)    \right]\\
    &\sim \frac{2}{2} \left( \frac{V}{\frac{10}{2\sqrt{n}} + V} \frac{\sqrt{n}}{2\sqrt{n}} -  \frac{V}{\frac{10}{\frac{2}{3}\sqrt{n}} + V} \frac{\sqrt{n}}{\frac{2}{3}\sqrt{n}} \right) \gamma V\\
    &\to -\gamma V.
  \end{align*}
  Then by the continuous mapping theorem, in this case we have
  \begin{align*}
    \sqrt{n} \left(\hat\beta_{\text{HO}} - \beta\right) \to_d N\left( \gamma \beta , \frac{\E\left[u_i^2 \left(\theta_i - \E\left[\theta_i\right]\right)^2\right]}{V^2} \right),
  \end{align*}
  which proves the first result of the theorem.

  To prove the second result of the theorem, it suffices to show that if $\sigma_i^2$ and $J_i$ are independent, then
  \begin{align*}
    \frac{1}{n} \sum_{i=1}^{n} c_{i,\text{HO}}^2 \frac{\sqrt{n}}{J_i}\left( \hat\sigma_i^2 - \E \left[\sigma_i^2\right]  \right) = o_p(1).
  \end{align*}
  Since $c_{i,\HO}$ only depends on $J_i$, by the independence, the expectation
  \begin{align*}
    \E\left[ c_{i,\text{HO}}^2 \frac{\sqrt{n}}{J_i}\left( \hat\sigma_i^2 - \E \left[\sigma_i^2\right]  \right) \right] =0.
  \end{align*}
  Then, by Chebyshev's inequality, 
  \begin{align*}
    &\quad \Pr\left( \left| \frac{1}{n}\sum_{i=1}^{n} c_{i,\text{HO}}^2 \frac{\sqrt{n}}{J_i} \left( \hat\sigma_i^2 - \E\left[\sigma_i^2\right] \right) - \E\left[ c_{i,\text{HO}}^2 \frac{\sqrt{n}}{J_i} \left( \hat\sigma_i^2 - \E\left[\sigma_i^2\right] \right) \right] \right| >s    \right)\\
    &\leq \frac{\E\left[ c_{i,\text{HO}}^4 \frac{1}{J_i^2} \left( \hat\sigma_i^2 - \E\left[\sigma_i^2\right] \right)^2 \right]}{s^2}\\
    &\leq \frac{\E\left[  \frac{1}{J_i^2}\left( \frac{K\sigma_i^4}{J_i} + \left(\sigma_i^2 - \E\left[\sigma_i^2\right]\right)^2   \right) \right] }{s^2} \to 0,
  \end{align*}
  which proves the second result of the theorem.
\end{proof}

\end{document}